\pdfoutput=1
\documentclass[11pt]{article}

\usepackage[margin=1in]{geometry}

\usepackage{amssymb,amsmath,amsthm}
\usepackage{thmtools}
\usepackage[colorlinks=true,linkcolor=blue,citecolor=magenta,hyperfootnotes=false]{hyperref}

\usepackage{url}            %
\usepackage{booktabs}       %
\usepackage{amsfonts}       %
\usepackage{nicefrac}       %
\usepackage{dsfont}
\usepackage[dvipsnames]{xcolor}         %
\usepackage{tikz}

\usepackage{multirow}
\usepackage{mathtools}
\usepackage[capitalise,nameinlink]{cleveref}
\usepackage{natbib}
\usepackage{array}
\usepackage{makecell}
\usepackage{caption}
\usepackage{enumitem}

\usepackage{mathrsfs}

\newcommand{\authnote}[3]{\textcolor{#3}{[{\footnotesize {\bf #1:} { {#2}}}]}}
\renewcommand{\authnote}[3]{}

\newcommand{\orrpfn}[1]{\footnote{}}

\newcommand{\shafi}[1]{}

\newtheorem{theorem}{Theorem}
\newtheorem*{theorem*}{Theorem}
\newtheorem{proposition}[theorem]{Proposition}
\newtheorem*{proposition*}{Proposition}
\newtheorem{lemma}[theorem]{Lemma}
\newtheorem{corollary}[theorem]{Corollary}
\newtheorem{claim}[theorem]{Claim}
\newtheorem{fact}[theorem]{Fact}

\theoremstyle{definition}
\newtheorem{definition}[theorem]{Definition}
\newtheorem{remark}[theorem]{Remark}

\usepackage[ruled,linesnumbered,noend]{algorithm2e}
\usepackage{setspace}
\usepackage[noend]{algpseudocode}
\SetKwInput{KwInput}{Input}
\SetKwInput{KwOracle}{Oracle}
\SetKwInput{KwOutput}{Output}
\SetKwInput{KwHyperparameters}{Hyperparameters}
\SetKwInput{KwParameters}{Parameters}
\algrenewcommand{\algorithmiccomment}[1]{\hfill \textcolor{orange}{\texttt{\# #1}}}
\DontPrintSemicolon
\newcommand{\Prover}[1]{{\color{Red}{\texttt{Prover}: #1}}}
\newcommand{\Verifier}[1]{{\color{OliveGreen}\texttt{Verifier}: #1}}
\newcommand{\qeq}{\overset{?}{=}}
\crefname{algocf}{Algorithm}{Algorithms}

\newcommand{\U}{\mathcal{U}}
\newcommand{\N}{\mathbb{N}}
\newcommand{\Q}{\mathbb{Q}}
\newcommand{\Z}{\mathbb{Z}}
\newcommand{\R}{\mathbb{R}}

\newcommand{\defeq}{\coloneqq}

\DeclareMathOperator*{\E}{\mathbb{E}}

\DeclarePairedDelimiter{\norm}{\Vert}{\Vert}

\renewcommand{\epsilon}{\varepsilon}
\newcommand{\eps}{\varepsilon}
\newcommand{\poly}{\mathsf{poly}}
\newcommand{\polylog}{\mathsf{polylog}}
\newcommand{\F}{\mathbb{F}}
\newcommand{\bool}{\{0,1\}}

\newcommand{\epsgap}{\varepsilon_{\rm gap}}

\newcommand\rej{\texttt{reject}}
\newcommand\acc{\texttt{accept}}

\DeclareMathAlphabet{\mathdutchcal}{U}{dutchcal}{m}{n}
\newcommand{\PP}{\mathcal{P}}
\DeclareMathOperator{\D}{\mathdutchcal{I}\mskip-2mu\mathit{nc}}

\DeclareMathOperator{\Viol}{\mathrm{Viol}}

\newcommand{\EXPCONSIST}{\mathsf{Explicit\text{-}Consistency}}
\newcommand{\MODELCONSIST}{\mathsf{Model\text{-}Consistency}}

\newcommand{\EkSAT}{\mathsf{Exact}\text{-}k\mathsf{SAT}}

\newcommand{\NP}{\mathsf{NP}}
\newcommand{\NEXP}{\mathsf{NEXP}}
\newcommand{\MIP}{\mathsf{MIP}}

\newcommand{\IPCP}{\mathsf{IPCP}}
\newcommand{\PSPACE}{\mathsf{PSPACE}}
\newcommand{\EXP}{\mathsf{EXP}}

\newcommand\ie{\emph{i.e.}}
\newcommand\eg{\emph{e.g.}}
\newcommand\cf{\emph{cf.}}

\newcommand{\machinechecked}[1]{\authnote{Machine-checked}{#1}{OliveGreen}}
\renewcommand{\machinechecked}[1]{}

{\begin{enumerate}[label=\textbf{Case~\arabic*.}, leftmargin=*, itemsep=0.5\baselineskip, topsep=0.3\baselineskip]}%
		{\end{enumerate}}

\newcommand{\Qcal}{\mathcal{Q}}
\newcommand{\match}{\mathsf{agree}}
\newcommand{\exptwo}{\mathsf{vec2int}}
\newcommand{\weight}{\mathsf{weight}}
\gdef\Incof(#1,#2;#3){\D_{(#1,#2)}(#3)}
\newcommand{\epssound}{\eps_{\mathrm{sound}}}

\newcommand{\Incdag}{\D^{\dagger}}
\newcommand{\Cmass}{\mu^{\dagger}[s = b]}
\newcommand{\Qmass}{\mu^{\dagger}[(s = b) \cap (t = 1)]}
\newcommand{\Cmasshat}{\mu^{\dagger}[\hat s = \hat b]}
\newcommand{\Qmasshat}{\mu^{\dagger}[(\hat s = \hat b) \cap (\hat t = 1)]}
\newcommand{\vInc}{v_{\D}}
\newcommand{\vContext}{v_{\Cmasshat}}
\newcommand{\vQuery}{v_{\Qmasshat}}
\newcommand{\hvInc}{\hat{v}_{\D}}
\newcommand{\hvNormQ}{\hat{v}_{m}}

\newcommand{\fnarg}[1]{\textnormal{\texttt{#1}}}
\newcommand{\rounds}{\fnarg{rounds}}
\newcommand{\val}{\fnarg{val}}
\newcommand{\degree}{\fnarg{deg}}

\newcommand{\hypersf}[2]{\hyperref[#1]{\ensuremath{\mathsf{#2}}}\xspace}
\newcommand{\sfname}[1]{\ensuremath{\mathsf{#1}}\xspace} %

\usepackage{upgreek}
\newcommand{\SumCheck}{\sfname{SumCheck}}
\newcommand{\SumChecks}{\ensuremath{\mathsf{SumCheck}}s\xspace}
\newcommand{\Code}{\sfname{R\upmu}}
\newcommand{\VerifyEncoding}{\sfname{VerEnc}}
\newcommand{\VerifyMarginoid}{\sfname{VerMarginoid}}
\newcommand{\VerifyMarginal}{\mathsf{VerMarginal}}
\newcommand{\VerifyMLE}{\sfname{VerifyMLE}}
\newcommand{\MLE}{\mathrm{MLE}}
\newcommand{\SelfCorrect}{\sfname{SelfCorrect}}
\newcommand{\VerifyConsistency}{\sfname{VerConsist}}

\crefname{claim}{Claim}{Claims}
\Crefname{claim}{Claim}{Claims}
\crefname{fact}{Fact}{Facts}
\Crefname{fact}{Fact}{Facts}

\newcommand{\EQ}{\ensuremath{\mathsf{eq}}}
\newcommand{\eq}{\EQ}
\newcommand{\dist}{\ensuremath{\mathrm{dist}}}

\usetikzlibrary{positioning, fit}

\tikzset{
	srr/.style={draw=gray, rounded corners=2pt, inner sep=4pt, minimum width=0.9cm, minimum height=0.6cm, font=\small},
	oracle/.style={draw=blue!30, circle, inner sep=1pt, minimum size=0.85cm, fill=blue!10, font=\small},
	veval/.style={draw=gray!50, circle, inner sep=1pt, minimum size=0.85cm, fill=gray!15, font=\small},
	macro/.style={draw, dashed, rounded corners=2pt, inner sep=3pt, minimum width=0.9cm, minimum height=0.6cm, font=\small},
}

\newcommand{\inlineoracle}[1]{\tikz[baseline=(X.base)]{\node[draw=blue!30, circle, inner sep=0.8pt, fill=blue!10] (X) {\ensuremath{#1}};}}
\newcommand{\inlineveval}[1]{\tikz[baseline=(X.base)]{\node[draw=gray, circle, inner sep=0.8pt, fill=gray!15] (X) {\ensuremath{#1}};}}
\newcommand{\inlinesrr}[1]{\tikz[baseline=(X.base)]{\node[draw=gray!80, rounded corners=2pt, inner sep=1.5pt] (X) {\ensuremath{#1}};}}

\newcommand{\pparagraph}[1]{\paragraph{\textnormal{{\textit{#1}}}}}

\usepackage[framemethod=tikz]{mdframed}
\newmdenv[
	leftline=true, rightline=false, topline=false, bottomline=false,
	linewidth=2pt, linecolor=gray!50,
	innerleftmargin=10pt, innerrightmargin=10pt,
	skipabove=8pt, skipbelow=8pt
]{rulequote}

\title{How to Verify Probabilistic Consistency of Predictive Models}
\author{Orr Paradise$^{1}$,~~ Oliver Richardson$^{2}$,~~ Yoshua Bengio$^{2}$,~~ Shafi Goldwasser$^{3}$
\\[.5ex]
\small 
\phantom{>>>>>}
$^1$EPFL
\hspace{2.5em}
\small
\parbox[t]{0.35\textwidth}{\centering $^2$Universit\'e de Montr\'eal, LawZero,\\ Mila -- Qu\'ebec AI Institute}
\hspace{2em}
$^3$MIT, UC Berkeley \\[3ex]
}
\begin{document}
\maketitle
\makeatletter
\let\@oldthefootnote\thefootnote
\let\thefootnote\relax
\footnotetext{Correspondence to: \texttt{papers@orrp.net}, \texttt{oliver.richardson@umontreal.ca}}
\let\thefootnote\@oldthefootnote
\makeatother

\begin{abstract}
	When a probabilistic predictor answers many conditional-probability queries, are its answers approximately self-consistent---and if so, can this be verified in polynomial-time?
	This problem is of interest for AI safety, when safety is derived from honesty about probabilistic predictions of unwanted outcomes potentially caused by an AI action. To address it,  we construct an interactive PCP as follows. Let a predictive model be specified by a probability circuit $P$ and a circuit $Q$ which outputs confidence in predictions. $P$ and $Q$ together implicitly specify exponentially many probabilistic claims. We show a protocol in which a polynomial time verifier can verify $(P,Q)$'s approximate consistency. The verifier is given the pair of circuits $(P,Q)$, which it evaluates at only a few points; alongside them it is given a proof oracle, an encoding of a \textit{witnessing probability distribution} allegedly consistent with the predictions of $(P,Q)$, which it reads at a few locations while interacting with a single un-trusted prover.

	En-route to the above result, we need to ensure the \textit{existence} of a \textit{sparse} witnessing probability distribution consistent with the model predictions.  To do so, we first consider witness distributions for the consistency of  \emph{explicit}  (rather than specified by a predictor) probabilistic claims: say $m$ claims, each of the form ``$\Pr[Y=1 \mid X=x] = p$'', over $n$ Boolean variables. Building on a body of literature initiated by Nilsson (Artif. Intel. 1986),
    we place $\ell_2$ approximate probabilistic consistency of explicit claims in $\NP$ with certificates of length $O(mn + \log B)$ in the input bit-precision $B$; and  further show how a small additive completeness--soundness gap removes dependence on the model's precision $B$. This will be important for our interactive PCP constructions.

	Together these results provide a complexity-theoretic foundation for certifying the self-consistency of probabilistic predictors. We view the explicit Interactive PCP we present as the first step toward the eventual training of predictive models to prove their own consistency.
\end{abstract}

\newpage

\section{Introduction}\label{sec:intro}

Suppose a probabilistic predictor claims that
$\Pr[X=1]=0.9$, $\Pr[Y=1 \mid X=1] = 0.9$, and $\Pr[Y=1] = 0.8$.
Although each pair of answers is plausible in isolation, the three cannot all be correct, since the first two imply that $\Pr[Y=1]\geq 0.81$.
Yet intuitively the degree of inconsistency is relatively small.
A predictor that implicitly contains the answers to many overlapping probability questions can exhibit much more complex versions of the same defect:
subsets of claims may make sense locally, yet there is no single probability distribution that is consistent with them all. Is it possible to efficiently verify that the degree of inconsistency within a predictive model (e.g., the weights of a large neural network) is small?

We remark that this question arises naturally in the context of AI safety: we do not want AI systems to manipulate us by giving conflicting answers in different contexts.
The currently most advanced general-purpose AI systems show signs of misalignment, sometimes acting towards goals that go against the human designers' intentions~\citep{bengio2025international}.
The
{\it Scientist AI} \cite{ScientistAI,bengio2026safetyhonestydisinterestedai} project aims to produce an unbiased consistent Bayesian predictor
that promises use not only in scientific prediction, but also as a guardrail for an AI agent: by predicting the probability that a proposed agent action causes a specified type of harm. 
This approach can only work if the guardrail is based on a predictor that reflects a consistent probability distribution about events in the world, chosen in a non-strategic way (such as an approximation of a Bayesian posterior). \emph{Self-consistency}, and coherence with the observed facts, are properties that are more generally desirable for AIs. Indeed, consistent AI systems are also envisioned for scientific discovery: they will generate hypotheses, assign them probabilities in light of the available evidence, and update those probabilities as new information arrives---and in a way that is not designed to please a specific human user or reflect cultural biases inconsistent with facts. Such scientific reasoning is inherently probabilistic: a system does not necessarily know which hypothesis is true, but it must reason coherently under uncertainty. Whereas in the traditional mathematical claim setting we may ask a model to prove that a definite answer is correct (\cf\ \emph{hallucination}), the natural question here is different: do the model's probabilistic claims correspond to a single coherent view of a (probabilistic) world? And if perfect coherence is too much to ask of a learned model \cite{ZhuG24}, is it \emph{approximately} consistent, up to some threshold?

Rewording this question  using the terminology of Interactive Proof systems \cite{GoldwasserMR89}, we ask:    {\it Can powerful but un-trusted Prover(s) convince an efficient Verifier that the probabilistic claims made by a predictive model are jointly consistent?}

In this paper, we  answer these question(s) affirmatively.

\subsection*{Predictive Model Consistency}

We formalize a \emph{predictive model} as a pair $(P,Q)$: a \emph{prediction circuit} $P$ and a \emph{confidence circuit} $Q$.
\begin{itemize}
\item $P$ is a circuit, such as a neural network, that takes as input a description $x$ of a context consisting of values assigned to a set of Boolean variables, and a variable $y$ representing the target event of interest, and produces a (rational) number $p = P(y\mid x) \in [0,1]$ that is to
be interpreted as ``the probability of $y$ given $x$ equals $p$'', \ie, ``$\Pr[y \mid x] = p$.''
 \item $Q(x,y)$  represents the confidence of the model in its prediction. This in particular allows the model to abstain from making predictions when uncertain and set $Q=0$.
\end{itemize}

Informally, we say that the model $(P,Q)$ is  (approximately) \emph{consistent} if there exists a joint distribution $\mu$ over all variables  that (approximately) agrees with $P$ on all possible inputs where $Q > 0$, in the sense that violations of
$\mu(y\mid x) \approx P(y|x)$ are small in aggregate over queries $(x,y)$, weighted by $Q$.
Formally,  we propose  the following measure of
\emph{inconsistency} of a distribution $\mu$, with respect to a model $(P,Q)$.
Writing $\norm{Q}_1$ for the sum of all confidence scores over all queries $(x,y)$, the model inconsistency is given by
\begin{equation}\label{eq:modelinformal}
	\D_{P,Q}(\mu)
	\coloneqq
	\sqrt{\frac{1}{\norm{Q}_1}\sum_{x,y} Q(x,y) \cdot \Big(
		\mu(x \land y) - P(y \mid x) \cdot \mu(x)
		\Big)^2}
\end{equation}
The proposed measure, which has been studied in the past \cite{potyka2014linear}, is a simplified variant  of the inconsistency measure developed in the theory of probabilistic dependency graphs \cite{RichardsonH21,Richardson22}, which unifies and justifies many loss functions in machine learning via a natural information-theoretic measure of inconsistency.
In contrast to \cite{RichardsonH21,Richardson22}  (which uses relative entropy and thus requires averaging logarithms of real numbers),  we use  $\ell_2$ norm directly which allows us to work with polynomials and bring some of the machinery of interactive proofs to bear on the current problem. Furthermore,  we believe it is  better suited to the foundations of \cite{ScientistAI}. Putting it all together, the  technical question we pose is

\begin{description}
\item[The $\MODELCONSIST$ Problem:]
		{\it Given a predictive model $(P,Q)$ and a tolerance $\tau \geq 0$, is there an efficient way to verify that there exists $\mu$ such that $\D_{P,Q}(\mu) \le \tau$?}
\end{description}

We emphasize that we do not ask whether the model's predictions are true of the world, nor whether they are calibrated against empirical data; we ask only whether the model's claims could all hold simultaneously under \emph{some} distribution. This is the probabilistic analogue of logical consistency: a collection of deterministic claims is consistent if some assignment satisfies all of them, and a collection of probabilistic claims is consistent if some distribution does. Consistency is thus a necessary condition for trust rather than a sufficient one, and a model can certainly be coherent and wrong. Whereas consistency does not imply certainty,
an \emph{in}coherent model is not merely uncertain: it is internally inconsistent, and its answers cannot be understood as arising from any single probabilistic belief state.

\subsection*{The Verification Challenge}

So, how can we hope to verify model consistency as defined above? The difficulty is one of scale. A variable is indexed by a $d$-bit string, so the model may refer to $n=2^d$ Boolean variables, and as the sample space consists of the $2^n$ joint assignments to them, the natural alleged witness distribution $\mu$ which would certify consistency is a vector of real numbers whose length is doubly exponential in $d$. This is an enormous object. Even putting aside concerns about real numbers and precision, it is hard to fathom how a verifier could efficiently compute or verify all the required marginals.  For example, even Multi-Prover interactive proofs handle statements whose witnesses are exponentially long \cite{BabaiFL91}, whereas the natural witness for our problem, the distribution $\mu$,   has doubly exponential description length. Thus, it is not clear, a priori, that a probabilistic polynomial-time Verifier can verify the consistency of a model with any amount of help.

\subsection*{The Sparse Support Distribution}

A central step is to show the existence of another, ``sparsely'' supported, distribution that can serve as a proof witness in place of the naive one.
Our route goes through a closely-related sub-problem. Rather than considering exponentially many variables encoded by a circuit, consider just $n = d$ variables, and rather than letting a circuit implicitly encode exponentially many claims, suppose that they are given as an explicit Collection (\ie, multiset) $\PP$ of Probabilistic Claims, referred to as a CPC hereafter.

We call a CPC $\mathcal P$  \emph{consistent} if there exists a joint distribution $\mu$ that agrees with all constituent claims, and
focus on a notion of ``approximate consistency'' measured by a normalized $\ell_2$ norm. Given a CPC $\PP$ and a tolerance $\tau \ge 0$ the problem {\bf Explicit Consistency} is to decide whether there exists a probability measure $\mu \in \Delta(2^n)$ over the $n$-hypercube of joint variable assignments such that
\begin{equation*}
	\D_\PP(\mu) \coloneqq
	\sqrt{\frac{1}{|\PP|}\sum_{\smash{(x,y,p) \in \mathcal P}} \Big(
		\mu(x \cap y) - p \cdot \mu(x)
		\Big)^2}
	\le \tau
	.
\end{equation*}

It is not hard to see that $\EXPCONSIST$ is $\NP$-hard, as in the restricted case of tolerance $\tau=0$ and probabilities $p \in \{0,1\}$, it essentially encodes a constraint satisfaction problem (CSP),
from which this problem inherits hardness of approximation (see also \cref{sec:hardness}). But is it in $\NP$, $\PSPACE$, $\EXP$, $\NEXP$?
The obvious candidate for a consistency witness is a joint distribution $\mu \in [0,1]^{2^n}$, which would seem to require exponential time and space to verify, putting aside issues of precision.
At least in the consistent setting ($\tau=0$), however, this is not the obstacle it appears to be: classical small-model theorems \cite{FaginHM90,KollerM94,GeorgakopoulosKP88} show that a consistent collection of $m$ claims is witnessed by a distribution supported on $O(m)$ points, with probabilities that are rationals of polynomial bit-length, which places that case in $\NP$ (see \Cref{sec:related}).
Prior work formalizing \emph{inconsistent} settings ($\tau > 0$) has suggested the problem to be harder \cite{KilianN95,richardson2023inference}.
What our setting needs is that this survives the passage to approximate consistency, and at a precision fine enough that the witness can be written into a proof oracle.

\begin{figure}[h!]
	\centering
	\tikzset{
        edge/.style={->, semithick, gray!60!black},
        qedge/.style={->, line width=0.5pt, gray!60!black},
        elabel/.style={font=\scriptsize, fill=white, inner sep=1pt},
        note/.style={font=\scriptsize, color=black!60},
        actor/.style={font=\small},
        model/.pic={
                \node[draw=black!70, rounded corners=3pt, minimum width=4.2cm, minimum height=1.25cm] (-box) at (0,0.05) {};
                \foreach \xa/\xb in {-1.6/-0.8, -0.8/0, 0/0.8}
                \foreach \ya in {-0.24,0.05,0.34}
                \foreach \yb in {-0.24,0.05,0.34}
                \draw[black!45, line width=0.4pt] (\xa,\ya) -- (\xb,\yb);
                \foreach \ya in {-0.24,0.05,0.34} \draw[black!45, line width=0.4pt] (0.8,\ya) -- (1.6,0.05);
                \foreach \x in {-1.6,-0.8,0,0.8}
                \foreach \y in {-0.24,0.05,0.34}
                \fill[black!70] (\x,\y) circle (1.4pt);
                \fill[black!70] (1.6,0.05) circle (1.4pt);
            },
        user/.pic={
                \fill[black!55] (0,0.12) circle (0.062);
                \begin{scope}
                    \clip (-0.16,-0.035) rectangle (0.16,0.05);
                    \fill[black!55] (0,-0.13) circle (0.165);
                \end{scope}
            },
		}
    \begin{tikzpicture}
		\node[font=\smaller] at (3.0, 3.65) {$\EXPCONSIST$: verifying $m$ explicit claims};
		\node[font=\smaller] at (11.25, 3.65) {$\MODELCONSIST$: verifying an entire predictive model};
		\draw[gray!40, line width=0.5pt] (6.80,-1.35) -- +(0,5.1);

		\node[draw=black!70, rounded corners=3pt, inner sep=5pt, font=\scriptsize, align=center] (CPC) at (2.2, 2.25)
		{$\Pr[y_1 \mid x_1] = p_1$\\ $\vdots$ \\ $\Pr[y_m \mid x_m] = p_m$};
		\node[fill=white, font=\scriptsize, inner sep=1.5pt] at (CPC.north) {CPC $\PP$};
		\foreach \i in {0,...,5}
            \draw[black!60, fill=blue!35] ({3.5+\i*0.28},0.72) rectangle ++(0.28,0.3);
		\node[font=\scriptsize, anchor=south east] (shortproof) at (5.3, 1.04) {short proof};
        \draw[->] (CPC) -| (shortproof.north);
		\node[actor] (VA) at (2.2,-0.2) {\color{OliveGreen}\texttt{Verifier}};
		\draw[edge] (CPC.south) -- (VA.north);
		\foreach \x in {3.64, 3.92, 4.2, 4.48, 4.76} \draw[qedge] ({\x}, 0.7) -- (2.75, 0.0);
		\draw[qedge] (5.04, 0.7) -- node[elabel, sloped, below=2pt, pos=0.38] {read in full} (2.75, 0.0);
		\draw[edge] (VA.south) -- (2.2,-0.95) node[below, font=\scriptsize] {\acc/\rej};

        \begin{scope}[shift={(0.5,0)}] %
    		\pic (mB) at (8.85, 2.35) {model};
    		\node[fill=white, font=\scriptsize, inner sep=1.5pt] at (mB-box.north) {Predictive model $(P,Q)$};
    		\foreach \i in {0,...,15}
                \draw[black!60, fill=blue!8] ({10.35+\i*0.25},0.72) rectangle ++(0.25,0.3);
    		\foreach \i in {2,5,8,14}
                \draw[black!60, fill=blue!35] ({10.35+\i*0.25},0.72) rectangle ++(0.25,0.3);
    		\node[font=\scriptsize, anchor=south east] (elpo) at (14.4, 1.04) {exponentially long proof oracle};
            \draw[->] (mB-box) -| (elpo);
    		\node[actor] (VB) at (8.85,-0.2) {\color{OliveGreen}\texttt{Verifier}};
    		\node[actor] (PB) at (12.75,-0.2) {\color{Red}\texttt{Prover}};
    		\draw[edge] ([xshift=-14pt]mB-box.south) -- node[elabel, sloped, below=2pt, pos=0.45] {two queries} (VB.north);
    		\draw[edge] ([xshift=14pt]mB-box.south) -- (VB.north);
    		\draw[<->, line width=1.1pt, gray!60!black] (VB.east) -- node[elabel, below=2pt] {interaction} (PB.west);
    		\foreach \x in {10.975, 11.725, 12.475, 13.975} \draw[qedge] ({\x}, 0.7) -- (9.4, 0.0);
    		\node[elabel, text=gray!60!black, rotate=11] at (11.4, 0.37) {poly queries};
    		\draw[edge] (VB.south) -- (8.85,-0.95) node[below, font=\scriptsize] {\acc/\rej};
    	\end{scope}
	\end{tikzpicture}
	\caption{\small Left: $m$ probabilistic claims, collected, \eg, as the outputs of a predictive model, are verified against a short proof certificate that the Verifier reads in full ($\EXPCONSIST$; \Cref{thm:inNP}). Right: the entire model is verified in one go via an Interactive PCP; the Verifier evaluates the model at two points, reads polynomially many positions of an exponentially long proof oracle, and interacts with a single untrusted Prover ($\MODELCONSIST$; \Cref{thm:explicit-iop}). In both settings, the Verifier should accept if its input is probabilistically consistent, and reject otherwise.}\label{fig:two-settings}
\end{figure}

 In this paper we resolve both problems.
 We proceed to detail our full set of contributions: modeling and results.

\subsection{Summary of Contributions}

Throughout the paper, we let
$n$ be the number of Boolean variables in the system, $m := |\PP|$ be the number of claims in the CPC, and $B$ be the maximum binary precision among those claims.

\paragraph{Formalizing verifiable consistency.}
Our first contribution is the above modeling of the problem of verifying the consistency of a probabilistic predictor, such as that of the ScientistAI \cite{bengio2026safetyhonestydisinterestedai,ScientistAI}.  The mutual consistency of explicitly given probabilistic claims is a classical question, going back to \cite{Boole1854} (see \Cref{sec:related});  our new subject is a predictive model, whose claims can be exponentially many but are specified  implicitly by a circuit, and our goal is not to decide consistency but to \emph{verify} it against an untrusted Prover.

\paragraph{Short witnesses for explicit consistency.}

We place $\EXPCONSIST$ in $\NP$, extending a classical line of work on probabilistic satisfiability
(see \Cref{sec:related} for related works).

\begin{proposition*}[\Cref{thm:inNP}, informal]
	There is a verifier for explicit  probabilistic consistency that runs in time $O(m^3 (B + \log m)^2 +  m^2 n)$, with a certificate of length $O(mn + \log B)$, consisting of the support of a sparse witnessing distribution together with a single auxiliary prime, \ie,
	$\EXPCONSIST \in \NP$.

\end{proposition*}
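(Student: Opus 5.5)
The plan is to reduce $\EXPCONSIST$ to a convex quadratic program over the simplex $\Delta(2^n)$ and to show that some optimal (or near-optimal) solution has small support. Write each claim $(x_j,y_j,p_j)\in\PP$ as a linear functional on $\mu$: $a_j(\mu)\defeq \mu(x_j\cap y_j)-p_j\mu(x_j)=\sum_{\omega}\mu(\omega)\,c_j(\omega)$. Here $c_j(\omega)=\mathds{1}[\omega\models x_j\wedge y_j]-p_j\mathds{1}[\omega\models x_j]$. Then $\D_\PP(\mu)^2=\frac1m\norm{A\mu}_2^2$, where $A\in\Q^{m\times 2^n}$ has columns $c(\omega)\in\Q^m$.

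The key step is a Carath\'eodory argument. The vector $A\mu$ lies in the convex hull of the $2^n$ points $c(\omega)\in\R^m$. Hence any achievable vector $v=A\mu$ is also achieved by some $\mu'$ supported on at most $m+1$ points. Further, if $v$ is the minimizer of $\norm{v}_2$ over this polytope, then $v$ lies on a face, and we may take $\mu'$ to be a vertex of $\{\mu'\ge0,\ \sum\mu'=1,\ A\mu'=v\}$.

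The certificate therefore lists the support $\omega_1,\dots,\omega_k$ with $k\le m+1$, costing $O(mn)$ bits. The verifier computes the $m\times k$ matrix $C=[c(\omega_i)]$ in time $O(m^2n)$ and must then check that $\min_{\lambda\in\Delta_k}\frac1m\norm{C\lambda}^2\le\tau^2$. This is a small convex QP. At the optimum, the KKT conditions on the active support give a linear system: $C_S^\top C_S\lambda_S+\theta\mathbf 1=0$ and $\mathbf 1^\top\lambda_S=1$. By Cramer's rule, the optimal $\lambda$ is rational with numerators and denominators of bit-length $O(m(B+\log m))$. We can assume the certificate's support is exactly the active set $S$, by dropping zero-weight points. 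The verifier then solves this $(k+1)\times(k+1)$ system exactly, checks $\lambda\ge0$, and compares $\norm{C\lambda}^2$ against $m\tau^2$.

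The remaining issue is cost. Gaussian elimination over $\Q$ with entries of bit-length up to $O(mB)$ would give roughly $O(m^3)$ operations on $O(m(B+\log m))$-bit numbers, which is too slow for the stated bound. This is where the auxiliary prime enters. The verifier performs elimination modulo a prime $q$ of bit-length $O(B+\log m)$, so that each arithmetic operation costs $O((B+\log m)^2)$ and the total is $O(m^3(B+\log m)^2)$. The prover supplies $q$ chosen so that $q$ does not divide the relevant determinant $\det M$, keeping $M$ invertible mod $q$. Such a $q$ exists among the first few primes of that size, since $|\det M|\le 2^{O(m(B+\log m))}$ has few prime factors. The comparison then has to be certified in modular form as well. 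For instance, the verifier can recover the sign information via a determinant identity: $\norm{C\lambda}^2=-\theta=\det(\cdot)/\det M$, together with sign checks on the Cramer numerators for $\lambda_S\ge0$, all evaluated through modular arithmetic with reconstruction.

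\textbf{Main obstacle.} The step I expect to be hardest is making inequality checks (nonnegativity of $\lambda$, and the final $\le\tau^2$ comparison) sound when they are carried out only modulo a short prime, since signs are not preserved mod $q$. I would try to avoid exact sign recovery. One route is to phrase the decision so that a single prime of $O(\log B)$ description (the index $\le\poly(B)$ of the prime) plus the support suffices. Another is to reconstruct the needed rationals via a CRT/rational-reconstruction argument whose length budget matches $O(mn+\log B)$ certificate bits.
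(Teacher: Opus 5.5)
Your sparsification and bit-length bounds are fine and match the paper's in substance. Taking a vertex of $\{\mu'\ge 0,\ \mathbf 1^{\sf T}\mu'=1,\ A\mu'=v\}$ at the global minimizer $v$ does the job of the paper's minimal-support optimum in \Cref{lem:rationalopt}. Linear independence of the columns $\bigl[\begin{smallmatrix} c(\omega)\\ 1\end{smallmatrix}\bigr]$ on that support is exactly what makes the KKT matrix non-singular; say this explicitly, since your appeal to Cramer's rule needs it.

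The genuine gap is the computational step, which is the ``main obstacle'' you flag and leave open. Gaussian elimination modulo a single prime $q$ yields only $\lambda \bmod q$. A residue of $O(B+\log m)$ bits cannot determine a rational whose numerator and denominator have $\Theta(m(B+\log m))$ bits, so neither $\lambda\ge 0$ nor the comparison with $m\tau^2$ can be decided from it. Your fallbacks do not close this:
\begin{itemize}
\item Multi-modular/CRT reconstruction needs $\Theta(m)$ such primes, hence $\Theta(m)$ separate eliminations. That is about $O(m^4(B+\log m)^2)$ bit operations, a factor $m$ over budget.
\item A prime of $O(B+\log m)$ bits in the certificate gives length $O(mn+B)$ rather than $O(mn+\log B)$.
\item The ``index $\le\poly(B)$'' variant only describes primes of $O(\log m+\log B)$ bits, for which elimination mod $q$ recovers even less.
\end{itemize}

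The missing idea is $p$-adic (Dixon) lifting with a single small prime, as in the paper. Invert $M$ modulo $q$ once, at cost $O(m^3)$ operations on $\log q$-bit words. Then, starting from $r_0=b$, iterate
\[
x_i = M^{-1}r_i \bmod q, \qquad r_{i+1}=(r_i-Mx_i)/q
\]
(an exact division) for $\ell=O(B_M/\log q)$ rounds. Each round is just two matrix--vector products: $O(m^2)$ multiplications of $O(B+\log m)$-bit integers by $\log q$-bit digits. The accumulated $X=\sum_{i<\ell}q^ix_i$ satisfies $MX\equiv b\pmod{q^\ell}$. Once $q^\ell>2\cdot 2^{2B_M}$, rational reconstruction returns the unique solution with numerators and denominator at most $2^{B_M}$. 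The total cost is $O(m^2B_M(B+\log m))=O(m^3(B+\log m)^2)$ bit operations.

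With the exact rational $(\alpha^\dagger/D,\lambda^\dagger/D)$ in hand, the sign problem disappears. The verifier checks $M(\alpha^\dagger,\lambda^\dagger)^{\sf T}=(\mathbf 0,D)^{\sf T}$ by exact substitution over $\Z$, checks $\alpha^\dagger_j>0$ over $\Z$, and runs the integer threshold test. Your identity $\norm{C\lambda}_2^2=-\theta$, which is valid once the substitution check has passed, would make that last test a single comparison.

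Nothing is ever read modulo $q$, so soundness holds for any prime the prover sends; $q$ matters only for completeness and speed. That is also why $q$ can be tiny. Since $\det M$ is a nonzero integer of magnitude at most $2^{B_M}$, one of the first $B_M+1$ primes does not divide it. That prime has $O(\log m+\log B)$ bits, which matches the certificate bound.
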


Next, by introducing a very small gap\footnote{The gap is between \emph{values} of the inconsistency measure: given a valid certificate it accepts whenever $\D(\PP) \le \tau - \epsgap$, it rejects every certificate whenever $\D(\PP) > \tau$, and for $\D(\PP)$ in between no guarantee is made (\Cref{prop:near-linear-witnesses}).} between the completeness and soundness guarantees, we obtain a witness whose weights are written out explicitly, at low precision, and which the Verifier checks rather than computes.
We note that
at exact precision the weights alone grow quadratically in the number of claims, which is likely prohibitively expensive given that OpenAI, at the time of writing, serves over 2 billion queries per day.
It is important to pay careful attention to the length of the proof, when envisioning deployment of the Interactive PCP below,
where the witnessing distribution will need to be written out explicitly, weights included, inside the proof oracle.

\begin{proposition*}[\Cref{prop:near-linear-witnesses}, informal]
	$\EXPCONSIST$ can be verified with a gap
	$\epsgap$ with an explicit witness of length $O(nm+ m \log (1/\epsgap))$, and in time $O(m^2 (n + B \log(\nicefrac m \epsgap)))$.
\end{proposition*}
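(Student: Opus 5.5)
Having established a sparse witness whose support has size $O(m)$, I will round its weights to low precision and show the rounding moves the inconsistency by at most $\epsgap$.

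\textbf{Sparse witness.} Assume $\D_\PP(\mu^*) \le \tau - \epsgap$ for some $\mu^*$. I invoke the sparsification behind \Cref{thm:inNP}. The vector of $2m$ marginals $(\mu(x_i \cap y_i), \mu(x_i))_{i\le m}$ is a linear image of the simplex $\Delta(2^n)$. By Carath\'eodory, the marginal vector of $\mu^*$ is then realized by some $\mu$ supported on at most $2m+1$ points. Since $\D_\PP$ depends on $\mu$ only through these marginals, $\D_\PP(\mu) = \D_\PP(\mu^*)$.

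\textbf{Rounding.} The certificate lists the $k = O(m)$ support points, each an $n$-bit string. Alongside each point it gives a weight $w_j$ that is a multiple of $\delta = \epsgap/(ck)$, with the weights summing exactly to $1$. I obtain these weights from $\mu$ by rounding down and then distributing the leftover mass, so that $\sum_j |w_j - \mu_j| \le k\delta \le \epsgap/c$.

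Let $\hat\mu$ be the distribution with weights $w$. Each residual $r_i(\mu) = \mu(x_i\cap y_i) - p_i\mu(x_i)$ is a difference of two event probabilities with coefficient $p_i \in [0,1]$. Hence
\[
|r_i(\hat\mu) - r_i(\mu)| \le 2\,\|\hat\mu - \mu\|_{TV} \le \|\hat\mu-\mu\|_1 \le \epsgap/c .
\]
The map $r \mapsto \|r\|_2/\sqrt m$ is $1$-Lipschitz under the $\ell_\infty$ norm of $r$. Taking $c \ge 1$, this gives $\D_\PP(\hat\mu) \le \D_\PP(\mu) + \epsgap \le \tau$.

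\textbf{Length and verification.} The witness has length $O(m(n + \log(1/\delta))) = O(nm + m\log(m/\epsgap))$. The extra $m\log m$ term should fold into $O(nm)$, since one may assume $m \le 2^n$ after merging duplicate claims, or else it can be absorbed into the stated bound.

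The verifier works as follows:
\begin{itemize}
\item It checks that the weights are nonnegative and sum to $1$.
\item For each claim $i$, it computes $\hat\mu(x_i)$ and $\hat\mu(x_i\cap y_i)$ by scanning the $k$ support points. Each point is tested for membership in $x_i$ in $O(n)$ time, and the weights are summed at precision $O(\log(m/\epsgap))$.
\item It forms the residuals using $p_i$ at precision $B$ and accepts iff $\sum_i r_i^2 \le m\tau^2$.
\end{itemize}
All arithmetic is exact rational arithmetic on numbers of $O(B + \log(m/\epsgap))$ bits. Multiplications cost $O(B\log(m/\epsgap))$ each, and there are $O(m\cdot k)$ of them, giving the time bound $O(m^2(n + B\log(m/\epsgap)))$.

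Soundness is immediate. If $\D_\PP(\nu) > \tau$ for every $\nu$, then no listed distribution passes, because the verifier checks the exact inequality for the distribution actually written.

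\textbf{Main obstacle.} The rounding must preserve exact normalization while keeping the $\ell_1$ error at $O(k\delta)$. The $\sqrt{\cdot}$ comparison must also be done exactly by squaring both sides; if $\tau^2$ is irrational-free rational this is fine, and otherwise I would compare against a $B$-bit rational bound on $\tau$.
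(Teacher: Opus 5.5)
Your argument is correct and has the same skeleton as the paper's: Carath\'eodory sparsification, rounding the weights to a grid while keeping them normalized, a perturbation bound, and an exact final check that makes soundness free. The perturbation step, however, is genuinely different.

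The paper expands the \emph{squared} inconsistency as a quadratic form in the weights and bounds each $|\hat\alpha_j\hat\alpha_k-\alpha_j\alpha_k|$ by $2\delta$. This gives $|\D^2_\PP(\hat\mu)-\D^2_\PP(\mu)|\le 2\delta(m+1)^3/m$ (\Cref{lem:approx-gap}), which forces $\delta\approx\epsgap^2/m^2$ and the closing step $(\tau-\epsgap)^2+\epsgap^2\le\tau^2$, which uses $\tau\ge\epsgap$.

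You instead use two facts: each residual is linear in $\mu$ with coefficients in $[-1,1]$, so it moves by at most $\norm{\hat\mu-\mu}_1\le k\delta$; and $r\mapsto\norm{r}_2/\sqrt m$ is $1$-Lipschitz in $\ell_\infty$. This controls the unsquared $\D_\PP$ directly, needs only $\delta\approx\epsgap/k$, and avoids the $\tau\ge\epsgap$ bookkeeping. Both routes give $O(\log(m/\epsgap))$ bits per weight, yours with about half the constant. Conversely, your Carath\'eodory step in the $2m$-dimensional marginal space gives $2m+1$ support points. The paper gets $m+1$ by working with the $m$-dimensional residual map $\phi$, which is all $\D_\PP$ depends on. That difference is again only a constant.

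Two small repairs are needed.
\begin{itemize}
\item Weights that are multiples of $\delta$ can sum exactly to $1$ only if $1/\delta$ is an integer. Take $\delta$ to be a power of two below $\epsgap/(ck)$, as the paper does with $2^{-B_{\epsgap}}$.
\item Your justification for discarding the $m\log m$ term is wrong. Merging duplicate claims does not give $m\le 2^n$: there can be $3^n\cdot n\cdot 2^B$ distinct claims, and multiplicities matter in the normalized measure anyway.
\end{itemize}
The paper's formal statement (\Cref{prop:near-linear-witnesses}) in fact keeps $O(m(n+\log(m/\epsgap)))$, which is exactly what you prove. If you want the informal bound literally, your own analysis supplies the correct reason. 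The precision needed is $\log(ck/\epsgap)$, which is governed by the support size $k$ rather than by $m$. After merging repeated support points, $k\le 2^n$, so $\log k\le n$. The witness length is then $O(k(n+\log(1/\epsgap)))=O(mn+m\log(1/\epsgap))$.
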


 \paragraph{An Interactive PCP for model consistency.}

Our main result is an \emph{Interactive PCP} (a proof oracle combined with an interactive proof with a single prover \cite{KalaiR08})  for $\MODELCONSIST$, thereby showing the first probabilistically checkable proof system for model consistency.

\begin{theorem*}[\Cref{thm:explicit-iop-pp} with \Cref{cor:boolean-ipcp}, informal]
	$\MODELCONSIST$ admits a polynomial-time Interactive PCP, verifying consistency up to an additive gap $\epsgap = 2^{-\poly(\ell, d, B)}$.
\end{theorem*}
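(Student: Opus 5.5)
The plan has three layers. First, show that a sparse, low-precision witness exists even though the model makes exponentially many claims. Second, encode that witness as a low-degree extension that serves as the proof oracle. Third, reduce the inconsistency check to sumcheck instances that touch the oracle and the circuits $P,Q$ at only a few points.

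\textbf{Sparse witness.} I would extend \Cref{prop:near-linear-witnesses} from an explicit CPC to the implicit family $\{(x,y,P(y\mid x)) : Q(x,y)>0\}$. The inconsistency $\D_{P,Q}(\mu)^2$ is a convex quadratic in $\mu$. So I would pass to the moment vector $v=(\mu(x\wedge y),\mu(x))_{x,y}$, which lies in a convex hull, and apply an approximate Carath\'eodory argument (Maurey sampling in $\ell_2$). Sampling $k=O(1/\epsgap^2)$ points from an optimal $\mu^*$ changes $\D_{P,Q}$ by at most $\epsgap$. This works because each coordinate is a bounded linear functional and $\D_{P,Q}$ is a weighted $\ell_2$ norm, so its variance is bounded. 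With $\epsgap=2^{-\poly}$ this only gives exponential support, which is still fine for an oracle. I would then round the weights to $\poly(\ell,d,B)$ bits, as in \Cref{prop:near-linear-witnesses}, so that the witness is a function $w:\bool^{s}\to\bool^{n}\times\F$ listing support points and rational weights with $s=\poly$.

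\textbf{Encoding the oracle.} The proof oracle is the multilinear extension $\hat w$ over a large prime field $\F$, large enough that the rational quantities embed exactly after clearing denominators. The verifier runs a low-degree test on $\hat w$ and self-corrects its queries through it. Because every assignment $z$ to the $n=2^d$ variables is itself exponentially long, I would write the support points as multilinear encodings too, so that $z_j$ is the value $\hat Z(i,j)$ with $i\in\bool^{s}$ indexing the support and $j\in\bool^{d}$ indexing the variable.

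\textbf{Arithmetizing the check.} The verifier needs $\norm{Q}_1\cdot\D^2\le\tau^2\norm{Q}_1$. Expanding the square gives
\[
\sum_{x,y}Q(x,y)\Big(\sum_i w_i\,[z_i\models x\wedge y]-P(y\mid x)\sum_i w_i\,[z_i\models x]\Big)^2 .
\]
Squaring the inner sum over $i$ produces a double sum over $i,i'$, so the whole expression is a single sum over $(x,y,i,i')$ of a product of low-degree terms. These terms are the arithmetized $P$ and $Q$ (via the GKR-style / \Cref{thm:explicit-iop} machinery that lets the verifier evaluate $P,Q$ at a point after interaction), the weights $\hat w$, and the satisfaction predicate $[z\models x]$. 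The predicate is a product over $j\in\bool^d$ of local agreement terms, which I would handle with a nested sumcheck or product check. Similarly $\norm{Q}_1$ is a sum computed by its own sumcheck. At the end, the verifier evaluates the circuits at one random point each, which matches the two queries in the figure, and reads $\hat w,\hat Z$ at polynomially many points. It then compares the two field values after scaling by the common denominators. Completeness follows from the sparse witness. Soundness uses Schwartz--Zippel plus the fact that any oracle close to a low-degree function decodes to some genuine distribution $\mu$. Since $\D_{P,Q}(\mu)>\tau$ for every $\mu$, no decoded oracle can pass the exact identity.

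\textbf{Main obstacle.} I expect the hardest step to be making the inequality "$\le\tau$" checkable over a field. This has two parts. First, the prover must send the exact claimed rational value of $\D^2$, and the verifier checks the comparison over $\Q$. This requires bounding the bit-length of all denominators by $\poly(\ell,d,B)$. Second, the decoded oracle must be forced to be a genuine distribution: weights nonnegative and summing to one. Nonnegativity I would enforce by having each weight given in binary, i.e.\ through bit-oracles with a booleanity sumcheck. The sum-to-one condition is one additional sumcheck. The Boolean case of \Cref{cor:boolean-ipcp} then follows by specializing $P$ to $\{0,1\}$ outputs.
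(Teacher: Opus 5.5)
Your architecture matches the paper's. A sparse witness is encoded by multilinear extensions of a support table $Z$ and a weight-bit table $A$. Validity is enforced by a multilinearity test plus Booleanity and sum-to-one sumchecks. The Prover commits to a denominator-cleared inconsistency, which is compared against an integer threshold in a field too large to wrap around. There is a sumcheck over $\Qcal$, a second one for $\norm{Q}_1$, and delegated evaluation of $P,Q$ (\Cref{thm:explicit-iop-pp}).

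Two of your choices differ legitimately from the paper.
\begin{itemize}
\item \emph{Sparsification.} The paper applies exact Carath\'eodory to the implicit CPC ($m=\norm{Q}_1+1$ points) and then rounds the weights (\Cref{claim:sparse-witness}). Your Maurey sampling gives $k=O(1/\epsgap^2)$ points, independent of $\norm{Q}_1$, with weights exactly $1/k$ on a multiset. So for $k$ a power of two you need no rounding, and in fact no weight oracle or unit-measure check at all. The price is that $\norm{Q}_1$ can no longer be read off the oracle's length and must be sent and certified, as in the decoupled variant of \Cref{rem:variable-length}.
\item \emph{Flattening.} Expanding the square into one sumcheck over $(q,i,i')$ is sound. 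It replaces the paper's sumcheck over $q$ followed by two $\VerifyMarginoid$ calls at $\hat q$.
\end{itemize}

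Two steps do not go through as written.

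First, the satisfaction predicate cannot be a product over all $j\in\bool^d$. The outer sumcheck needs a summand of low degree in $q$. A product of $2^d$ factors, each multilinear in $s_k$ (e.g.\ through $\eq(j,s_k)$), has degree up to $2^d=n$ in the coordinates of $s_k$. The round messages are therefore exponentially long, and a nested product check on the final value does not lower that degree. The missing observation is that a context has only $\ell$ literals. So on the cube $[z_i\models x]=\prod_{k=1}^{\ell}\bigl(b_kZ(i,s_k)+(1-b_k)(1-Z(i,s_k))\bigr)$, which is the paper's $\match$ gadget. It is multilinear in $q$, has degree $\ell$ in $i$, costs $\ell$ oracle lookups, and is automatically zero on conflicting queries.

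Second, ``an oracle close to a low-degree function decodes to a genuine distribution'' also requires Booleanity of $Z$ on the cube (the check of \Cref{eq:code:boolZ}). You enforce Booleanity only for the weight bits. Without it, residuals can cancel in $\F$. For example, take a single row whose target entry is $c\notin\bool$, against claims $\Pr[\omega_t=1\mid x]=0$ and $\Pr[\omega_t=1\mid x]=p\neq 0$. The residual sum is $c^2+(c-p)^2$, which has a root in $\F$ whenever $-1$ is a square.

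Finally, \Cref{cor:boolean-ipcp} is about writing each $\F$-symbol of the proof oracle in binary, i.e.\ a Boolean proof alphabet. It is not about restricting $P$ to $\{0,1\}$ outputs.
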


\paragraph{Reed--$\mu$ller: a marginal-verifiable encoding of distributions.}
The proof oracle at the heart of our protocol is built from a primitive that may be of independent interest: a locally-verifiable encoding of sparse distributions that enables \emph{delegation of marginal computation} (\Cref{sec:codecheck}). We present it as a self-contained library, which may be used to verify other properties than consistency: the encoding, a codeword-validity verifier, and a marginal verifier. Improvements to its parameters would carry over directly to \Cref{thm:explicit-iop-pp} (see \Cref{rem:query-eps}).

\paragraph{Distributions as witnesses.}
Combining the existence of NP witnesses  for $\EXPCONSIST$ and  $\MIP = \NEXP$ \cite{BabaiFL91} places $\MODELCONSIST$ in $\MIP$ as a black-box corollary,  (\Cref{cor:mip}),
but here we go substantially further. Building on the low-precision (gap) Carath\'eodory witness of \Cref{prop:near-linear-witnesses}, we give a {direct construction}, of an \emph{Interactive PCP}  in which the proof oracle encodes the witness, a sparsified probability distribution, and the interaction reduces approximate consistency to point queries on that distribution.

The distinction is not merely cosmetic: the generic derivation compiles the problem into a computation-checking instance, losing the fact that the witness is a probability distribution and that the claims are conditional-probability constraints, whereas the explicit protocol preserves this structure. Put differently, where the IP, MIP and PCP tradition encodes proofs of deterministic mathematical facts, the object encoded here is a distribution. The protocol that drives this reduction and its analysis are developed in \Cref{sec:succinct}.

We believe that for sub-classes of probabilistic predictions which arise in an AI setting, an Interactive PCP with efficient provers can be developed in the Self-Proving model of \cite{AmitGPR25}. The two roles of ``the prover'' in an interactive PCP, a stateless proof oracle and a stateful Prover, can be played by a single model, assuming the Verifier can control its context: reset before each oracle query, and carried across interactive rounds (see \Cref{rem:spmip}). Finally, we consider pointing the way to a new era for interactive and probabilistic proofs, a contribution in its own right. Our protocol extends the classical toolkit of probabilistic proof systems (multilinear extensions, the \SumCheck protocol, self-correction), now aimed at a new kind of problem domain.

Ultimately,  we argue, probabilistic consistency is a fundamental aspect of what is needed to trust AI systems.
A highly inconsistent agent may be unable to reliably execute tasks, let alone reliably help with enhancing our knowledge.
Indeed, much of training and inference in machine learning can be viewed as the process of resolving inconsistencies in probabilistic models \cite{Richardson22,richardson2024unified}.
Being highly intelligent and useful does not ensure consistency, and it is at this point inconsistency becomes a safety concern: we do not want highly intelligent systems whispering different things in our ears, which can be manipulative and dishonest.
For these reasons, training for consistency is a key part of the \emph{Scientist AI} approach to designing safe AI systems \cite{ScientistAI}---especially when combined with probabilistic claims that enforce consistency with a Bayesian posterior.
Indeed, the Scientist AI they envision is essentially a circuit that implicitly encodes all succinctly specified probabilistic claims; for such a system, consistency is not a technical nicety but part of what it means to behave as a coherent scientific reasoner.

Finally, many open problems  emerge. Can we train AI systems to produce proofs of self-consistency of their claims for restricted classes of distributions?  What exactly does it take to convince an auditor that an allegedly consistent model is, in fact, (mostly) consistent?  We leave these questions to future work.

\section{Technical overview}
This section describes the ideas behind our two main results, the two verification settings of \Cref{fig:two-settings}; the full proofs occupy the remainder of this paper. \Cref{sec:overview:np} outlines our proof that consistency admits a \emph{sparse} witness: if a CPC of $m$ claims is consistent, then some distribution supported on only $m+1$ points is consistent with it as well, and in fact one whose exact description takes polynomially many bits, placing $\EXPCONSIST$ in $\NP$. \Cref{sec:mip:overview} uses this witness as its starting point: to verify the consistency of a predictive model, whose implicit claims are exponentially many, the Prover encodes a sparse witnessing distribution as a proof oracle, and the Verifier tests the model's inconsistency against this encoding.
\subsection{\texorpdfstring{$\EXPCONSIST \in \NP$}{Explicit-Consistency in NP} and a low-precision witness for the gapped case}\label{sec:overview:np}
We start by describing the techniques used to establish the $\NP$-verifier for $\EXPCONSIST$.
Here the instance is explicit: a CPC $\PP$ of $m$ probabilistic claims $(x, y, p)$ over $n$ variables, together with a tolerance $\tau \ge 0$, and the question is whether some distribution $\mu \in \Delta\bool^n$ satisfies
\begin{equation*}
	\D_\PP(\mu) = \sqrt{\frac{1}{m} \sum_{(x,y,p) \in \PP} \bigl(\mu(x \cap y) - p \cdot \mu(x)\bigr)^2} \;\le\; \tau.
\end{equation*}
The natural witness is the distribution itself, and therein lies the difficulty: \emph{a priori}, $\mu$ is an arbitrary vector of $2^n$ nonnegative reals summing to one.

Because the degree of inconsistency can be viewed as an $\ell_2$ norm of an $m$-dimensional constraint residual vector (the vector of the $m$ summands above), Carath\'eodory's theorem guarantees the existence of another distribution $\mu'$ supported on only $m+1$ points. We call these points $z$, and the distribution over them $\alpha$. Thus we refer to a pair $(z, \alpha) \in (\bool^n)^{m+1} \times [0,1]^{m+1}$ that has the same inconsistency as $\mu$ as a \emph{Carath\'eodory Witness}. In symbols: for every $\mu \in \Delta\bool^n$ there exist support points $z = (z_1, \dots, z_{m+1})$ with $z_j \in \bool^n$ and weights $\alpha \in \Delta[m+1]$ such that
\begin{equation*}
	\D_\PP(\mu_{z,\alpha}) = \D_\PP(\mu), \qquad \text{where } \mu_{z,\alpha}(\omega) \coloneqq \sum_{j=1}^{m+1} \alpha_j\, \mathds{1}[z_j = \omega].
\end{equation*}

However, Carath\'eodory's theorem only guarantees the existence of a sparse vector of \emph{real} weights; the remaining work, which is where our analysis departs from the classical small-support arguments (see \Cref{sec:related}), is to obtain a distribution $\mu''$ representable exactly, by rational numbers of polynomial bit-length.

To do this, we start by fixing the identities $z$ of the support points in the Carath\'eodory witness, and formulate the problem as a convex optimization problem, specifically a quadratic program: writing $V_z \in \Z^{m \times (m+1)}$ for the matrix whose $j$th column lists the constraint residuals of the single point $z_j$, rescaled to integers, the best weights for the support $z$ are those minimizing $\norm{V_z\alpha}_2^2$ subject to $\alpha \ge 0$ and $\sum_j \alpha_j = 1$.
Therefore, the square in our $\ell_2$ measure of inconsistency plays a critical role: it turns the stationary conditions of the optimization problem into a system of linear equations. Concretely, on the support of an optimal $\alpha$ the only active constraint is that the weights sum to one, and setting the gradient of the Lagrangian to zero gives the linear system
\begin{equation*}
	\begin{bmatrix}
		2 V_z^{\sf T} V_z      & \mathbf{1} \\
		\mathbf{1}^{\sf T} & 0
	\end{bmatrix}
	\begin{bmatrix}
		\alpha \\ \lambda
	\end{bmatrix}
	=
	\begin{bmatrix}
		\mathbf{0} \\ 1
	\end{bmatrix}
\end{equation*}
over the support, where $\lambda$ is the multiplier of the sum-to-one constraint (\Cref{eq:kkt}).
We then argue that the size of the coefficients in this linear system of equations is small by Hadamard's inequality, which bounds the magnitude of the determinant of a matrix; finally, we bound the denominator of the solution via Cramer's Rule.
In fact, the Verifier need not receive the weights at all: given the support, it solves this very system itself, exactly, and the determinant bounds now control its running time rather than the certificate's length. The certificate for \Cref{thm:inNP} is thus the support, $O(mn)$ bits, together with a single prime of $O(\log m + \log B)$ bits, which lets the Verifier certify that the system is non-singular before it solves, and seeds the exact solve.

Finally, the gapped case, which our Interactive PCP requires because its proof oracle must store the weights explicitly: the exact rational weights above may need $O(m(B + \log m))$ bits each, but rounding each weight to $O(\log(1/\epsgap) + \log m)$ bits perturbs the squared inconsistency by at most $\epsgap^2$ (\Cref{lem:approx-gap}), and the completeness gap leaves exactly this much room: $(\tau - \epsgap)^2 + \epsgap^2 \le \tau^2$. This is \Cref{claim:sparse-witness}, which yields the logarithmic-precision witness of \Cref{prop:near-linear-witnesses}.

\subsection{An IPCP for \texorpdfstring{$\MODELCONSIST$}{Model-Consistency}}\label{sec:mip:overview}
We first briefly recall some probabilistic proof systems. In an \emph{interactive proof} \cite{GoldwasserMR89}, a polynomial-time Verifier tosses coins and exchanges messages with an all-powerful but untrusted Prover,\footnote{Throughout this paper, the Prover is always written out in words; the symbol $P$ is reserved for the model's probability circuit.} then accepts or rejects; the requirement is that some Prover convinces it of every true claim (completeness), and that no Prover strategy convinces it of a false one, except with small probability (soundness). In a \emph{probabilistically checkable proof} (PCP) \cite{AroraS98,FortnowRS94} there is no interaction: the proof is a fixed string, possibly far too long to read in full, of which the Verifier inspects a few randomly chosen locations. An \emph{Interactive PCP} ($\IPCP$) \cite{KalaiR08} combines the two: the Prover first commits to a proof string (the \emph{proof oracle}), and the interactive proof is then carried out, during which the Verifier may also query the oracle (\Cref{def:ipcp}). The essential distinction is that the oracle is stateless: its answer at each location is fixed in advance, whereas the Prover answers each message having seen those that preceded it.

Next, the cast (\Cref{fig:two-settings}, right). The instance is the model itself: the circuits $(P,Q)$ and a tolerance $\tau$, all held by the polynomial-time Verifier, which can afford to evaluate the circuits at a point but not to enumerate their claims. A witness is a distribution $\mu$ attaining $\D_{P,Q}(\mu) \le \tau$. As the model can be viewed as making $2^{\Theta(\ell d)}$ implicit probabilistic claims, even the sparse witness of \Cref{sec:overview:np} is an exponentially long object, with each of its support points being an assignment to all $n = 2^d$ variables. Obviously, no polynomial-time Verifier can read such a witness in full, and it is thus encoded in a \emph{proof oracle} $\pi$: an exponentially long string of which the Verifier reads a few locations of its choosing, while the untrusted Prover assists, interactively, in reducing statements about all of $\pi$ to those few locations. In contrast to the PCP tradition, in which the string behind the oracle encodes a proof, here it encodes a \emph{distribution}.

The Verifier's task boils down to confirming a single inequality: that the squared inconsistency $\D^2(\mu)$ of the model $(P,Q)$ stays below the threshold $\tau^2$. It must do so given only the access above: point queries to $\pi$, which \emph{allegedly} encodes $\mu$, and point evaluations of the model. Reducing this inequality to queries to $\pi$ and evaluations of the model is the work of \Cref{sec:succinct}; \Cref{fig:srr-dag} depicts the reductions involved. For readability, this overview omits all normalization factors; they are restored in the full treatment of \Cref{sec:mip:protocol}.

It will be instructive to present the protocol as a sequence of randomized reductions: reducing a statement the Verifier cannot afford to check (a sum of $2^{\Theta(\ell d)}$ terms compared against a threshold), to statements that it can (evaluations of $\hat P$, $\hat Q$ and $\pi$). The Verifier could not carry out such a reduction on its own; instead, each step is performed in interaction with the untrusted Prover, and is sound in the sense that if the statement before the step is false then, with high probability over the Verifier's coins, so is the statement after it. The reduction used for the sums is the famous \SumCheck protocol \cite{LundFKN92}, viewed as a randomized reduction from the claim that an exponential sum of a low-degree polynomial equals a given value to the claim that the polynomial takes a given value at a single random point (\cf\ \cite{Mei13,CFS17}). \Cref{fig:srr-dag} depicts the protocol in this light: each arrow is one reduction, and the leaves are the only statements the Verifier checks directly.

Why should the proof oracle be an \emph{encoding} of that table, rather than the table itself? For two reasons. The quantities that the model's claims refer to are \emph{marginals} of $\mu$, each a sum of exponentially many of its entries; storing the table as (the evaluations of) low-degree polynomials over a large field turns each marginal into a sum of polynomial values, exactly the kind of statement that \SumCheck reduces to a few point queries. The Verifier must check, by reading a few locations and with the Prover's interactive help, that it is close to a valid encoding of \emph{some} distribution in the first place. The encoding that accomplishes both, which we term ``Reed--$\mu$ller,'' is presented in \Cref{sec:codecheck} as a self-contained ``library''; for this overview, we summarize its two capabilities:
\begin{itemize}
	\item $\VerifyEncoding$: an interactive protocol verifying that an alleged codeword is close to a valid one. Validity is defined so that every valid codeword encodes a genuine distribution, with nonnegative weights summing to one (\Cref{def:code}); passing the test thus certifies at once proximity to the code and that the encoded object is a distribution.
	\item $\VerifyMarginal$: an interactive protocol verifying any \emph{marginal} of the underlying distribution, namely, the mass it places on a partial assignment to the variables.\footnote{The protocol proper calls the field-valued variant $\VerifyMarginoid$, whose costs this overview quotes; the distinction is deferred to \Cref{sec:codecheck}.}
\end{itemize}

\begin{figure}[t]%
	\centering
	\begin{tikzpicture}[
			edge/.style={->, semithick, gray!60!black, shorten <=2pt, shorten >=2pt},
			elabel/.style={font=\scriptsize, fill=white, inner sep=1pt},
			resbox/.style={draw=gray!50, rounded corners=3pt, inner sep=5pt},
		]
		\node[srr, label={[font=\scriptsize, label distance=-1pt]above:the statement}] (STMT) at (3.15, 6.0) {$\norm{Q}_1 \D^2(\mu) \le \tau^2 \norm{Q}_1$};
		\node[srr] (SQ) at (0.5, 4.8) {$\norm{Q}_1$};
		\node[srr] (I) at (5.8, 4.8) {$\D^2(\mu)$};
		\draw[edge] (STMT) -- (SQ);
		\draw[edge] (STMT) -- (I);
		\node[srr] (Qhp) at (0.5, 2.8) {$\hat Q(\hat q')$};
		\node[srr] (Qh) at (2.3, 2.8) {$\hat Q(\hat q)$};
		\node (lp) at (3.1, 2.8) {$\cdot \Bigl( $};
		\node[srr] (MQ) at (4.65, 2.8) {$\mu[(\hat s{=}\hat b) \cap \hat t{=}1]$};
		\node (mi) at (6.1, 2.8) {$-$};
		\node[srr] (MC) at (7, 2.8) {$\mu[\hat s{=}\hat b]$};
		\node[srr] (Ph) at (8.4, 2.8) {$\hat P(\hat q)$};
		\node (rp) at (9.3, 2.8) {$\Bigr)^{\!2}$};
		\node[resbox, fit=(Qh)(lp)(MQ)(mi)(MC)(Ph)(rp)] (RES) {};
		\node[veval] (Q) at (1.55, -0.4) {$Q$};
		\node[minimum width=2.5cm, minimum height=0.3cm, inner sep=0] (pi) at (5.8, -0.4) {};
		\foreach \i in {0,...,9} \draw[black!60, fill=blue!8] ({4.55+\i*0.25},-0.55) rectangle ++(0.25,0.3);
		\foreach \i in {1,4,8} \draw[black!60, fill=blue!35] ({4.55+\i*0.25},-0.55) rectangle ++(0.25,0.3);
		\node[veval] (P) at (8.4, -0.4) {$P$};
		\node[font=\scriptsize, color=black!60, anchor=north] at (1.55, -0.95) {confidence circuit};
		\node[font=\scriptsize, color=black!60, anchor=north, align=center] at (5.8, -0.5) {proof oracle $\pi$};
		\node[font=\scriptsize, color=black!60, anchor=north] at (8.4, -0.95) {probability circuit};
		\draw[edge] (SQ) -- node[elabel, right=1pt] {$\SumCheck$} (Qhp);
		\draw[edge] (I) -- node[elabel, right=1pt] {$\SumCheck$} (I |- RES.north);
		\draw[edge] (Qhp) -- node[elabel, sloped, below=1pt] {direct eval.} (Q);
		\draw[edge] (Qh) -- node[elabel, sloped, below=1pt] {direct eval.} (Q);
		\draw[edge] (MQ.south) -- node[elabel, sloped, below=1pt] {$\VerifyMarginal$} (pi);
		\draw[edge] (MC.south) -- node[elabel, sloped, below=1pt] {$\VerifyMarginal$} (pi);
		\draw[edge] (Ph) -- node[elabel, sloped, below=1pt] {direct eval.} (P);
		\draw[edge] (3.2, -0.4) -- node[elabel, sloped, below=1pt] {$\VerifyEncoding$} (pi);
	\end{tikzpicture}
	\caption{\small The protocol as a chain of reductions: an arrow $E \to E'$ means that verifying $E$ reduces to verifying $E'$, interacting with the Prover via the sub-protocol labeling the arrow. The root is the statement the Verifier sets out to check; rounded boxes are intermediate expressions that get reduced further; and the bottom row holds the only objects accessed directly: the input circuits $P,Q$ (gray), which the Verifier evaluates itself, and the proof oracle $\pi$ (blue), written by the untrusted Prover and read at a few locations. The sub-protocols labeling the arrows ($\SumCheck$, $\VerifyEncoding$, $\VerifyMarginal$) are interactive, carried out with the Prover's help; ``evaluate'' the Verifier does alone. Hats mark field-valued objects.}\label{fig:srr-dag}
\end{figure}

Conceptually, the protocol has three parts: it checks that $\pi$ is close to a valid encoding of \emph{some} distribution ($\VerifyEncoding$); it obtains the marginals of that distribution to which the model's claims refer ($\VerifyMarginal$); and it verifies, by \SumCheck, that the model's aggregate inconsistency against those marginals stays below the threshold. \Cref{fig:srr-dag} depicts how the parts fit together. Reading it top-down, we proceed with a strawman that gets progressively closer to the full protocol.

\emph{Reducing the threshold $\norm{Q}_1$.} The Verifier compares $\D^2(\mu)$ against the threshold $\tau^2$, and the former requires dividing by $\norm{Q}_1 = \sum_q Q(q)$. Note that this is a sum over the set of all possible queries $\Qcal$, which is of exponential size in the model's input length.

The first \SumCheck reduces $\norm{Q}_1$ to an evaluation at a random point $\hat q' \in \F^{\log|\Qcal|}$, where $\F$ is a large finite field over which all of the protocol's arithmetic runs. Here and throughout, hats mark objects over $\F$: $\hat Q$ is the polynomial extending $Q$ to $\F$ (\Cref{fact:arith}), and the random point $\hat q'$ is a vector of field elements, so it has the ``shape'' of a query but need not describe an actual one:
\begin{equation*}
	\inlinesrr{\norm{Q}_1} = \sum_{q \in \Qcal} Q(q) \;\xmapsto{\SumCheck}\; \inlinesrr{\hat Q(\hat q')}.
\end{equation*}

\emph{Verifying the inconsistency $\D^2$.} The root of the figure is the squared inconsistency $\D^2(\mu)$, a sum over the query universe $\Qcal$ of an expression depending on the model $(P,Q)$ and the marginals of the distribution $\mu$.
Verifying such a sum is precisely what the \SumCheck protocol is designed for \cite{LundFKN92}, so one \SumCheck randomly-reduces it to a single uniformly random point $\hat q \in \F^{\log|\Qcal|}$:
\begin{gather*}
	\inlinesrr{\D^2(\mu)} = \sum_{q \in \Qcal} Q(q)\bigl(\mu[(s{=}b) \cap (t = 1)] - \mu[s{=}b] P(q)\bigr)^2 \\
	\Big\downarrow \SumCheck \\
	\inlinesrr{\hat Q(\hat q)}\Bigl(\inlinesrr{\mu[(\hat s{=}\hat b) \cap \hat t{=}1]} - \inlinesrr{\mu[\hat s{=}\hat b]}\inlinesrr{\hat P(\hat q)}\Bigr)^2.
\end{gather*}
Here $\hat s, \hat b, \hat t$ denote the blocks of the point $\hat q$ that, in a Boolean query $q = (s_1, b_1, \dots, s_\ell, b_\ell, t)$, would hold the context variable descriptions, the context bits, and the target. Four quantities are left at $\hat q$: the marginals $\mu[\hat s{=}\hat b], \mu[(\hat s{=}\hat b) \cap (\hat t{=}1)]$ and the model values $\hat P(\hat q), \hat Q(\hat q)$. We account for each in turn.

\emph{The marginals $\mu[\hat s{=}\hat b], \mu[(\hat s{=}\hat b) \cap (\hat t{=}1)]$.} The first is the mass the encoded distribution places on the context read off $\hat q$, and the second extends that context by requiring the target be true; that is, each is a marginal of $\mu$. The marginal check $\VerifyMarginal$ (the second capability of the encoding) verifies each, reducing it to $O(\ell)$ point queries on the proof oracle:
\begin{equation*}
	\inlinesrr{\mu[\hat s{=}\hat b]},\ \inlinesrr{\mu[(\hat s{=}\hat b) \cap (\hat t{=}1)]} \;\xmapsto{\VerifyMarginal}\; \text{$O(\ell)$ point queries to } \inlineoracle{\pi}.
\end{equation*}

\emph{The model values $\hat P(\hat q), \hat Q(\hat q), \hat Q(\hat q')$.} These are the input circuits $P, Q$, viewed as the polynomials they compute (\Cref{fact:arith}), at the \SumChecks' random points. The Verifier evaluates each itself, by running the circuit over $\F$ at a single point---no interaction with the Prover:
\begin{equation*}
	\inlinesrr{\hat P(\hat q)},\ \inlinesrr{\hat Q(\hat q)},\ \inlinesrr{\hat Q(\hat q')} \;\xmapsto{\Cref{fact:arith}}\; \text{evaluations of the circuits } \inlineveval{P}, \inlineveval{Q}.
\end{equation*}

\emph{Validity of the encoding.} Every reduction above reads the proof oracle $\inlineoracle{\pi}$ as a genuine $\Code$-encoding of a witnessing distribution, but a cheating Prover could instead supply an arbitrary string. The \emph{encoding check} ($\VerifyEncoding$), run once at the outset, certifies that $\pi$ is close to a genuine codeword. It is the first capability of the encoding, and the one global check in the protocol: it reduces to nothing further.

All together, the protocol runs the encoding check once; two \SumChecks for $\D^2$ and $\norm{Q}_1$; two marginal checks for the marginals at $\hat q$; and three direct circuit evaluations for $\hat P(\hat q), \hat Q(\hat q), \hat Q(\hat q')$ (\Cref{fact:arith}).\footnote{We think of these as two evaluations of the model, one at $\hat q$ and the other at $\hat q'$.} Each sub-protocol runs with its own fresh randomness, and the oracle $\pi$ is read only through $\VerifyEncoding$ and $\VerifyMarginal$. The honest proof oracle is constructed in \Cref{sec:mip:witness}, and the full protocol is given as \Cref{alg:full-protocol} and analyzed in \Cref{sec:mip:protocol}; \Cref{sec:delegation} then improves the efficiency of the direct evaluations by delegating them to the Prover. This gives an explicit route by which a model could, in principle, prove its own probabilistic consistency; \Cref{rem:spmip} elaborates.

\begin{remark}[Learning the Prover?]\label{rem:spmip}
	An $\IPCP$ can be viewed as a \emph{stateless} prover (the proof oracle, whose answers depend only on the query location) and a \emph{stateful} prover (the interactive Prover). These roles can be realized by a single model in a scenario where a user (Verifier) attempts to ascertain the probabilistic consistency of a (large language) model with which it is interacting. Namely, when training the model to act as a Prover \cite{AmitGPR25}, a single Self-Proving model $P_\theta$ can play both roles through its state or ``context'': the Verifier implements the proof oracle by resetting $P_\theta$'s context before each query (\cf\ \emph{backtracking} \cite{backtracking2,backtracking3}), so that an answer cannot depend on the queries that preceded it, and implements the interactive Prover by letting the context persist across rounds. And of course, a proof oracle is equivalent to non-communicating provers \cite{FortnowRS94}, which is how an $\IPCP$ \cite{KalaiR08} formally is a type of multi-prover interactive proof system \cite{BabaiFLS91}.
\end{remark}

\section{Related work}\label{sec:related}
\paragraph{Verifying distribution properties.} Verifying (or testing) consistency of predictive models can be viewed as an inverse problem to that of verifying properties of distributions. This is because in the latter settings, the Verifier is given access to samples from a distribution and is tasked with certifying whether a global property of this distribution holds. In the case that the property can be cast as a (possibly huge) set of probabilistic claims (constraints on the distribution's marginals), this setting can be viewed as inverse to ours: in that setting access is given to the distribution and the task is to determine whether a set of claims hold, whereas in ours access is given to the claims (or a predictive model) and the task is to determine whether there exists a distribution consistent with them.

In more detail, Interactive Proofs (IPs, \cite{GoldwasserMR89}) have been adapted to verify properties of distributions \cite{IPD0}, and this setting has since been studied extensively, showing that a rich family of properties can be verified, including, \eg, label-invariant properties \cite{IPD1,IPD2,IPD3,IPD4,IPD5,IPD6}. In the context of machine learning \cite{IPPAC0} introduced the IP-PAC systems, in which the Verifier certifies the accuracy of a given learned model with respect to a distribution (see also \cite{IPPAC1,IPPAC2}).

Without the aid of a prover, \cite{DistributionTesting1,TestingDistributions2} studied property testing \cite{RubinfeldS96,GoldreichGR98} for distribution properties, wherein a Tester is given access to samples and determines a property of the distribution from which they were drawn (see survey \cite{Canonne22}). Of this rich line of work, particularly relevant is when the tester can draw samples from marginals of the distribution in what is known as \emph{subcube-conditioned sampling} \cite{Subcube1,Subcube2,Subcube3,Subcube4,Subcube5}: in the Boolean setting, a \emph{context} $x \in \{0,1,*\}^n$ specifies a subcube, and the claim ``$\Pr[y =1\mid x] = p$'' is the (alleged) expectation of the $y$th coordinate conditioned on this subcube.

\paragraph{Probabilistic Consistency.}
The pursuit of probabilistic consistency has been salient since the development of probability itself \cite{Keynes1921,Ramsey1931}.
\cite{Boole1854} posed it as the search for the ``conditions of possible experience'',
and \emph{Dutch book arguments}, beginning with the coherence theorem of de Finetti \cite{deFinetti37}, contend that a bettor who fails to be probabilistically consistent can be easily exploited.
The standard axioms of probability, due to Kolmogorov \cite{kolmogorov50}, are intended to ensure that a probability measure---which without constraints would be simply a function that assigns numbers in $[0,1]$ to events in a $\sigma$-algebra---behaves as one would expect.

The complexity of probabilistic satisfiability has been studied in several settings \cite{Nilsson86,GeorgakopoulosKP88,FaginHM90,Lukasiewicz01,KollerM94}.
The question of whether $m$ explicit unconditional probability claims can be jointly satisfied (\textsf{PSAT}) was posed by \cite{Nilsson86}, who recast it as a linear program; 
\cite{GeorgakopoulosKP88} demonstrate $\NP$-hardness already at two literals per clause,
    and argue containment in $\NP$ is a consequence of the fact that every linear program with $m$ constraints (corresponding to $m$ unconditional claims) has a solution supported on at most $m+1$ points.
\cite{FaginHM90} study a variant with inequalities, and bound the binary representation of the witnesses, formally placing the exact consistency problem for rational weights in $\NP$ under the standard model of computation (their Sec.~2.4).
\cite{GeorgakopoulosKP88,FaginHM90} both remark in passing that the construction can be extended to handle explicit conditional claims of the form $\Pr[y\mid x] = p$ that we use (essentially our $\EXPCONSIST$ problem restricted to the case of $\tau=0$).
Our proof of \Cref{thm:inNP} is related to this classical approach, except that we operate in an inconsistency-tolerant setting ($\tau \ge 0$) whose objective is not merely a constraint residual of a linear program, and at a precision that requires us to be more careful about certificate length and introduce a gapped analogue.

Tolerance of inconsistency ($\tau > 0$) has received less attention, at least until recently. 
Kilian and Naor \cite{KilianN95} call a set of statistics $\eps$-inconsistent if every distribution violates one of them by at least $\eps$, and prove that distinguishing consistent from $\eps$-inconsistent statistics is $\NP$-hard for a constant $\eps$; however, they do not upper bound its complexity. 
\cite{potyka2014linear} proposes a family of $\ell_p$-based inconsistency measures, including our $\D$ for $p=2$, but only places its complexity in $\EXP$. This kind of inconsistency minimization is useful for database consolidation \cite{potyka2014consolidation}.
These $\ell_p$ measures are surveyed in \cite{debona2018inconsistency}, where they are connected to the degree of vulnerability to Dutch books; the entropy-based divergences of \cite{RichardsonH21} are a separate line.
\cite{richardson2023inference} show that approximating a notion of inconsistency closely related to ours is \textsf{\#P}-hard.

In machine learning, probabilistic consistency appears not as a property to be decided but as one to be trained for \cite{xu2018hierarchical,ZhuPIE17,jeong2019consistency,kim2020structured,richardson2026local}.
Systems are trained to ensure consistency between different representations, in both generative \cite{ZhuPIE17} and semi-supervised settings \cite{kim2020structured}.
Ensuring consistency between forward and backward probabilistic trajectories is a key idea behind Generative Flow Networks \cite{bengio2023gflownet}. This has led to the \emph{Scientist AI} project, which aims to generalize the techniques used to train GFlowNets for consistency to systems of exponentially more variables \cite{ScientistAI,fornasiere2026scientist}---which is precisely the primary problem we tackle in this paper.
It has been argued that all of these results fit together as part of a bigger theory of machine learning as probabilistic modeling that is tolerant to inconsistency
\cite{richardson2024unified,Richardson22}.

\paragraph{The Integer Carath\'eodory theorem.} Previous work studied ``Integer Carath\'eodory theorems'' \cite{CookFonluptSchrijver1986,Sebo90,EisenbrandS06}, in which any integer point in the conic hull of a Hilbert basis (or a finite set of integer vectors) is representable as an integer conic combination supported on linearly many basis vectors. Our bounds (\cref{sec:NP}) can be viewed as complementary in that they give a sparsity bound for \emph{rational convex} combinations of Boolean vectors, with explicit control on the bit-length of the coefficients. To the best of our understanding, our results fundamentally cannot be derived from the existing literature because, essentially, clearing denominators to reduce our setting to theirs would blow up coefficient magnitudes exponentially in the input precision, which would defeat our goal of a polynomial-length certificate. Also related is literature on ``approximate Carath\'eodory theorems'' \cite{Barman15,MirrokniLVW17}, however these results bound the $\ell_p$-approximation error of a sparse convex combination at a fixed support size, rather than the bit-complexity of an (almost-)exact stochastic representation.

\paragraph{Interactive Proofs and AI.} IPs have been demonstrated to hold theoretical and empirical promise as a way of establishing trust in learned models (``AI Safety''). \cite{irving2018ai} proposes using Debate Systems (with two competing provers, \cf\ \cite{CondonFLS93,FeigeK97}), later extended to the double-efficient setting in \cite{DEDebate1} and studied further in \cite{DEDebate2,DEDebate3}. \cite{AnilZWG21} introduced \emph{Prover--Verifier Games}, in which a non-interactive ($\NP$-)Verifier is learned jointly with a Proof-generating model. This setting was used in large-scale experiments by OpenAI in \cite{KirchnerCELMB24}, and generalized to that of Interactive Proofs (including MIPs and Zero Knowledge proof systems) in \cite{HammondD24}. Relatedly, \cite{MAClassifier2} studied a variant (\emph{MA-Classifiers}) towards obtaining more interpretable model outputs.

\cite{AmitGPR25} introduced \emph{Self-Proving models}, which are trained to act as Provers in a prescribed IP, that is, generate proofs that convince a given Verifier. This approach, which formalizes earlier empirical work (\eg, \cite{PoluS20}), can be viewed as complementary to our main result: We show how a proof oracle and a prover can convince a verifier of the consistency of a model; future work can explore whether both roles can be learned as Self-Proving models themselves. The two roles differ only in statefulness---the proof oracle is a \emph{stateless} prover, the interactive prover a \emph{stateful} one---so a single ``generalist'' model can play both: its ``memory'' is reset before each oracle query and persists across the interactive rounds (\cf\ \Cref{rem:spmip} and recent empirical works \cite{ParkOCMLB23,Du00TM24}).

\paragraph{Sum-of-squares and pseudodistributions.} The Sum-of-Squares (SoS) hierarchy of \cite{Nesterov00,Lasserre01,Parrilo03}, and the pseudo-distribution viewpoint \cite{BarakBHKSZ12,BarakS14} share a conceptual resemblance to our setting, in that their goal is to certify that a system of constraints admits a (perhaps approximate) probability-distribution solution. There are significant technical differences, however: A degree-$d$ pseudo-distribution is a linear functional on degree-$d$ polynomials with no guarantee of extending to a genuine distribution, and the relevant parameter is the polynomial degree $d$ with a semidefinite program of size exponential in $d$; more importantly, the SoS hierarchy prescribes algorithms for \emph{deciding} approximate consistency, whereas we offer algorithms for \emph{verifying} a (different notion of) approximate consistency. Namely, our setting instead measures inconsistency by $\ell_2$ distance from a genuine distribution, parameterizes by a tolerance $\tau$ (and an additive gap $\eps$), and our verification methods utilize an auxiliary proof or interactive proof system.

\section{Conventions and definitions}
For a natural number $m \ge 1$, we use $[m]:= \{1, \ldots, m\}$ for the set of positive integers at most $m$. All logarithms are base $2$. We write $\poly(\cdot)$ for an unspecified polynomial in its arguments, and $\tilde O(\cdot)$ suppresses factors polylogarithmic in the argument.

For a finite set $X$, we write $\Delta X$ for the set of all distributions $\nu$ over the elements of $X$, and use the same symbol ($\nu$, in this case) for both:
\begin{enumerate}[label=(\alph*)]
	\item the simplex point (or probability mass function) $\nu : X \to [0,1]$ satisfying $\sum_{x \in X} \nu(x) = 1$, which uniquely corresponds to
	\item the probability measure $\nu : 2^{X} \to [0,1]$ that assigns a numerical probability between zero and one to events $E \subseteq X$,
	      satisfies the Kolmogorov Axioms of probability
	      (\ie, $\nu(X) = 1$, and $\nu(A \cup B) = \nu(A) + \nu(B)$ for all disjoint $A,B \subseteq X$),
	      agrees with the simplex form (a) on singletons, and is extended to larger subsets by $\nu(E) = \sum_{x \in E} \nu(x)$.
\end{enumerate}
In particular, the set of joint distributions on the $n$-dimensional hypercube
is denoted $\Delta\bool^n = \Delta2^n$. Here and throughout, $2^n$ for a \emph{number} $n$ denotes the hypercube $\bool^n$, whereas $2^X$ for a \emph{set} $X$ denotes its power set, as in (b) above.

\paragraph{Variables and partial assignments.}
Throughout this paper, $n \in \N$ will be used to refer to a number of Boolean variables.
A (joint) \emph{assignment} to all variables is therefore a point $\omega \in \bool^n$ on the $n$-dimensional hypercube.
We are especially interested in the restricted set of events on the hypercube, namely those with that can be described through partial assignments to variables.

A \emph{partial assignment} is $x \in \{0,1,\ast\}^n$, where $x_i = \ast$ should be interpreted as having the $i$th variable be unassigned in $x$.
Assignment $\omega \in \bool^n$ \emph{agrees} with $x$, written $w \models x$, if, for all $i \in [n]$, $x_i \in \{w_i,\ast\}$; that is, $w$ and $x$ agree on all non-$\ast$ entries of $x$.
Thus, we allow ourselves to implicitly regard the partial assignment $x$ as the event $\{ \omega \in 2^n : \forall i \in [n].~x_i \in \{ \omega_i, *\}\}$ when discussing its probability.
We will often be interested in a \emph{target event} of the form $y=1$ for some variable index $y\in[n]$, which we formally regard as the event
$\{ \omega \in 2^n  : \omega_y = 1\}$.

\paragraph{Machines and oracles.}
Our model of computation is the Turing machine, and the running times of the combinatorial algorithms of \Cref{sec:NP} are counted in machine steps, that is, bit operations. An \emph{oracle machine} $M^f$ is a machine $M$ with query access to a function $f$: it may write down an input $x$ and read back $f(x)$, at unit cost per query. This is the access model behind every oracle in this paper, the proof oracle of the Interactive PCP (\Cref{def:ipcp}) in particular. The superscript names the oracle: $\VerifyEncoding_\F^{Z,A}$, for example, has query access to the pair of functions $(Z,A)$.

Circuits are Boolean unless stated otherwise, and the \emph{size} $|C|$ of a circuit $C$ is its number of gates, inputs and outputs included. When a circuit is an explicit \emph{input} to an algorithm, as the predictive model $(P,Q)$ is to the Verifier of \Cref{sec:succinct}, the input length is $\Theta(|P|+|Q|)$; a polynomial-time Verifier may therefore run in time polynomial in $|P|+|Q|$, which in particular dominates the circuits' own input length $\ell(d+1)+d$ and output precision $B$.

\paragraph{Precision of rational numbers.}
Except where noted otherwise, all numbers in the paper are assumed to be rational. A non-trivial part of our techniques is devoted to controlling the errors introduced when doing arithmetic over rational numbers, and so we must explicitly deal with how these numbers are represented.

Throughout this paper we will say that a rational number $q \in [0,1]$ is \emph{specified to precision $B \in \N$} to mean that $q$ is represented by a numerator $q^\dagger$ as a $B$-bit string, so that $q = q^\dagger / 2^B$. For example, if we say that a Turing machine takes as input a rational number $q \in [0,1]$ to precision $B$, then it takes the $B$-bit string $q^\dagger$. The convention extends to collections: when several rationals $q_1,\dots,q_k$ are specified to a common precision $B$ (as are, \eg, the claims of a CPC and the weights of our witnessing distributions), the machine takes just the numerators $(q^\dagger_1,\dots,q^\dagger_k)$, the shared denominator $2^B$ being given once. In any case, when analyzing running time or proof lengths, the description will be made explicit.

\paragraph{Finite fields and polynomials.}
For a field $\F$ of odd characteristic we let $2^{-1}$ denote the inverse of $1+1$ in $\F$. For an $n$-variate polynomial $\varphi \in \F[X_1,\dots,X_n]$, the \emph{total degree} of $\varphi$ is the maximum, over its monomials, of the sum of the variables' exponents, and its \emph{individual degree} is the largest exponent with which any single variable appears in it.

Working over a prime field, we identify the elements of $\F$ with the integers $\{0,\dots,|\F|-1\}$. An (in)equality asserted \emph{over $\Z$} compares these integer representatives as integers, rather than as field elements; unlike its over-$\F$ counterpart, such an assertion is sensitive to wraparound, and controlling it is what the field-size hypotheses of \Cref{sec:succinct} are for.

Running times in \Cref{sec:codecheck,sec:succinct} are counted in \emph{field operations} (additions, multiplications, inversions, and samples of a uniformly random element of $\F$), each of which costs $\poly(\log |\F|)$ bit operations. Finally, we say that a random vector is \emph{marginally uniform} on a set of coordinates if each of those coordinates, viewed on its own, is uniformly distributed; nothing is implied about their joint distribution.

\paragraph{Pseudocode.} We write $x \gets e$ for the assignment of the value $e$ to $x$; inside a procedure call, $f(p \gets e)$ runs $f$ with the named parameter $p$ set to $e$. We write $a \qeq b$ for an equality the Verifier asserts, rejecting if it fails. ``Expect $v$ from the Prover'' receives a message from the (untrusted) Prover and names it $v$; nothing is assumed of $v$ beyond its type. A sub-protocol invoked by the Verifier runs on the Verifier's own fresh randomness, and ``reject if it rejects'' propagates its verdict.

\subsection{Collections of Probabilistic Constraints (CPCs) and their inconsistency}
We now formally define our object of study: a collection of probabilistic claims (CPC). When specified explicitly, this is the key data of an  $\EXPCONSIST$ instance; when specified succinctly, it is the key data of a $\MODELCONSIST$ instance.

\begin{definition}[CPC specified to precision $B$]
	Let $B, n \in \N$. A \emph{probabilistic claim} specified to precision $B$ is a tuple $(x,y,p)$ where $x \in \{0,1,*\}^n$ is the \emph{context}, $y \in [n]$ is the \emph{target}, and $p \in [0,1] \cap \frac{1}{2^B} \Z$ can be represented as $p^\dagger / 2^B$ for some integer $p^\dagger \in \{0,\dots,2^B-1\}$.
	A \emph{Collection of Probabilistic Constraints (CPC)} specified to \emph{precision $B \in \N$} is a multi-set
	\unskip\footnote{We will still use $P \in \PP$ to denote membership in the multi-set, and use $|\PP|$ to denote its size (sum of multiplicities).}
	$\PP$ of probabilistic claims all specified to precision $B$. We will commonly use $m = |\PP|$ to refer to the number of claims, \ie, elements in the multi-set $\PP$. The length of $\PP$ in bits is
	$m (n\log_2(3) + \lceil \log_2 n \rceil + B) \in O(mn + mB)$.
\end{definition}

\begin{definition}[Approximate consistency and $\EXPCONSIST$]
	Let $\PP$ be a CPC of $m$ claims over $n$ variables, and let $\mu \in \Delta\bool^n$. We define the (normalized, $\ell_2$-)\emph{inconsistency} of $\mu$ with $\PP$ to be
	\begin{equation}
		\D_\PP(\mu) \coloneqq
		\sqrt{\frac{1}{|\PP|}\sum_{(x,y,p) \in \mathcal P} \big(
			\mu(x \cap y) - p \cdot \mu(x)
			\big)^2},
	\end{equation}
	and the \emph{inconsistency} of the CPC instance $\PP$ to be $\D(\PP) \coloneqq \inf_{\mu \in \Delta\bool^n} \D_\PP(\mu)$.
	For a given tolerance $\tau \ge 0$, we say that $\PP$ is \emph{$\tau$-consistent} if $\D(\PP) \leq \tau$.
	We let $\EXPCONSIST$ denote the set of all pairs $(\PP,\tau)$ such that $\PP$ is $\tau$-consistent.
\end{definition}

\subsection{Predictive models and their consistency}\label{sec:def:model}
In this paper, a \emph{(predictive) model} is a computer program that takes as input a description of events $x$ and $y$ and outputs a number to be interpreted as the conditional probability of $y$ given $x$. A common example is neural predictive models, which are neural network architectures instantiated with a set of weights, and claims described by a sequence of natural language sentences.

It is useful to allow the model to output a \emph{confidence score} in addition to its prediction. In particular, a zero confidence score allows the model to abstain from making a prediction all together, a capability whose utility is well-studied \cite{GoldwasserKKM20a,GoldwasserKKM20b,KalaiK21a,KalaiK21b,GoelHMS23}.

\begin{definition}[Predictive model]
	Fix a \emph{variable description length} $d \in \N$, a \emph{context length} $\ell \in \N$, and a \emph{precision} $B \in \N$, and write $n \coloneqq 2^d$ for the number of variables these parameters address, indexed by their $d$-bit descriptions: $\omega_z$ for $z \in \bool^d$. A \emph{predictive model} is a pair of Boolean circuits $(P,Q)$, where $P$ is the \emph{probability circuit} and $Q$ is the \emph{confidence circuit}, each taking a query as input and outputting $B$ bits. A \emph{query} is a string
	\begin{equation*}
		q = (s_1, b_1, \dots, s_\ell, b_\ell, t) \in \Qcal, \qquad \Qcal \coloneqq \bool^{\ell(d+1)+d},
	\end{equation*}
	where $\Qcal$ is the \emph{query universe}. Its components are the \emph{variable descriptions} $s_1,\dots,s_\ell \in \bool^d$ of the context, the corresponding \emph{context bits} $b_1,\dots,b_\ell \in \bool$, and the \emph{target variable description} $t \in \bool^d$. On a query $q$, the probability circuit outputs $P(q) \in \{0,\dots,2^{B}-1\}$, interpreted as an (alleged) probability $P(q) / 2^{B}$, and the confidence circuit outputs an integer $Q(q) \in \{0,\dots,2^{B}-1\}$.
\end{definition}

Fix a distribution $\mu$ over $\bool^{n}$. For a context $(s_1,b_1,\dots,s_\ell,b_\ell)$ we abbreviate by $s = b$ the event that every context literal holds, that is, $\omega_{s_k} = b_k$ for all $k \in [\ell]$, and use square brackets $\mu[s = b]$ for its probability under $\mu$---note the square brackets:
\begin{equation*}
	\mu[s = b] \coloneqq \mu\left(\left\{
	\omega \in \bool^{n} : \omega_{s_k} = b_k \text{ for all } k \in [\ell]
	\right\}\right).
\end{equation*}
Writing $t = 1$ for the target literal $\omega_t = 1$, we similarly write $\mu[(s = b) \cap (t = 1)]$ for the probability of the context event together with $\omega_t = 1$.

\begin{definition}[$\tau$-consistent predictive models]
	Let $(P,Q)$ be a model with variable description length $d$ and context length $\ell$, and let $\mu$ be a distribution over $\bool^{n}$. Let $\norm{Q}_1 = \sum_{q \in \Qcal} Q(q)$ denote the total confidence the model places over all queries.\footnote{We treat $Q$'s outputs as unsigned (nonnegative) integers, so $\norm{Q}_1 = \sum_{q}|Q(q)|$ is the $\ell_1$-norm of $Q$ over its inputs.}
	\begin{align*}
		\D_{P,Q}(\mu) & \coloneqq \sqrt{
			\frac{1}{\norm{Q}_1}
			\sum_{q = (s_1,b_1,\dots,s_\ell,b_\ell,t) \in \Qcal}
			Q(q)
			\left(
			\mu[(s = b) \cap (t = 1)] - \frac{P(q)}{2^{B}} \mu[s = b]
			\right)^2
		}
		\\
		\D_{(P,Q)}    & \coloneqq \inf_{\mu \in \Delta\bool^{n}} \D_{P,Q}(\mu).
	\end{align*}
	For a tolerance \emph{$\tau$}, we say that $(P,Q)$ is \emph{$\tau$-consistent} if $\D_{(P,Q)} \leq \tau$, and define the set $\MODELCONSIST$ comprising all $(P,Q,\tau)$ such that $(P,Q)$ is $\tau$-consistent.
\end{definition}

\subsection{Aside: Models as implicit CPCs}
The two inconsistency operators defined above deliberately share the symbol $\D$: a predictive model is a succinct description of a collection of claims, and its inconsistency is the inconsistency of that collection. We formalize this next.

\begin{definition}[Implicit CPC]\label{def:implicit-cpc}
	Let $(P,Q)$ be a predictive model. Call a query $q = (s_1, b_1, \dots, s_\ell, b_\ell, t)$ \emph{conflicting} if there are indices $k \ne k' \in [\ell]$ with $s_k = s_{k'}$ yet $b_k \ne b_{k'}$. The \emph{implicit CPC} $\PP_{P,Q}$ is the multiset of claims over $n = 2^d$ variables obtained as follows. For each $q \in \Qcal$: if $q$ is non-conflicting, let $x_q \in \{0,1,\ast\}^{n}$ be the partial assignment with $(x_q)_{s_k} = b_k$ for all $k \in [\ell]$ and $\ast$ in all other entries, and add $Q(q)$ copies of the claim $\bigl(x_q,\omega_t,\ P(q)/2^B\bigr)$; if $q$ is conflicting, add $Q(q)$ copies of the \emph{tautological} claim $\bigl(x_q, \omega_t, 0\bigr)$, where $x_q$ sets $(x_q)_t = 0$ and $\ast$ in all other entries.
\end{definition}

\begin{claim}\label{claim:implicit-cpc}
	For every distribution $\mu$ over $\bool^n$, $\D_{\PP_{P,Q}}(\mu) = \Incof(P,Q;\mu)$: the inconsistency of the implicit CPC, as a CPC, equals the inconsistency of the model, as a model. In particular, $(P,Q)$ is $\tau$-consistent if and only if $\PP_{P,Q}$ is.
\end{claim}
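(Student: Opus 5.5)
The plan is to unfold both definitions and compare them term by term, query by query. By definition of the multiset $\PP_{P,Q}$, its size is $|\PP_{P,Q}| = \sum_{q \in \Qcal} Q(q) = \norm{Q}_1$, since each query $q$ contributes exactly $Q(q)$ copies of a single claim. This matches the normalizing factor on both sides. The sum over claims in $\D^2_{\PP_{P,Q}}(\mu)$ then regroups as $\sum_{q} Q(q)\,(\mu(x_q \cap y_q) - p_q\,\mu(x_q))^2$. So it suffices to show, for each $q$, that the claim's residual equals the model summand
\begin{equation*}
\mu[(s=b)\cap(t=1)] - \tfrac{P(q)}{2^B}\,\mu[s=b].
\end{equation*}
Both sides are nonnegative square roots of equal quantities, so equality of the squares gives equality of $\D$. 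The degenerate case $\norm{Q}_1 = 0$ (empty CPC) must be treated by whatever convention the paper adopts for both sides.

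For a non-conflicting $q$, the partial assignment $x_q$ is well defined, because repeated indices $s_k = s_{k'}$ carry equal bits. As an event, $x_q$ is exactly $\{\omega : \omega_{s_k} = b_k\ \forall k\}$, which is the event $s=b$. Likewise, the target $\omega_t$ is the event $t=1$. With $p = P(q)/2^B$, the residuals coincide verbatim.

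For a conflicting $q$, the event $s=b$ is empty, since no $\omega$ can have $\omega_{s_k}$ equal to both $b_k$ and $b_{k'}$. Hence both $\mu[s=b]$ and $\mu[(s=b)\cap(t=1)]$ vanish, and the model summand is $0$ regardless of $P(q)$. On the CPC side, the tautological claim has context $x_q$ with $(x_q)_t = 0$, so $x_q \cap \{\omega_t = 1\} = \emptyset$ and $p = 0$. Its residual is therefore $0 - 0\cdot\mu(x_q) = 0$ as well.

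Summing over $q$ and dividing by $\norm{Q}_1$ gives $\D_{\PP_{P,Q}}(\mu) = \Incof(P,Q;\mu)$ for every $\mu$. Taking the infimum over $\mu \in \Delta\bool^n$ yields $\D(\PP_{P,Q}) = \D_{(P,Q)}$, and hence the $\tau$-consistency equivalence. The only step needing care is the conflicting case, namely checking that the tautological claim really has zero residual for all $\mu$. The rest is bookkeeping of multiplicities.
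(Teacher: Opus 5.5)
Your proposal is correct and follows essentially the same route as the paper's proof. You match $|\PP_{P,Q}| = \norm{Q}_1$, and you show that conflicting queries contribute zero on both sides via the tautological claim with $(x_q)_t = 0$; for non-conflicting queries you identify $\mu(x_q)$ and $\mu(x_q \cap \omega_t)$ with $\mu[s=b]$ and $\mu[(s=b)\cap(t=1)]$. Your remarks that $x_q$ is well defined under repeated indices, and that the consistency equivalence follows by taking the infimum, are small refinements the paper leaves implicit.
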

\begin{proof}
	By construction, $|\PP_{P,Q}| = \sum_q Q(q) = \norm{Q}_1$, and the claims group by the query that produced them: each $q$ contributes $Q(q)$ copies of one claim. For a conflicting $q$ the context event is empty, so $\mu[s=b] = \mu[(s=b) \cap (t=1)] = 0$ and $q$ contributes $0$ to the model sum; its tautological claim contributes $\mu(x_q \cap \omega_t) - 0 \cdot \mu(x_q) = 0$ to the CPC sum, since $(x_q)_t = 0$. Both are weighted $Q(q)$, so $|\PP_{P,Q}| = \sum_q Q(q) = \norm{Q}_1$ is unchanged and the two sums agree term by term. For any query $q = (s_1,b_1,\dots,s_\ell,b_\ell,t)$, the claim $(x_q, \omega_t, P(q)/2^B)$ has $\mu(x_q) = \mu[s = b]$ and $\mu(x_q \cap \omega_t) = \mu[(s = b) \cap (t = 1)]$. Therefore
	\begin{align*}
		\D_{\PP_{P,Q}}(\mu) & = \sqrt{\frac{1}{|\PP_{P,Q}|}\sum_{(x,t,p) \in \PP_{P,Q}} \bigl(\mu(x \cap y) - p \cdot \mu(x)\bigr)^2}                                                      \\
		                    & =\sqrt{\frac{1}{|\PP_{P,Q}|}\sum_{(x_q,\omega_t,P(q)/2^B) \in \PP_{P,Q}}Q(q) \bigl(\mu(x_q \cap \omega_t) - \frac{P(q)}{2^B} \cdot \mu(x_q)\bigr)^2}         \\
		                    & = \sqrt{\frac{1}{\norm{Q}_1}\sum_{q \in \Qcal} Q(q) \Bigl(\mu[(s = b) \cap (t = 1)] - \frac{P(q)}{2^B} \cdot \mu[s = b]\Bigr)^2} = \Incof(P,Q;\mu). \qedhere
	\end{align*}
\end{proof}

In light of \Cref{claim:implicit-cpc}, we use $\D$ for both operators without further comment.

\section{Verifying consistency of probabilistic claims}\label{sec:NP}

This section is concerned with establishing the following two results.

\begin{proposition}
	\label{prop:near-linear-witnesses}
	The Verifier $V$ of \Cref{alg:verifier}, which takes as input a collection of probabilistic claims $\PP$ specified to precision $B$, a tolerance $\tau\ge 0$ and gap $\epsgap > 0$, has the properties of:
	\begin{itemize}

		\item \emph{Soundness}:
		      if $\D_\PP(\mu) > \tau$ for all $\mu \in \Delta\bool^n$, then $V(\PP,\tau,\pi) = \rej$ for all proofs $\pi$.
        \item \emph{Completeness} (up to $\epsgap)$: If $\D(\PP) \le \tau - \epsgap$, then there exists a proof certificate $\pi$ such that $V(\PP, \tau, \pi) = \acc$. Moreover, $\pi$ satisfies:
        \item \emph{Efficiency}:
		      $V$ runs in time $O(m^2(n + B \log(\nicefrac m \epsgap)))$.
		\item \emph{Near-linear length and logarithmic certificate precision}:
        accepting proofs $\pi$ consist of at most $m+1$ pairs $(z,\alpha)$ with $z \in \bool^n$ and $\alpha \in [0,1]$ specified in binary to precision $B_{\epsgap} \in O(\log(\nicefrac m \epsgap))$---thus
        satisfying $\mathrm{length}(\pi) \in O(m(n + \log(\nicefrac m \epsgap)))$.
		
	\end{itemize}
\end{proposition}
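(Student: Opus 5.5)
The plan is to fix the Verifier of \Cref{alg:verifier} to do the obvious thing: parse $\pi$ as at most $m+1$ pairs $(z_j,\alpha_j)$ with $\alpha_j = \alpha_j^\dagger/2^{B_{\epsgap}}$, check $\sum_j \alpha_j^\dagger = 2^{B_{\epsgap}}$ (so $\mu_{z,\alpha}$ is a genuine distribution), compute for every claim $(x,y,p)\in\PP$ the residual $r_{(x,y,p)} = \sum_j \alpha_j\bigl(\mathds{1}[z_j\models x \wedge (z_j)_y=1] - p\,\mathds{1}[z_j \models x]\bigr)$ exactly as a rational with denominator $2^{B+B_{\epsgap}}$, and accept iff $\sum r^2 \le m\tau^2$, which is an exact integer comparison after clearing the common denominator $2^{2(B+B_{\epsgap})}$. \emph{Soundness} is then immediate: an accepting $\pi$ exhibits $\mu_{z,\alpha}\in\Delta\bool^n$ with $\D_\PP(\mu_{z,\alpha})\le\tau$, contradicting the hypothesis.

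\emph{Efficiency} is bookkeeping: checking $z_j\models x$ costs $O(n)$ per pair, giving $O(m^2 n)$ for all $(m+1)\cdot m$ pairs. Each residual is a sum of $m+1$ products of $B$-bit and $B_{\epsgap}$-bit numbers, i.e.\ $O(m\cdot B\cdot B_{\epsgap})$ per claim with schoolbook arithmetic, hence $O(m^2 B \log(m/\epsgap))$ in total. The squaring and summing is dominated by this if done carefully; otherwise I would note that the numerators have $O(B+\log(m/\epsgap))$ bits.

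\emph{Completeness} is the substantive part. Start from $\mu^*$ with $\D_\PP(\mu^*)\le\tau-\epsgap$; if the infimum is not attained, note that it is attained by compactness of $\Delta\bool^n$ and continuity. Apply Carath\'eodory to the map $\omega\mapsto v(\omega)\in\R^m$, $v(\omega)_{(x,y,p)} = \mathds{1}[\omega\models x\wedge\omega_y=1]-p\,\mathds{1}[\omega\models x]$. The residual vector $\sum_\omega\mu^*(\omega)v(\omega)$ lies in the convex hull of the points $v(\omega)$ in $\R^m$, so it is a convex combination of at most $m+1$ of them. This gives real $(z,\alpha)$ with the same residual vector, hence the same inconsistency. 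Next round each $\alpha_j$ to a multiple of $2^{-B_{\epsgap}}$ while preserving $\sum\alpha_j=1$: floor all of them, then add the deficit units to some coordinates, which gives $|\alpha_j-\tilde\alpha_j|\le 2^{-B_{\epsgap}}$ and $\tilde\alpha\ge0$.

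The key estimate is \Cref{lem:approx-gap}-style. Since $|v(\omega)_c|\le1$, each residual moves by at most $\delta\coloneqq\|\alpha-\tilde\alpha\|_1\le (m+1)2^{-B_{\epsgap}}$. Therefore $\D_\PP(\mu_{z,\tilde\alpha}) \le \D_\PP(\mu_{z,\alpha}) + \delta$ by the triangle inequality for the normalized $\ell_2$ norm, since the perturbation vector has normalized $\ell_2$ norm at most its $\ell_\infty$ norm. Choosing $B_{\epsgap}=\lceil\log_2((m+1)/\epsgap)\rceil$ gives $\D_\PP(\mu_{z,\tilde\alpha})\le\tau-\epsgap+\epsgap=\tau$, so the Verifier accepts. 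This linear bound is even simpler than the squared bound $(\tau-\epsgap)^2+\epsgap^2\le\tau^2$ quoted in the overview, which would need orthogonality.

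The main obstacle I expect is this rounding step, making sure the rounded weights still form an exact distribution at the stated precision and that the error bound depends only on $m$ and $\epsgap$, not on $B$. The length bound $O(m(n+\log(m/\epsgap)))$ then follows directly from there being $m+1$ pairs, each with $n+B_{\epsgap}$ bits.
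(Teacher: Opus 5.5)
Your proof is correct. Its soundness, Carath\'eodory, rounding (floor, then redistribute the deficit, exactly as in \Cref{lem:density}) and length steps match the paper's. The genuine difference is the perturbation estimate.

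The paper's \Cref{lem:approx-gap} expands the quadratic form $\norm{V_z\alpha}_2^2$ and bounds $|\alpha_i\alpha_j-\hat\alpha_i\hat\alpha_j|\le 2\delta$ entrywise. This gives an additive bound $2\delta(m+1)^3/m$ on the change in $\D^2_\PP$. That bound forces $\delta\approx\epsgap^2/m^2$, which is where the $B_{\epsgap}=\lceil\log_2(2(m+1)^3/(\epsgap^2 m))\rceil$ of \Cref{alg:verifier} comes from. The paper then closes with $(\tau-\epsgap)^2+\epsgap^2\le\tau^2$, which uses $\tau\ge\epsgap$.

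You instead bound each residual coordinate's change by $\norm{\alpha-\tilde\alpha}_1$ and apply the triangle inequality to the normalized $\ell_2$ norm. This gives a bound on $\D_\PP$ itself that is linear in the rounding error. Your route needs only $\lceil\log_2((m+1)/\epsgap)\rceil$ bits, about half the paper's, and never uses $\tau\ge\epsgap$. The asymptotics are the same $O(\log(m/\epsgap))$.

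Two small corrections:
\begin{itemize}
\item The statement fixes the precision of \Cref{alg:verifier}, so rather than changing $B_{\epsgap}$ you should observe that your estimate applies verbatim there. That precision exceeds yours whenever $\epsgap\le 1$; for $\epsgap>1$ the completeness hypothesis already forces $\tau>1$, and every distribution is then $\tau$-consistent.
\item The paper's squared route needs no orthogonality. It bounds $|\D^2_\PP(\mu_{z,\hat\alpha})-\D^2_\PP(\mu_{z,\alpha})|$ directly.
\end{itemize}

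Your explicit compactness argument for attaining the infimum also fills a step the paper leaves tacit.
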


Our main construct (\Cref{thm:explicit-iop}) relies on \Cref{prop:near-linear-witnesses}. Crucially, the weights of the witnessing distribution are rounded so as to be specifiable in $B_{\epsgap} \in O(\log(\nicefrac m \epsgap))$ bits.

As noted in \Cref{sec:related}, the complexity of verifying approximate consistency is of independent interest within its related literature, and so we then adapt \Cref{prop:near-linear-witnesses} to obtain a gap-free variant:

\begin{proposition}[$\EXPCONSIST \in \NP$]
	\label{thm:inNP}
	There exists a Verifier that, when run on input $\PP,\tau$, satisfies:
	\begin{itemize}
        
	    \item \emph{Completeness}:
            If $\D(\PP) \le \tau$, there exists a proof certificate such that the Verifier accepts.
		\item \emph{Soundness}:
		      if $\D(\PP) > \tau$, the Verifier rejects regardless of the given proof certificate.
        \item \emph{Efficiency}:
		      The Verifier runs in $O(m^3(B+\log m)^2 + m^2 n)$ time.
		\item \emph{Certificate length}:
		      proof certificates are of length $O(mn + \log B)$.

	\end{itemize}
\end{proposition}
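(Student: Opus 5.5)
The certificate will be a support $z=(z_1,\dots,z_k)$ with $k\le m+1$ and $z_j\in\bool^n$, together with a prime $r$ of $O(\log m+\log B)$ bits. It has length $O(mn+\log B)$. The Verifier first builds the integer matrix $V_z\in\Z^{m\times k}$. Entry $(i,j)$ is the rescaled residual of point $z_j$ on claim $i=(x,y,p)$, namely $2^B\mathds 1[z_j\models x]\mathds 1[(z_j)_y=1]-p^\dagger\mathds 1[z_j\models x]$. This takes $O(m^2n)$ time. The Verifier then solves the KKT system
\[
\begin{bmatrix}2V_z^{\sf T}V_z & \mathbf 1\\ \mathbf 1^{\sf T} & 0\end{bmatrix}\begin{bmatrix}\alpha\\ \lambda\end{bmatrix}=\begin{bmatrix}\mathbf 0\\ 1\end{bmatrix}
\]
exactly over $\Q$. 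It checks three things: that $\alpha\ge 0$, that $\sum_j\alpha_j=1$, and that $\norm{V_z\alpha}_2^2\le m\,4^B\tau^2$. If all hold it accepts. Soundness is then immediate: the distribution $\mu_{z,\alpha}$ is a genuine distribution with $\D_\PP(\mu_{z,\alpha})^2=\norm{V_z\alpha}_2^2/(m4^B)$. An accepted certificate therefore exhibits some $\mu$ with $\D_\PP(\mu)\le\tau$, regardless of what the prover sent.

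For completeness, first note that $\D_\PP(\mu)^2$ depends only on the vector $V\mu\in\R^m$, and that the set of achievable vectors $\{V\mu\}$ is the convex hull of the columns of $V$. This set is compact, so the infimum $\D(\PP)$ is attained by some $\mu^*$. By Carath\'eodory, $V\mu^*$ is a convex combination of at most $m+1$ columns. Among all supports achieving the optimum, I take a minimal one $z$ together with its optimal weights $\alpha^*$. I would then argue that $\alpha^*$ can be taken strictly positive on $z$, and that the columns $\{(V_{z_j},1)\}$ are affinely independent; if they were dependent, we could move along the dependency to drop a point without changing $V_z\alpha$, contradicting minimality. With all weights positive, the KKT conditions of $\min\norm{V_z\alpha}^2$ over the simplex restricted to $z$ are exactly the displayed equalities. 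Affine independence makes the KKT matrix nonsingular, since $V_z^{\sf T}V_z$ is positive definite on the subspace $\{\mathbf 1^{\sf T}v=0\}$. The system then has the unique solution $\alpha^*$, so the Verifier recomputes it and accepts.

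I expect the efficiency step to be the main obstacle. Hadamard's inequality bounds the entries of $V_z^{\sf T}V_z$ by $m4^B$, so the determinant and all Cramer minors have $O(m(B+\log m))$ bits. Fraction-free Gaussian elimination, or any exact rational solve, must keep intermediates at this size, and the cost budget $O(m^3(B+\log m)^2)$ suggests a modular approach instead. The plan is to solve the system modulo primes and reconstruct via CRT or Hensel lifting, using the certified prime $r$ to confirm that $\det\neq 0$ modulo $r$. That confirmation establishes nonsingularity cheaply and seeds a $p$-adic lifting (Dixon-style) solve, which costs $O(m^3)$ operations on $O(B+\log m)$-bit words times the number of lifting steps.

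The prover must supply an $r$ with $\det\not\equiv 0 \pmod r$. Such a prime exists among primes of $O(\log m+\log B)$ bits, because the nonzero determinant has only $O(m(B+\log m))$ bits and hence few prime divisors. Primality of $r$ is checked trivially at that size. Finally, the comparisons $\alpha\ge0$ and $\norm{V_z\alpha}^2\le m4^B\tau^2$ are done on rationals with a common denominator $\det$, which fits within the budget. If the lifting bookkeeping does not fit the stated bound, I would fall back to Bareiss elimination and accept a weaker exponent.
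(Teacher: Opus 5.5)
Your approach is the paper's in all essentials. The certificate is the same: the support plus a small prime. The Verifier is the same: solve the KKT system exactly by Dixon lifting seeded with the certified prime, then check the weights and the norm exactly, so soundness never depends on the prover. The completeness argument is the same too: a minimal-support optimum has positive weights, and an affine dependency among its columns would let you shrink the support. Hence the KKT matrix is nonsingular, and Hadamard bounds its determinant and thus the prime. Starting from a global minimizer, which exists by compactness, is a harmless variant of the paper's "Carath\'eodory witness first, then minimal-support optimum on that support." It also makes explicit that the infimum in $\D(\PP)\le\tau$ is attained, which the paper uses tacitly.

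The step that does not go through as written is the running time.

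\emph{The accounting as stated overshoots.} You charge each lifting step $O(m^3)$ word operations. About $O(m(B+\log m)/\log r)$ steps are needed to reach twice the Hadamard bit-length, so this totals roughly $m^4(B+\log m)^2$ bit operations, a factor of $m$ above the claimed bound. Your Bareiss fallback is worse still: $O(m^3)$ operations on $O(m(B+\log m))$-bit integers. Neither establishes the efficiency clause of the proposition.

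\emph{The fix is the standard Dixon accounting.}
\begin{itemize}
\item The only cubic work is computing $C=M^{-1}\bmod r$. This is done once, in $O(m^3)$ operations on $O(\log r)$-bit numbers.
\item Each step then needs just two matrix--vector products: $x_i = C r_i \bmod r$ and $r_{i+1} = (r_i - M x_i)/r$.
\item That is $O(m^2)$ multiplications of $O(B+\log m)$-bit entries by $\log r$-bit digits. By induction the remainders $r_i$ stay within $O(B+\log m)$ bits.
\item Summed over the steps this gives $O\bigl(m^2\cdot m(B+\log m)\cdot(B+\log m)\bigr)=O(m^3(B+\log m)^2)$.
\end{itemize}
The remaining work also fits in this bound: the rational reconstructions, forming $V_z^{\sf T}V_z$, and the final exact norm comparison on $O(m(B+\log m))$-bit numerators. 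Building $V_z$ contributes the $O(m^2 n)$ term.
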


To compare the two: In \cref{prop:near-linear-witnesses} the Prover sends the weights, rounded, and the Verifier only checks them; the rounding introduces the gap, but it buys weights whose precision scales logarithmically with $m / \epsgap$. In \cref{thm:inNP} only the support vectors (and a prime number) are sent: the Verifier solves for the exact optimal weights, which removes the gap at the price of a precision polynomial in $m$ and $B$, and a running time to match.
The gap-free variant cannot be directly used for our main construct (\Cref{thm:explicit-iop}), as elaborated in \Cref{rem:exact-witness}.

\subsection{The Verifier}
Consider a finite collection of probabilistic claims $\PP
	= \{ (x_i, y_i, p_i) \}_{i=1}^m
$ specified to precision $B \in \N$, \ie, each $p_i \in \mathbb Q$ is represented by the integer $p^\dagger_i \coloneqq p \cdot 2^B \in \{0,\dots, 2^B-1\}$.
Define the \emph{constraint residuals}, a vector $\phi : \bool^n \to \Z^m$ over the space of joint assignments to variables, according to:
\begin{align*}
	\phi(\omega) \coloneqq \Big[
		(2^B \omega_{y_i} - p_i^\dagger)
		\mathds{1}[\omega \models x_i]
		\Big]_{i=1}^m,
\end{align*}
whose components each have magnitude at most $2^B$.
The key feature of this representation is that
\begin{equation}
	\D^2_\PP(\mu)
	= \frac{1}{ m \cdot 2^{2B}}
	\norm[\Big]{ \E_{\omega \sim \mu}[\phi(\omega)] }_2^2
	=
	\frac1m
	\sum_{i=1}^m (\mu(x_i,y_i) - \mu(x_i) p_i)^2
	\quad  \text{  for all }\mu \in \Delta \bool^n
	.
\end{equation}

Carath\'eodory's Theorem \cite{Caratheodory07} states that any point in the convex hull of a set $S \subset \R^d$ can be expressed as a convex combination of at most $d+1$ points from $S$.
We will use this to demonstrate that, if there exists a distribution $\mu^*$ with small constraint violation $\D_\PP(\mu^*) \le \tau$, then there also exists one supported on only $m+1$ points. That sparsely supported distribution (or rather, a discretization of it) will ultimately be the proof certificate.

To be more precise, start with the observation that
$\{ \E_\mu[\phi] : \mu \in \Delta \bool^n \} = \mathrm{conv}\{\phi(\omega) : \omega \in \bool^n\}$
is convex by definition of the expectation.
Therefore, by Carath\'eodory's Theorem, for every $\mu$, the point $\E_{\mu}[\phi] \in \mathbb R^m$ can be expressed as a convex combination $\E_{\mu}[\phi] = \sum_{j=1}^{m+1} \alpha_j\, \phi(z_j)$ of some joint assignments $z_1, \ldots, z_{m+1} \in \bool^n$.
This yields a (typically different) distribution $\mu' = \mu_{z,\alpha} \in \Delta \bool^n$
with the same constraint residual as $\mu$, but supported on only $m+1$ points. Stated more formally:
\begin{align*}
	\forall \mu \in \Delta(\bool^n).~
	\exists \alpha = (\alpha_j )_{j=1}^{m+1} \in \Delta[m+1],~z \in (\bool^n)^{m+1}.\quad
	\E\nolimits_{\mu}[\phi] = \sum_{j=1}^{m+1} \alpha_j \phi(z_j) = \E\nolimits_{\mu_{z,\alpha}}[\phi],
	\\[-1ex]
	\text{ where } \mu_{z,\alpha}(\omega) \coloneqq \sum_{j = 1}^{m+1}  \alpha_j\, \mathds{1}[ z_j = \omega]
	~~\text{ has }~~ \big|\,\mathrm{supp}(\mu_{z,\alpha})\big| \le m+1.
\end{align*}

We remark that such sparsification arguments have been used in the literature: the same fact appears as a statement about basic feasible solutions of a linear system in \cite{GeorgakopoulosKP88}\cite[Theorem 2.3]{KollerM94}, and is what enables the small-model theorem in \cite[Theorem 2.6]{FaginHM90}, both in the service of placing exact ($\tau = 0$) consistency of unconditional claims in $\NP$. The other key ingredient is a bound on the lengths of the rational coefficients of this witness, which \cite{FaginHM90} repurposed from the non-negative integer linear programming literature \cite{chvatal1983linear}. The bound that we prove in \Cref{lem:rationalopt} serves the same role for us, although we must do it directly, since our $\ell_2$ inconsistency formulation cannot be cast as a linear program, and we also need to very precisely bound the bit complexity of the weights of an approximate witness (c.f., \Cref{rem:exact-witness}).

This leads us to \cref{alg:verifier}. Its certificate specifies the weights to precision $B_{\epsgap}$: per our convention, only the numerators $\alpha^\dagger_j = 2^{B_{\epsgap}}\alpha_j$ are sent.

\begin{algorithm}
	\caption{Verification of Approximate Probabilistic Consistency}
	\label{alg:verifier}
	\KwInput{Collection $\PP$ of $m$ probabilistic claims to precision $B$, tolerance $\tau$, gap $\epsgap > 0$.}
	\KwInput{A proof certificate comprising: an integer $k \in [m+1]$, assignments $z_1,\dots, z_k \in \bool^n$, and weights $\alpha^\dagger_1,\dots,\alpha^\dagger_k \in \{0,\dots,2^{B_{\epsgap}}\}$ for $B_{\epsgap} := \lceil\log_2\!\big( 2(m+1)^3 / (\epsgap^2 m) \big) \rceil$.}
	\BlankLine

    Check that the certificate is of the form declared above, and that $\sum_j\alpha^\dagger_j = 2^{B_{\epsgap}}$, else $\rej$.\;
	Calculate
	$\displaystyle
		\mathit{inc}^2 \coloneqq \sum_{(x,y,p) \in \PP} \Big( \sum_{j=1}^{k} \alpha_j^\dagger \, (2^B z_{j,y} - p^\dagger)\mathds{1}[z_j \models x]  \Big)^2.
	$ \;
	\KwOutput{\acc\ if $\mathit{inc}^2 \le m\cdot2^{2B + 2B_{\epsgap}} \cdot  \tau^2$; else \rej}
\end{algorithm}

Write $V_z \coloneqq \big[\phi(z_1),\, \phi(z_2),\, \cdots,\, \phi(z_k)\big] \in \Z^{m \times k}$ for the \emph{residual matrix} of the support, whose $j$th column lists the constraint residuals of the single point $z_j$. It is what ties the algorithm's arithmetic back to the inconsistency it is meant to measure: for any weights $\alpha \in \Delta[k]$ over the support,
\begin{equation}
	\norm{V_z\alpha}_2^2 = \norm[\Big]{\sum_{j=1}^{k}\alpha_j \,\phi(z_j) }_2^2 = m \cdot 2^{2B} \cdot \D^2_{\PP}(\mu_{z,\alpha}),
	\label{eq:V-inc}
\end{equation}
so that $\mathit{inc}^2 = 2^{2B_{\epsgap}}\norm{V_z\alpha}_2^2 = m (2^{B + B_{\epsgap}})^2 \D^2_\PP(\mu_{z,\alpha})$. Therefore, the final check of \cref{alg:verifier} holds exactly when $\D_\PP(\mu_{z,\alpha}) \le \tau$, and the best weights for a given support are those minimizing $\norm{V_z \alpha}_2^2$ over the simplex.

\paragraph{Runtime Analysis.}
Let's take a closer look at the calculation on the penultimate line of \cref{alg:verifier}. In the inner sum, we:
\begin{enumerate}
	\item calculate whether or not $z_j \models x$, which involves comparing each of the $n$ bits of $x$ against the $n$ bits of $z_j$, and hence takes $n$ steps.
	\item Subtract $p^\dagger$, a non-negative integer at most $2^B$, from either $2^B$ or zero, depending on the bit $z_y^j \in \{0,1\}$, which takes $B$ steps, and produces a $(B+1)$-bit number.
	\item Multiply this number by $\alpha_j^\dagger$, an integer at most $2^{B_{\epsgap}}$. This can be done in  $1+B B_{\epsgap}$ steps (since the additional bit is a sign that can be copied in one step) and produces a $(B+B_{\epsgap})$-bit number.
\end{enumerate}
We repeat these three steps $k \le m+1$ times, which takes at most $(m+1) (1+n + B + B B_{\epsgap}) \in O(m(n+B B_{\epsgap}))$ steps, and results in an integer that can be represented in $1+B+B_{\epsgap}+\log_2(m+1)$ bits.
Then we square it, resulting in a number that can be represented in $2(1+B+B_{\epsgap}+\log_2(m+1))$ bits, and sum over all $m$ probabilistic claims in $\PP$, for a total of
\[
	m \Big(\,
	\underbrace{(m+1)(1 + n+B+B B_{\epsgap})}_{\text{inner loop}}\, + \underbrace{4(1+B_{\epsgap}+B+2\log(m))^2\vphantom{\big|}}_{\text{accumulation}}\,\Big)
	\in O(m^2 (n+B B_{\epsgap}) )
	~~\text{steps}.
\]
Since $B_{\epsgap} \in O(\log(\nicefrac m \epsgap))$, this is $O(m^2(n + B\log(\nicefrac m \epsgap)))$ steps. The modified Verifier of \cref{sec:NP:exact} runs the same accounting at a larger denominator, and pays for its solve on top.

\paragraph{Soundness.}
The proof $\pi = (z,\alpha^\dagger)$ sent by the Prover exactly encodes a distribution $\mu_{z,\alpha}$, and by \eqref{eq:V-inc} the quantity computed on the last line satisfies $\mathit{inc}^2 = m (2^{B + B_{\epsgap}})^2 \D^2_\PP(\mu_{z,\alpha})$, so the final check holds if and only if $\D_\PP(\mu_{z,\alpha}) \le \tau$. Therefore, if the Verifier accepts, $\mu_{z,\alpha}$ itself is a distribution with $\D_\PP(\mu_{z,\alpha}) \le \tau$. The procedure is thus perfectly sound: the gap will be paid only in completeness.

\subsection{\texorpdfstring{Logarithmic-precision weights, at the cost of a gap (\Cref{prop:near-linear-witnesses})}{Logarithmic-precision weights, at the cost of a gap}}

The honest Prover sends an approximate witness $(z,\hat\alpha)$: a support $z$, and weights $\hat\alpha$ rounded to $B_{\epsgap}$ bits, satisfying $\norm{\alpha-\hat\alpha}_\infty \le \delta$ for the exact optimum $\alpha$ on that support. Two lemmas make the rounding precise: the first bounds the error it incurs, and the second shows that weights of any prescribed precision exist within any rounding radius. \Cref{claim:sparse-witness} then puts them together, and \cref{prop:near-linear-witnesses} follows.

\begin{lemma}
	\label{lem:approx-gap}
	If $\alpha, \hat\alpha \in \Delta[m+1]$ are such that $\norm{\alpha - \hat\alpha}_\infty \le \delta$, \ie, $\forall j \in [m+1].~|\alpha_j - \hat\alpha_j| \le \delta$,
	then for all $z = (z_j)_{j=1}^{m+1}$,
	$\big|\D^2_\PP(\mu_{z,\hat \alpha}) - \D^2_\PP(\mu_{z,\alpha})\big| \le 2 \delta (m+1)^3 /m$.
\end{lemma}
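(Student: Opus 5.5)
We work directly from the representation \eqref{eq:V-inc}: writing $V_z$ for the residual matrix (with $k = m+1$ columns, padding with repeated points if needed), we have
\[
m\,2^{2B}\,\D^2_\PP(\mu_{z,\alpha}) = \norm{V_z\alpha}_2^2 ,
\qquad
m\,2^{2B}\,\D^2_\PP(\mu_{z,\hat\alpha}) = \norm{V_z\hat\alpha}_2^2 .
\]
The plan is to bound the difference of squared norms by a difference-of-squares factorization. Set $u \coloneqq V_z\alpha$ and $v \coloneqq V_z\hat\alpha$. Then
\[
\bigl|\norm{v}_2^2-\norm{u}_2^2\bigr| = \bigl|\langle v-u,\ v+u\rangle\bigr| \le \sum_{i=1}^m |v_i-u_i|\,|v_i+u_i| .
\]
We bound each factor coordinatewise, using only that every entry of $V_z$ has magnitude at most $2^B$.

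\textbf{The sum factor.} Since $\alpha$ and $\hat\alpha$ lie in the simplex, each coordinate $u_i$ and $v_i$ is a convex combination of entries of magnitude at most $2^B$. Hence $|u_i|,|v_i|\le 2^B$, and so $|u_i+v_i|\le 2^{B+1}$.

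\textbf{The difference factor.} We have $v_i-u_i = \sum_{j=1}^{m+1}(V_z)_{ij}(\hat\alpha_j-\alpha_j)$. Each term is at most $2^B\delta$ in magnitude, so $|v_i-u_i|\le (m+1)\,2^B\delta$.

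Summing over the $m$ rows gives a bound of $m\cdot(m+1)2^B\delta\cdot 2^{B+1} = 2(m+1)\,m\,2^{2B}\delta$. Dividing by $m\,2^{2B}$ yields
\[
\bigl|\D^2_\PP(\mu_{z,\hat\alpha})-\D^2_\PP(\mu_{z,\alpha})\bigr| \le 2\delta(m+1).
\]
This is already at most $2\delta(m+1)^3/m$, since $(m+1)^2/m\ge 1$. So the stated bound follows with room to spare; the looser constant presumably reflects the cruder accounting the authors intend to reuse when choosing $B_{\epsgap}$.

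\textbf{Main obstacle.} There is no real obstacle. The only care needed is the normalization: tracking the factor $m\,2^{2B}$ between $\norm{V_z\alpha}_2^2$ and $\D^2_\PP$, and using the simplex constraint to bound $|u_i|$ and $|v_i|$ by $2^B$ rather than by $(m+1)2^B$. If we skip the simplex argument and bound each coordinate crudely by $(m+1)2^B$, we get $2\delta(m+1)^2$, which is still within the claim. Either way, the proof is a two-line Cauchy--Schwarz/triangle-inequality computation.
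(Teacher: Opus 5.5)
Your proof is correct, and it takes a different and sharper route than the paper's.

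The paper expands both squared norms into the quadratic form $\sum_i\sum_{j,k}\alpha_j\alpha_k\phi_i(z_j)\phi_i(z_k)$. It bounds each product difference $|\hat\alpha_j\hat\alpha_k-\alpha_j\alpha_k|$ by $2\delta$ (its rectangle picture), and then bounds the leftover $\sum_i\sum_{j,k}|\phi_i(z_j)\phi_i(z_k)|$ by $(m+1)^3 2^{2B}$. That accounting pays for all $(m+1)^2$ pairs and uses nothing about $\alpha,\hat\alpha$ beyond $\alpha_j,\hat\alpha_j\in[0,1]$.

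Your factorization $\norm{v}_2^2-\norm{u}_2^2=\langle v-u,\,v+u\rangle$ keeps one factor at the level of the averaged residuals. There the simplex constraint gives $|u_i+v_i|\le 2^{B+1}$ at no cost. So you pay only a single factor $m+1$, through $\norm{\hat\alpha-\alpha}_1\le(m+1)\delta$, and reach $2\delta(m+1)$ instead of $2\delta(m+1)^3/m$.

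The entry bound $|\phi_i(\omega)|\le 2^B$ you rely on holds because $p_i^\dagger\in\{0,\dots,2^B-1\}$ (or $2^B$ for deterministic claims). Your padding remark is unnecessary, since $z$ already has exactly $m+1$ points.

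The sharper constant buys little downstream. In \Cref{claim:sparse-witness} it would allow $B_{\epsgap}=\lceil\log_2(2(m+1)/\epsgap^2)\rceil$, saving roughly $\log_2 m$ bits per weight, but the $O(\log(m/\epsgap))$ precision is unchanged. The paper's term-by-term expansion is cruder bookkeeping, but it is enough for every use it is put to.
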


\begin{proof}
	First, we show that, since $\alpha, \hat \alpha \in \Delta[m+1]$ are probabilities over $[m+1]$ that are close in the sense that $\max_i |\alpha_i - \hat \alpha_i| \le \delta$, then $|\alpha_i\alpha_j - \hat\alpha_i \hat\alpha_j| \le 2\delta - \delta^2$ for all $i,j \in [m+1]$.
	This is evident in the following picture:
	\begin{center}
		\begin{tikzpicture}
			\draw (0,0) rectangle (3,3);
			\fill[red,opacity=0.2] (0,0) rectangle (2.5,2);
			\fill[blue,opacity=0.2] (0,0) rectangle (2,2.5);
			\draw[thick] (2,0.1) -- (2,-0.1) node[below]{$\alpha_i$};
			\draw[thick] (2.5,0.1) -- (2.5,-0.1) node[below]{$\hat\alpha_i$};
			\draw[thick] (0.1,2.5) -- (-0.1,2.5) node[left]{$\alpha_j$};
			\draw[thick] (0.1, 2) -- (-0.1,2) node[left]{$\hat\alpha_j$};

			\draw[thick] (2,1) -- node[above]{$\scriptstyle\le\delta$} (2.5,1);
			\draw[thick] (2,0.9) -- (2,1.1) (2.5,0.9) -- (2.5, 1.1);
			\node[rotate=-45] at (-.2,-.2) (Z) {$(0,0)$};
			\node[rotate=-45] at (3.2,3.2) (Z) {$(1,1)$};
		\end{tikzpicture}
	\end{center}
	In this unit square, the difference between the areas of the red and blue rectangles can be at most $2\delta$.

	We now compute
	\begin{align*}
		m \cdot 2^{2B}\cdot \big|\D^2_\PP(\mu_{z,\hat \alpha}) - \D^2_\PP(\mu_{z,\alpha})\big| & =
		\bigg| \, \norm[\Big]{\sum_{j=1}^{m+1} \hat \alpha_j \phi( z_j )}_2^2 -
		\norm[\Big]{\sum_{j=1}^{m+1} \alpha_j \phi( z_j )}^2_2 \,\bigg|                                                                                            \\
		=
		\Big| \sum_{i=1}^m \sum_{k=1}^{m+1}\sum_{j=1}^{m+1}                        & \hat\alpha_j \hat\alpha_k \phi_i(z_j) \phi_i(z_k)-
		\sum_{(x,y,p) \in \PP} \sum_{k=1}^{m+1}\sum_{j=1}^{m+1} \alpha_j \alpha_k \phi_i(z_j) \phi_i(z_k)  \Big|                     \\
		                                                                           & =
		\Big| \sum_{i=1}^m \sum_{k=1}^{m+1}\sum_{j=1}^{m+1} ( \hat\alpha_j \hat\alpha_k -  \alpha_j \alpha_k) \phi_i(z_j) \phi_i(z_k) \Big| \\
		                                                                           & \le
		\sum_{i=1}^m\sum_{k=1}^{m+1}\sum_{j=1}^{m+1} | \hat\alpha_j \hat\alpha_k -  \alpha_j \alpha_k|\, |\phi_i(z_j) \phi_i(z_k)|          \\
		                                                                           & \le
		(2 \delta - \delta^2) \sum_{i=1}^m \sum_{k=1}^{m+1}\sum_{j=1}^{m+1} |\phi(z_j) \phi(z_k) |
		\\
		                                                                           & \le 2 \delta  (m+1)^3 \cdot (2^{B})^2
	\end{align*}
	Canceling the $2^{2B}$ scaling factor and dividing by $m$ gives the desired result.
\end{proof}

We will also need the following straightforward fact about the density of fixed-precision binary numbers on the simplex.

\begin{lemma}
	\label{lem:density}
	For all $\alpha \in \Delta[m+1]$ and $B \in \mathbb N$, there exists an $\alpha' \in \Delta[m+1]$ satisfying $\norm{\alpha - \alpha'}_\infty \le 2^{-B}$ whose components are rational numbers of the form $\alpha'_i = \alpha^\dagger_i / 2^B$ for some $\alpha^\dagger_i \in \Z_{\ge 0}$.
\end{lemma}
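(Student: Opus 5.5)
The plan is a floor-then-redistribute rounding argument. Set $N \coloneqq 2^B$ and work with integer numerators throughout. For each $i \in [m+1]$ put $a_i \coloneqq \lfloor N \alpha_i \rfloor \in \Z_{\ge 0}$, so $0 \le N\alpha_i - a_i < 1$. Let $r \coloneqq N - \sum_i a_i$. Since $\sum_i N\alpha_i = N$ is an integer and each $a_i$ is an integer, $r$ is an integer. It also satisfies $0 \le r = \sum_i (N\alpha_i - a_i) < m+1$. Indeed $r$ is at most the number of indices with nonzero fractional part, because each such index contributes a fraction strictly less than one and the fractions sum to the integer $r$.

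Let $F \coloneqq \{ i : N\alpha_i \notin \Z\}$ be the set of indices with nonzero fractional part; the observation above gives $r \le |F|$. Choose any $r$ indices from $F$ and raise each of their numerators by one, i.e.\ set $\alpha^\dagger_i \coloneqq a_i + 1$ for the chosen indices and $\alpha^\dagger_i \coloneqq a_i$ otherwise. Then $\sum_i \alpha^\dagger_i = N$ and every $\alpha^\dagger_i \ge 0$, so $\alpha' \coloneqq \alpha^\dagger / 2^B$ lies in $\Delta[m+1]$ and has the required dyadic form.

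For the distance bound there are two cases. An index that was not bumped has $|N\alpha_i - \alpha^\dagger_i| = N\alpha_i - a_i < 1$. A bumped index lies in $F$, so its fractional part is in $(0,1)$ and $|N\alpha_i - (a_i+1)| = 1 - (N\alpha_i - a_i) < 1$. Dividing by $N$ gives $\norm{\alpha - \alpha'}_\infty \le 2^{-B}$, and in fact the inequality is strict.

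The one step that needs care is justifying $r \le |F|$, so that we only ever bump indices with nonzero fractional part. If an index with $N\alpha_i \in \Z$ were bumped, its error would become exactly $1$. That still meets the non-strict bound, so even this case is harmless, and the proof goes through either way.
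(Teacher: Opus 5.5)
Your proof is correct and follows the paper's argument: truncate each coordinate to $\lfloor 2^B\alpha_i\rfloor/2^B$, note that the missing mass is an integer multiple $r<m+1$ of $2^{-B}$, and add $2^{-B}$ to $r$ coordinates. Bumping only indices with nonzero fractional part (which gives strict inequality) and checking the error on the bumped coordinates are small refinements. The paper bumps arbitrary coordinates and states the $\ell_\infty$ bound only for the truncated vector, leaving the post-redistribution check implicit.
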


\begin{proof}[Proof of \cref{lem:density}]
	Given $\alpha\in \Delta[m+1]$, start by truncating the first $B$ bits of its binary representation of each component, \ie,
	define $\alpha'' \in [0,1]^{m+1}$ to be the vector whose $i^\text{th}$ component is $\lfloor 2^B \alpha_i \rfloor / 2^B$.
	Clearly $\norm{\alpha-\alpha''}_\infty \le 2^{-B}$, but $\alpha''$ will not be a distribution since $\sum_i \alpha''_i < 1$ unless $2^B\alpha$ is already a vector of integers.

	Let $\beta := 1-\sum_{i}\alpha''_i$ be total mass missing from $\alpha''$ required for it to be a distribution. Since the number $1$ and each $\alpha''_i$ is an integer multiple of $2^{-B}$, so too is $\beta$. Let $\beta^\dagger := \beta \cdot 2^B \in \Z$ be that multiple.
	Since each $(\alpha_i-\alpha''_i) < 2^{-B}$ by the definition of rounding, we know that $\beta$, which is the sum across all $m+1$ components of the vector must be at most $2^{-B}(m+1)$, \ie, $\beta^\dagger < m+1$.

	Finally, choose any vector $v\in \{0,1\}^{m+1}$ with Hamming weight $\sum_i v_i = \beta^\dagger$, and let $\alpha' := \alpha'' + v \cdot 2^{-B}$ be the result of re-assigning all $\beta^\dagger$ units of missing mass to distinct components. Clearly $\alpha'$ is non-negative, sums to one, and its components are rational with common denominator $2^B$.
\end{proof}

Now to put the pieces together. The point is that the gap in the completeness hypothesis is exactly the room the rounding needs.

\begin{claim}[Sparse witnesses at bounded precision]
	\label{claim:sparse-witness}
	Let $\tau \ge \epsgap > 0$ and let $B_{\epsgap} = \big\lceil \log_2\!\big( 2(m+1)^3/(\epsgap^2 m) \big) \big\rceil \in O(\log(\nicefrac1\epsgap) + \log m)$ as in \cref{alg:verifier}. If $\D_\PP(\mu) \le \tau - \epsgap$ for some $\mu \in \Delta(\bool^n)$, then there are assignments $z = (z_j)_{j=1}^{m+1}$ and integer weights $\alpha^\dagger \in \Z_{\ge 0}^{m+1}$ with $\sum_j \alpha^\dagger_j = 2^{B_{\epsgap}}$ such that $\hat\mu \coloneqq \mu_{z, \alpha^\dagger 2^{-B_{\epsgap}}}$ satisfies $\D_\PP(\hat\mu) \le \tau$.
\end{claim}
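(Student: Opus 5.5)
We will combine the Carath\'eodory sparsification with \cref{lem:density} and \cref{lem:approx-gap}.

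\emph{Sparsify.} Start from the given $\mu$ with $\D_\PP(\mu) \le \tau - \epsgap$. Apply Carath\'eodory's theorem to $\E_\mu[\phi] \in \mathrm{conv}\{\phi(\omega)\}\subset\R^m$, as in the discussion preceding \cref{alg:verifier}. This yields $z \in (\bool^n)^{m+1}$ and real weights $\alpha \in \Delta[m+1]$ with $\E_{\mu_{z,\alpha}}[\phi] = \E_\mu[\phi]$. Since the inconsistency depends on the distribution only through $\E[\phi]$, we get $\D_\PP(\mu_{z,\alpha}) = \D_\PP(\mu) \le \tau - \epsgap$. If fewer than $m+1$ points are needed, pad with arbitrary points of weight zero.

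\emph{Round.} Apply \cref{lem:density} with precision $B_{\epsgap}$. This gives $\hat\alpha \in \Delta[m+1]$ whose entries are $\alpha^\dagger_j 2^{-B_{\epsgap}}$ with $\alpha^\dagger_j \in \Z_{\ge0}$. Because $\hat\alpha$ lies in the simplex, $\sum_j \alpha^\dagger_j = 2^{B_{\epsgap}}$, and moreover $\norm{\alpha-\hat\alpha}_\infty \le \delta \coloneqq 2^{-B_{\epsgap}}$. By \cref{lem:approx-gap},
\[
\D^2_\PP(\hat\mu) \le \D^2_\PP(\mu_{z,\alpha}) + 2\delta(m+1)^3/m .
\]
The definition of $B_{\epsgap}$ gives $2^{B_{\epsgap}} \ge 2(m+1)^3/(\epsgap^2 m)$, so the additive term is at most $\epsgap^2$.

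\emph{Absorb the gap.} The remaining step is the inequality
\[
\D^2_\PP(\hat\mu) \le (\tau-\epsgap)^2 + \epsgap^2 = \tau^2 - 2\epsgap(\tau - \epsgap) \le \tau^2 .
\]
The last inequality holds because $\tau \ge \epsgap$. Taking square roots gives $\D_\PP(\hat\mu)\le\tau$.

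No step is a real obstacle, since the lemmas do the work. The one thing to check with care is bookkeeping. First, the rounding error must be measured on the \emph{squared} inconsistency, which is why $B_{\epsgap}$ carries $\epsgap^2$. Second, the hypothesis $\tau\ge\epsgap$ is exactly what is needed to absorb the cross term.
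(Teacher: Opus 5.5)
Your proposal is correct and follows the paper's proof step for step: Carath\'eodory sparsification, rounding via \cref{lem:density} at precision $B_{\epsgap}$, the perturbation bound of \cref{lem:approx-gap}, and absorbing the resulting $\epsgap^2$ term using $\tau \ge \epsgap$. The only difference is cosmetic. The paper names $\delta \coloneqq \epsgap^2 m/(2(m+1)^3)$ and notes $2^{-B_{\epsgap}} \le \delta$, whereas you take $\delta = 2^{-B_{\epsgap}}$ and bound $2\delta(m+1)^3/m \le \epsgap^2$ afterwards.
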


\begin{proof}
	By Carath\'eodory's theorem, there are $z \in (\bool^n)^{m+1}$ and $\alpha \in \Delta[m+1]$ with $\E_{\mu_{z,\alpha}}[\phi] = \E_\mu[\phi]$, hence $\D_\PP(\mu_{z,\alpha}) = \D_\PP(\mu) \le \tau - \epsgap$. Set $\delta \coloneqq \epsgap^2 m / (2(m+1)^3)$, so that $2^{-B_{\epsgap}} \le \delta$. By \cref{lem:density}, applied at precision $B_{\epsgap}$, there is $\hat\alpha = \alpha^\dagger 2^{-B_{\epsgap}} \in \Delta[m+1]$ with $\norm{\alpha - \hat\alpha}_\infty \le 2^{-B_{\epsgap}} \le \delta$, and by \cref{lem:approx-gap} the rounding perturbs the squared inconsistency by at most $2\delta(m+1)^3/m = \epsgap^2$. Therefore
	\[
		\D^2_\PP(\hat\mu) \;\le\; (\tau - \epsgap)^2 + \epsgap^2 \;=\; \tau^2 - 2\epsgap(\tau - \epsgap) \;\le\; \tau^2,
	\]
	using $\tau \ge \epsgap$ in the last step. %
\end{proof}

\begin{proof}[Proof of \cref{prop:near-linear-witnesses}]
	Soundness is the soundness paragraph above: the final check of \cref{alg:verifier} is exact, so any accepted certificate exhibits a distribution $\mu_{z,\alpha}$ with $\D_\PP(\mu_{z,\alpha}) \le \tau$.

	For completeness, suppose $\D_\PP(\mu) \le \tau - \epsgap$ for some $\mu$; in particular $\tau \ge \epsgap$, since otherwise no such $\mu$ exists. The certificate $(z, \alpha^\dagger)$ of \cref{claim:sparse-witness} has weights summing to $2^{B_{\epsgap}}$, so it passes the first check, and the exact final check passes because $\D_\PP(\hat\mu) \le \tau$. Its length is at most $(m+1)(n + B_{\epsgap}) \in O(m(n + \log(\nicefrac m \epsgap)))$.

	Efficiency is the runtime analysis above.
\end{proof}

In particular, if we care about numbers only up to the same precision as the problem instance, \eg, to 32-bit fixed point precision ($\epsgap = 2^{-B}$), then witness weights of length $(m+1)(2B + 1 + 3 \log_2(m+1)) \in O(mB + m \log m)$ suffice.

\subsection{\texorpdfstring{Polynomial-precision weights, and no gap (\Cref{thm:inNP})}{Polynomial-precision weights, and no gap}}\label{sec:NP:exact}

To decide $\EXPCONSIST$ rather than its gapped version, the Verifier needs the exact optimal weights for the support, and those are not short. If the claims are represented to 16 bits, and we have a meager $m=100$ claims, the bound of \cref{lem:rationalopt} below already asks for some 4600 bits per weight; we expect the AI systems and probabilistic models that motivate our interest in the $\EXPCONSIST$ problem to make many more claims, and for $m$ equal to one million (2 megabytes of data) the figure is above $7 \times 10^7$ bits per weight, against the 105 that suffice at a gap of $2^{-32}$. The remedy is not to send them at all. 

The Verifier  can solve for the weights itself, incurring the polynomial cost of precision in the running time rather than in length of the certificate.

\paragraph{The modified Verifier.} The certificate components $\alpha_1^\dagger,\dots,\alpha_k^\dagger$ are replaced by a single integer $q$. The Verifier constructs
\begin{equation*}
	M_z \coloneqq \begin{bmatrix}
		2 V_z^{\sf T} V_z  & \mathbf{1} \\
		\mathbf{1}^{\sf T} & 0
	\end{bmatrix} \in \Z^{(k+1)\times(k+1)}
\end{equation*}
from the residual matrix $V_z$ of \cref{alg:verifier}, and solves
\begin{equation}
M_z \begin{bmatrix} \alpha \\ \lambda\end{bmatrix} = \begin{bmatrix}\mathbf 0 \\ 1\end{bmatrix}
	\label{eq:kkt}
\end{equation}
for $(\alpha,\lambda)$ as follows:
\begin{itemize}
\item
reject unless $q$ is prime and $M_z$ is non-singular modulo $q$;
\item solve \eqref{eq:kkt} over $\Q$ as described below, and reject unless the returned $(\alpha,\lambda)$ satisfies \eqref{eq:kkt} under exact substitution over $\Z$ (concretely, writing $\alpha = \alpha^\dagger / D$ and $\lambda = \lambda^\dagger / D$ over the returned common denominator $D > 0$, it checks $M_z (\alpha^\dagger, \lambda^\dagger)^{\sf T} = (\mathbf 0, D)^{\sf T}$ over $\Z$);
\item 
reject unless $\alpha_j > 0$ for every $j \in [k]$;
\item write $\alpha = \alpha^\dagger/D$ over the returned common denominator $D$, and runs the last two lines of \cref{alg:verifier} with $D$ in place of $2^{B_{\epsgap}}$. 
\end{itemize}

\paragraph{Solving \eqref{eq:kkt} over $\Q$.} We describe how to use the standard technique of \cite{Dixon82} to solve \eqref{eq:kkt} without incurring unnecessary precision. 
Simplify the notation by dropping the subscript on $M_z =: M$, and write $b \coloneqq (\mathbf 0, 1)^{\sf T}$ for the right-hand side.
As we will show in \cref{lem:rationalopt} below,
the entries of the solution of \eqref{eq:kkt} are fractions whose numerators and common denominator are at most $2^{B_M}$, where $B_M \coloneqq 2(m+2)(B + \lceil \log_2 (m+2) \rceil) \in O(m \log m + Bm)$.
Rather than carry numbers of this size through Gaussian elimination, choose a prime $q$ and calculate the solution one base-$q$ digit at a time, working modulo $q$ throughout. 
The key requirement on $q$ is that $M$ must be invertible mod $q$, \emph{i.e.,} $C:= M^{-1} \bmod q\ne 0$.
Starting from an initial remainder $r_0 \coloneqq b$, the Verifier computes, for $i = 0, \dots, \ell-1$ with $\ell \coloneqq \lceil (2 B_M + 2) / \log_2 q \rceil$, the $(k+1)$-dimensional integer vectors
\begin{equation}
	x_i \coloneqq C r_i \bmod q
	\qquad\text{and}\qquad
	r_{i+1} \coloneqq \frac{r_i - M x_i}{q},
	\label{eq:lifting}
\end{equation}
The pointwise division in the second case is exact, as $M x_i \equiv M C r_i \equiv r_i \pmod q$, and by induction $\norm{r_i}_\infty \le (k+1) \cdot 2m 2^{2B}$, so each round costs $O(m^2)$ multiplications of $O(B + \log m)$-bit integers by $\log_2 q$-bit digits. Telescoping, $X \coloneqq \sum_{i < \ell} q^i x_i$ satisfies $MX = b - q^\ell r_\ell \equiv b \pmod{q^\ell}$.
The solution $u/d$ (with $d = \det M$ and $u \in \Z^{k+1}$) satisfies $M(dX - u) \equiv 0$, hence $u \equiv dX \pmod{q^\ell}$, as $M$ is invertible modulo $q^\ell$. 
Since $q^\ell > 2 \cdot 2^{2B_M}$, each residue class modulo $q^\ell$ contains at most one fraction with numerator and denominator at most $2^{B_M}$.
The $\ell \in O(B_M / \log q)$ rounds cost $O(m^2 B_M (B + \log m))$ bit operations in total, and the inverse modulo $q$ and the $k+1$ reconstructions no more, so the solve costs $O(m^3 (B + \log m)^2)$ bit operations \cite{Dixon82}.

\paragraph{Soundness of the modified Verifier.} The argument of the previous subsection holds verbatim: the Verifier ends holding a support and weights, so \eqref{eq:V-inc} makes its final check exact. No matter what the Prover supplies, the Verifier accepts only a pair $(\alpha,\lambda)$ satisfying \eqref{eq:kkt} under its own exact substitution check over $\Z$, the last row of which reads $\mathbf 1^{\sf T}\alpha = 1$, and only when every $\alpha_j$ is strictly positive---so the accepted $\alpha$ is a distribution whether or not $q$ was prime and whether or not it divided $\det M_z$. The prime $q$ bears on completeness and on the running time, never on soundness.

For completeness, suppose there exists a distribution $\mu \in \Delta(\bool^n)$ with $\D_\PP(\mu) \le \tau$. By Carath\'eodory's Theorem, we know there exists a $\mu'$ that is supported on only $m+1$ points with the same value; here, we show that in particular there exists one whose weights are pinned down by the support alone, and are rational numbers with $O(m B + m \log m)$ bits. For this, we need the following Lemma.

\begin{lemma}
	\label{lem:rationalopt}
	If $m+1 \ge k$ and $V \in \mathbb Z^{m \times k}$ is an integer matrix whose entries are bounded in magnitude by $2^B$, then the optimization problem
	\begin{equation}
		\min_{\alpha \in \mathbb R^{k}}  \norm[\big]{V \alpha}^2_2
		\quad\text{subject to}~~ \alpha \ge 0, ~\sum_{i=1}^k \alpha_i = 1
		\label{eq:quad-prog}
	\end{equation}
	admits a rational solution $\alpha^*$ whose components have a common denominator representable in
	$O(m \log m + Bm)$
	bits. Moreover, writing $S$ for the support of such a solution of minimal support, $s \coloneqq |S|$, and $M$ for the matrix \eqref{eq:kkt} formed from the columns $V_S$:
	\begin{enumerate}[nosep,itemsep=0.4ex, topsep=0.4ex]
		\item $M$ is non-singular, so that $(\alpha^*_S, \lambda)$ is the \emph{unique} solution of $M(a,\lambda)^{\sf T} = (\mathbf 0, 1)^{\sf T}$;
		\item $\alpha^*_S > 0$ entrywise; and
		\item $|\det M| \le \big(2^B(m+2)\big)^{2(m+2)}$.
	\end{enumerate}
\end{lemma}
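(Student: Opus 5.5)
The plan has four steps: show a minimizer exists, pass to a minimal-support optimum, show its KKT matrix is non-singular, and then bound the determinant and apply Cramer's rule.

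\emph{Existence.} The objective $\norm{V\alpha}_2^2$ is continuous and the simplex is compact, so a minimizer exists. Among all minimizers, fix one, $\alpha^*$, whose support $S$ has the smallest possible size $s$.

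\emph{KKT conditions.} Restricted to $S$, the point $\alpha^*_S$ lies in the relative interior of the face $\{\alpha_S \ge 0, \mathbf 1^{\sf T}\alpha_S = 1\}$, so the only active constraint there is the equality. Stationarity therefore gives $2V_S^{\sf T}V_S\alpha^*_S + \lambda\mathbf 1 = 0$ for some multiplier $\lambda$, up to sign conventions absorbed into $\lambda$. Together with $\mathbf 1^{\sf T}\alpha^*_S = 1$, this shows $(\alpha^*_S,\lambda)$ solves $M(a,\lambda)^{\sf T} = (\mathbf 0,1)^{\sf T}$. Item 2 holds by the definition of the support.

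\emph{Non-singularity (item 1) is the main obstacle.} Suppose $M(u,\nu)^{\sf T}=0$ with $(u,\nu)\ne 0$. Then $\mathbf 1^{\sf T}u=0$ and $2V_S^{\sf T}V_Su=-\nu\mathbf 1$. Taking the inner product with $u$ gives $2\norm{V_Su}^2 = -\nu\,\mathbf 1^{\sf T}u = 0$, so $V_Su=0$, and hence $\nu\mathbf 1=0$, i.e.\ $\nu=0$. Consequently $u\ne 0$. Now $u$ sums to zero and $V_Su=0$, so every point $\alpha^*_S+tu$ on the line has the same objective value, since $V_S(\alpha^*_S+tu)=V_S\alpha^*_S$. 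Because $u\ne0$ sums to zero, it has a negative entry. Moving $t$ until the first coordinate of $\alpha^*_S+tu$ hits zero gives another feasible minimizer with strictly smaller support, contradicting minimality. Hence $M$ is non-singular, the solution is unique, and $\alpha^*$ is the rational vector given by Cramer's rule.

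\emph{Determinant bound (item 3) and bit length.} $M$ has order $s+1\le k+1\le m+2$.

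1. Each entry of $2V_S^{\sf T}V_S$ has magnitude at most $2m\,2^{2B}$, which is at most $(2^B(m+2))^2$; the ones-border entries are trivially below this.
2. Each row therefore has Euclidean norm at most $\sqrt{m+2}\,(2^B(m+2))^2$.
3. Hadamard's inequality then gives $|\det M|\le\big(\sqrt{m+2}\,(2^B(m+2))^2\big)^{m+2}$.

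Step 3 is at most $(2^B(m+2))^{2(m+2)}$ times a factor $(m+2)^{(m+2)/2}$. If the stated constant must be exact, I would tighten this by splitting the block entries: bound the rows of the $V$-block using $\norm{V_S^{\sf T}V_S}$ row norms of at most $2m2^{2B}\sqrt{s}$, and rescale the bordered row and column. Failing that, I would accept a slightly worse constant, since only the $O(m\log m+Bm)$ bit count matters.

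By Cramer's rule, $\alpha^*_j=\det M_j/\det M$, where each $M_j$ is $M$ with one column replaced by $(\mathbf 0,1)^{\sf T}$; the same Hadamard bound applies to every $\det M_j$. The common denominator $|\det M|$ therefore has $O((m+2)(B+\log(m+2)))=O(m\log m+Bm)$ bits, and the numerators do as well.
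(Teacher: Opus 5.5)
Your argument follows the paper's proof step for step. You take a minimal-support optimum, write the Lagrange system on its support, and show $M$ is non-singular via a null vector $(u,\nu)$ that forces $\nu=0$, $V_Su=0$, $\mathbf 1^{\sf T}u=0$, which gives a support-reducing direction. You then finish with Cramer's rule and Hadamard's inequality. Your compactness argument for the existence of a minimizer is a small addition that the paper leaves implicit.

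The one shortfall is item 3, which your main computation does not prove. The extra factor $(m+2)^{(m+2)/2}$ comes from inflating every entry of $2V_S^{\sf T}V_S$ to $(2^B(m+2))^2$ before multiplying by the $\sqrt{m+2}$ from the row length. No rescaling of the border is needed; just keep the entry bound $2m\,2^{2B}$.

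With that bound, each of the first $s$ rows has Euclidean norm at most $\sqrt{s(2m\,2^{2B})^2+1}\le 2m\sqrt{s}\,2^{2B}+1\le 2^{2B}(m+2)^2$. The paper bounds columns instead, which is the same since $M$ is symmetric. The last inequality uses $s\le m+1$ and $2\sqrt{m+1}\le m+4$. The border row has norm $\sqrt{s}\le 2^{2B}(m+2)^2$. Hadamard then gives $|\det M|\le (2^B(m+2))^{2(s+1)}\le (2^B(m+2))^{2(m+2)}$, as stated.

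The exact constant is not cosmetic. Item 3 is what sets $B_M$ in the proof of \Cref{thm:inNP}, which controls the number of prime divisors of $\det M_{z_S}$ and the length of the $q$-adic lifting. ``Accepting a slightly worse constant'' would therefore prove a different lemma, even though it would only enlarge $B_M$ by an $O(m\log m)$ term and leave the $O(m\log m+Bm)$ bit count intact.
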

\begin{proof}
	Let $\alpha \in \Delta[k]$ be a solution to \eqref{eq:quad-prog} with minimal support $S = \{ i \in [k] : \alpha_i > 0 \}$, and let $s \coloneqq |S|$.  We will show that this solution has the desired rational property.
	Write $\alpha_S$ for the restriction of $\alpha$ to its support, and $V_S \in \Z^{m \times s}$ for the restriction of $V$ to the corresponding columns.
	Since $\alpha_S > 0$ is positive in this subspace, $\alpha_S \in \Delta[s]$ is the solution to the modified problem
	\[
		\min_{a \in \mathbb R^s} \norm[\big]{V_S a }_2^2
		\quad \text{subject to}\quad
		\sum_{i \in S} a_i = 1.
	\]
	The stationary point can be found with a single Lagrange multiplier $\lambda$ corresponding to this one constraint. More precisely, the Lagrangian is given by $\mathcal L(a,\lambda) = \norm{V_S\, a}_2^2 +\lambda(-1 +\sum_i a_i)$, and a solution $a$ must satisfy $\nabla\mathcal L = 0$. Since $\nabla_a \norm{V_S\,a}_2^2 = 2 V_S^{\sf T} V_S\,a$,
	the stationary condition gives us the system of equations
	\begin{equation}
		0  =
		\nabla_{\!a,\lambda} \,\mathcal L(a,\lambda)  =
		\begin{bmatrix}
			2 V_S^{\sf T} V_S \, a + \lambda \mathbf 1 \\
			\mathbf{1}^{\sf T} a - 1
		\end{bmatrix}
		=
		\begin{bmatrix}
			2 V_S^{\sf T} V_S  & \mathbf{1} \\
			\mathbf{1}^{\sf T} & 0
		\end{bmatrix}
		\begin{bmatrix}
			a \\
			\lambda
		\end{bmatrix}
		- \begin{bmatrix}
			0 \\1
		\end{bmatrix},
		\label{eq:matrix-eq}
	\end{equation}
	which is the system \eqref{eq:kkt} formed from $V_S$.
	Since it's a linear system of equations with integer coefficients, it has a rational solution. To bound its magnitude, we use the form on the right-hand side of \eqref{eq:matrix-eq}.
	Let $M \coloneqq \begin{bmatrix}
			2 V_S^{\sf T} V_S  & \mathbf{1} \\
			\mathbf{1}^{\sf T} & 0
		\end{bmatrix} \in \Z^{(s+1) \times (s+1)}$
	be the key matrix on the right.
	We claim that $M$ is non-singular.

	In search of contradiction, suppose that $M y = 0$ for some nonzero $y = (d,\gamma)^{\sf T} \in \R^{s+1}$.
	This means that (a)
	$0 = 2 V_S^{\sf T} V_S d + \gamma \mathbf 1$ and (b) $\mathbf 1^{\sf T} d = 0$. Left-multiplying equation (a) by $d^{\sf T}$ and substituting equation (b), we find that
	\begin{align*}
		0 = 2 d^{\sf T} V_S^{\sf T} V_S d  + \gamma d^{\sf T}\mathbf 1
		\;=\; 2 d^{\sf T} V_S^{\sf T} V_S d
		\;=\; 2 \norm{ V_S d }_2^2,
	\end{align*}
	from which it follows that $V_S d = 0$. Plugging this back into equation (a) above, we find that $0 = \gamma \mathbf 1$, meaning $\gamma = 0$. Since $y =(d,\gamma) \ne 0$ by assumption, this implies $d \ne 0$.
	Altogether, this means we have a vector $d \in \R^s$ with the properties
	\[
		d \ne 0,
		\qquad
		V_S \,d = 0,
		\qquad\text{and}
		\quad
		\mathbf{1}^{\sf T}d = 0,
	\]
	meaning that $\norm{V_S(a + d)}_2^2 = \norm{V_S\,a}_2^2$ and $\mathbf{1}^{\sf T}(a + d) = \mathbf{1}^{\sf T}a = 1$.
	In other words, if $a$ is a solution to the optimization problem, so too is $a+td$, provided $a + td \ge 0$. Since $\mathbf 1^{\sf T} d = 0$ and $d \ne 0$, some component of $d$ is negative, so $t^* \coloneqq \min_{i : d_i < 0} (-a_i / d_i)$ is well defined and positive; $a + t^* d$ is then non-negative, sums to one, and is zero in some component, so it is a solution to the optimization problem with strictly smaller support, contradicting minimality.

	Now that we know $M \in \Z^{(s+1)\times(s+1)}$ is non-singular, we can apply Cramer's rule, which tells us that the unique solution $(\alpha^*,\lambda)^{\sf T}$ at component $i$ is equal to $(\det M_i) / (\det M)$, where $M_i$ is the matrix obtained by replacing the $i^{\text{th}}$ column of $M$ by $[0, \ldots, 0, 1]^{\sf T}$.
	Since $M$ and $M_i$ have integer components, their determinants are integers, and thus $\alpha$ is a rational number whose common denominator divides $|\det M|$.
	All that remains is to bound $\det M$, for which we turn to Hadamard's inequality \cite{hadamard1893resolution}, which states
	that the determinant of a matrix is at most the product of the L2 norms of its columns.
	Every entry of the upper-left block of $M$ is twice a sum of $m$ numbers, each a product of two integers of magnitude at most $2^B$, and is thus at most $2m \cdot 2^{2B}$ in magnitude. The L2 norm of each column is therefore at most $2m\sqrt{s+1} \cdot 2^{2B} +1$, and thus the common denominator is of the order
	\begin{align*}
		\det M & \le \big(2m\sqrt{s+1} \cdot 2^{2B} +1 \big)^{s+1}            \\
		       & \le \big(2^B(m+2)\big)^{2(m+2)} \;\le\; 2^{2(m+2) (B + \lceil \log_2(m+2) \rceil)} \\
		       & \in 2^{O(m \log m + Bm)}
	\end{align*}
	since $s \le k \le m+1$.
	Therefore, it can be represented with $O(m \log m + Bm)$ bits.
	Finally, the non-singularity of $M$ bounds the support: were $s > m+1$, the $s$ columns of $\begin{bsmallmatrix}V_S \\ \mathbf 1^{\sf T}\end{bsmallmatrix} \in \R^{(m+1)\times s}$ would be linearly dependent, giving a nonzero $d$ with $V_S d = 0$ and $\mathbf 1^{\sf T} d = 0$, which the argument above has just ruled out.
\end{proof}

Let's now apply \cref{lem:rationalopt} to finish off the proof of completeness.
Recall that we are in the case where there exists a distribution $\mu$ with $\D_\PP(\mu) \le \tau$, and that Carath\'eodory's theorem guarantees the existence of a witness $(z,\alpha)$ of $m+1$ points.
Applying \cref{lem:rationalopt} to $V = V_z$ and taking $S$ to be the support of a minimal-support optimum $\alpha^*$, the Carath\'eodory weights $\alpha$ are feasible for \eqref{eq:quad-prog}, so $\norm{V_z\alpha^*}_2 \le \norm{V_z\alpha}_2$, and \eqref{eq:V-inc} turns this into $\D_\PP(\mu_{z_S,\alpha^*_S}) \le \D_\PP(\mu_{z,\alpha}) = \D_\PP(\mu) \le \tau$. Its support satisfies $s \le k = m+1$.
The honest Prover sends (only) this support set $z_S$. The Verifier forms $M_{z_S}$, which is non-singular by clause 1, so the solve returns $\alpha^*_S$ and nothing else; these weights are strictly positive by clause 2, so the Verifier does not reject; and the final check passes because $\D_\PP(\mu_{z_S,\alpha^*_S}) \le \tau$. The weights themselves are never transmitted, so the certificate is $s \cdot n \le (m+1)n$ bits of support, together with the prime, which we bound next.

It remains to exhibit the auxiliary prime $q$. Clause 3 of \cref{lem:rationalopt} bounds $\det M_{z_S}$ in magnitude by $2^{B_M}$. Being a nonzero integer of magnitude at most $2^{B_M}$, it has at most $B_M$ distinct prime divisors, so among the first $B_M+1$ primes there is one, call it $q$, that does not divide it; and by Chebyshev's bound the $(B_M+1)$st prime is $O(B_M \log B_M)$, so $q$ is representable in $O(\log B_M) = O(\log m + \log B)$ bits.\footnote{Concretely, at $m$ equal to one million and $B = 32$, a $31$-bit number suffices.} Elimination modulo $q$ then certifies that $\det M_{z_S} \not\equiv 0 \pmod q$, hence that $M_{z_S}$ is non-singular over $\Q$, and the lifting \eqref{eq:lifting} returns the exact rational solution of \eqref{eq:kkt} in $O(m^3 (B+\log m)^2)$ bit operations.
The certificate is therefore of length $(m+1)n + O(\log m + \log B) \in O(mn + \log B)$, as claimed.

Observe that the prime $q$ is not needed for correctness, only for speed, and it costs next to nothing: at $O(\log m + \log B)$ bits it does not affect the certificate length above, and at $m$ equal to one million and $B = 32$ it is a $31$-bit number. Sending it as part of the proof certificate spares Verifier the search, and simplifies our analysis.

For the runtime, the accounting of the previous subsection applies with $B_M$ in place of $B_{\epsgap}$: its inner loop costs $O(m^2(n + B B_M)) \subseteq O(m^2 n + m^3 B(B + \log m))$ and its accumulation costs $O(m B_M^2) = O(m^3 (B + \log m)^2)$, so the final check costs $O(m^3(B+\log m)^2 + m^2 n)$; forming $2 V_z^{\sf T} V_z$ and the exact substitution check cost $O(m^3(B+\log m)^2)$ as well, and adding the $O(m^3(B+\log m)^2)$ of the solve leaves the total unchanged, giving the $O(m^3(B+\log m)^2 + m^2n)$ of \cref{thm:inNP}. This completes its proof.

\section{Hardness of approximating consistency in (deterministic) CPCs}\label{sec:hardness}
In this section we observe that CPCs can be used to express CNF formulas in a straightforward and ``lossless'' fashion.
\unskip\footnote{A related reduction has been documented in the theory probabilistic dependency graphs \cite{richardson2023inference}, but the one we present here is more direct, and the inapproximability lifts more directly in the normalized $\ell_2$ inconsistency measure.}
Consequently, we can use tight inapproximability results for $\EkSAT$ to automatically obtain inapproximability of $\EXPCONSIST$ to the same parameters, and \Cref{cor:mip} lifts this to $\NEXP$-hardness of $\MODELCONSIST$.

The reduction (\Cref{claim:ksat}) preserves many properties of the source formula, such as bounded variable occurrence, arity,\footnote{Formally,
	the \emph{arity} of the probabilistic claim $(x,y,p)$ is equal to $k = k' + 1$, where $k'$ is the number of non-$\ast$ entries in $x$. The arity of a CPC $\PP$ is the maximal arity among its claims.} and the fraction of violated clauses/claims by any assignment. However, $\EXPCONSIST$ allows \emph{distributions} over assignments as ``solutions.'' To argue that the reduction is gap-preserving, we (tightly) bound $\D(\PP)$ of the reduced CPCs in terms of the fraction of claims violated by an assignment in \Cref{lem:dirac}.
\subsection{Violated claims and inconsistency in deterministic CPCs}
The CPCs derived from CNFs are \emph{deterministic}, in the sense that all claims are about events that occur (or do not occur) \emph{with certainty}.
\begin{definition}[Deterministic CPC]
	A CPC $\PP$ is \emph{deterministic} if for all $(x,y,p) \in \PP$ it holds that $p \in \bool$.\footnote{\emph{Tedious comment:} the precision-$B$ numerator convention $p = p^\dagger/2^B$ with $p^\dagger \in \{0,\dots,2^B-1\}$ excludes $p = 1$; for deterministic claims we admit the numerator $p^\dagger = 2^B$ as well, so that $p \in \bool$ is representable at every precision. This adds one numerator value ($B+1$ bits per claim) and affects no bound.} We say that assignment $w \in \bool^n$ \emph{violates} the claim $(x,y,p) \in \PP$ if $w$ agrees with the context $x$, and $w_y = 1-p$. We let $\Viol_\PP(w)$ denote the fraction of claims in $\PP$ violated by $w$,
	\begin{equation*}
		\Viol_\PP(w) \defeq \frac{\left| \left\{P \in \PP : w\textrm{ violates }P\right\} \right|}{|\PP|},\quad
		\Viol(\PP) \coloneqq\min_{w \in \bool^n}\Viol_\PP(w).
	\end{equation*}
\end{definition}
\begin{lemma}\label{lem:dirac}
	Let $\PP$ be a deterministic CPC. Then
	\begin{equation*}
		\Viol(\PP) \leq \D(\PP) \leq \sqrt{\Viol(\PP)}.
	\end{equation*}
\end{lemma}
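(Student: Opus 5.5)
The plan is to prove the two inequalities separately. The upper bound comes from evaluating the inconsistency at a point mass. The lower bound compares the $\ell_2$ average of the residuals with their $\ell_1$ average, and then uses the fact that for deterministic claims each residual has a fixed sign.

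\emph{Upper bound.} Let $w^\ast \in \bool^n$ attain $\Viol(\PP)$, and let $\delta_{w^\ast}$ be the point mass on it. For each claim $(x,y,p) \in \PP$ the residual is
\[
\delta_{w^\ast}(x \cap y) - p\,\delta_{w^\ast}(x) = \mathds{1}[w^\ast \models x]\,(w^\ast_y - p).
\]
Since $p \in \bool$, this is $0$ unless $w^\ast \models x$ and $w^\ast_y = 1-p$, that is, unless $w^\ast$ violates the claim. When $w^\ast$ does violate the claim, the residual is $\pm 1$. Hence $\D^2_\PP(\delta_{w^\ast}) = \Viol_\PP(w^\ast) = \Viol(\PP)$. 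Taking the infimum over distributions gives $\D(\PP) \le \sqrt{\Viol(\PP)}$.

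\emph{Lower bound.} Fix an arbitrary $\mu \in \Delta\bool^n$ and write $r_i(\mu) = \E_{\omega\sim\mu}\bigl[\mathds{1}[\omega \models x_i](\omega_{y_i} - p_i)\bigr]$ for the residual of the $i$th claim.

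First, by Cauchy--Schwarz (the root mean square dominates the mean of absolute values),
\[
\D_\PP(\mu) = \sqrt{\tfrac1m \textstyle\sum_i r_i(\mu)^2} \ge \tfrac1m \textstyle\sum_i |r_i(\mu)|.
\]

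The key step is that for a deterministic claim the integrand $\mathds{1}[\omega \models x_i](\omega_{y_i} - p_i)$ has constant sign in $\omega$:
\begin{itemize}
\item if $p_i = 0$, it equals $\mathds{1}[\omega \text{ violates claim } i] \ge 0$;
\item if $p_i = 1$, it equals $-\mathds{1}[\omega \text{ violates claim } i] \le 0$.
\end{itemize}
So no cancellation occurs inside the expectation, and $|r_i(\mu)| = \mu(\{\omega : \omega \text{ violates claim } i\})$.

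Averaging over $i$ and exchanging the sum with the expectation gives
\[
\tfrac1m\textstyle\sum_i |r_i(\mu)| = \E_{\omega\sim\mu}[\Viol_\PP(\omega)] \ge \Viol(\PP).
\]
Hence $\D_\PP(\mu) \ge \Viol(\PP)$ for every $\mu$, and taking the infimum yields $\Viol(\PP) \le \D(\PP)$.

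The only step needing care is this sign-constancy observation. It is precisely where determinism of the CPC is used: for general $p \in (0,1)$, residuals from different assignments could cancel under mixing, and the lower bound would fail.
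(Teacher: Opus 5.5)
Your proof is correct and follows the paper's argument. The upper bound comes from evaluating at the point mass on an optimal assignment. The lower bound comes from Cauchy--Schwarz, then averaging $\Viol_\PP$ over the support of $\mu$. Your sign-constancy observation is exactly what justifies the paper's unremarked step of writing $|\mu(x\cap y)-p\,\mu(x)|$ as $\sum_i \alpha_i\,|\delta_i(x\cap y)-p\,\delta_i(x)|$. Making it explicit, and noting that this is where determinism is used, makes the argument more rigorous than the paper's.
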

\begin{proof}[Proof of \Cref{lem:dirac}]
	Let $n$ be the number of variables of $\PP$. First, observe that for any $(x,y,p) \in \PP$ and any $w \in \bool^n$, letting $\delta_w$ denote the point mass distribution supported on $w$,
	\begin{equation}\label{eq:dirac-userful}
		|\delta_w(x \cap y) - p\delta_w(x)| = \begin{cases}
			1\quad \mathrm{if }\ w\text{ violates } (x,y,p) \\
			0\quad \mathrm{otherwise.}
		\end{cases}
	\end{equation}
	Because if $w$ violates $(x,y,p)$ then, by definition, $w$ agrees with $x$ which means that $\delta_w(x) = 1$, but $w_y = 1-p$. By case analysis, this means that $\delta_w(x\cap y) = 1-p$, therefore $(\delta_w(x \cap y) - p\delta_w(x))^2 = 1$. Conversely, if $w$ does not violate $(x,y,p)$, then there are two possibilities: Either $w$ does not extend $x$ in which case $\delta_w(x)=\delta_w(x \cap y)=0$, or $w$ agrees with $x$ and $w_y = p$, in which case $\delta_w(x \cap y) = p=p\delta_w(x)$.

	We now prove the lemma, starting from the leftmost inequality. Let $n$ denote the number of variables of $\PP$, and let $\mu$ be a distribution over $\bool^n$. We will show that $\D_\PP(\mu) \geq \Viol(\PP)$.

	Let $S = \{w_1,\dots,w_s\}$ denote the support of $\mu$. Let $\delta_i \coloneqq \delta_{w_i}$ denote the point mass distribution supported on $w_i$, and so we write $\mu = \sum_{i=1}^s \alpha_i \delta_{i}$ where $\alpha_i > 0$ for all $i \in [s]$ and $\sum_i \alpha_i = 1$. Then we can write
	\begin{align*}
		\D^2_\PP(\mu) & = \frac{1}{m}\sum_{(x,y,p) \in \PP}\left|\mu(x\cap y) - p\mu(x) \right|^2.
		\\
		              & =
		\frac{1}{m}\sum_{(x,y,p) \in \PP}\left(\sum_{i=1}^s\alpha_i \left|\delta_i(x\cap y) - p\delta_i(x) \right| \right)^2
	\end{align*}
	Which, by the Cauchy--Schwartz inequality, is
	\begin{align*}
		\D^2_\PP(\mu) & \geq \frac{1}{m} \cdot \frac{1}{m} \left(\sum_{(x,y,p) \in \PP}\sum_{i=1}^s\alpha_i \left|\delta_i(x\cap y) - p\delta_i(x) \right| \right)^2
		\\
		              & =
		\left(\sum_{i=1}^s\alpha_i \cdot \frac{1}{m}\sum_{(x,y,p) \in \PP} \left|\delta_i(x\cap y) - p\delta_i(x) \right| \right)^2,
	\end{align*}
	By \Cref{eq:dirac-userful} and definition of $\Viol(\PP)$ as $\min_w \Viol_{\PP}(w)$ we have
	\begin{equation*}
		\D^2_\PP(\mu) \geq \left( \sum_{i=1}^s\alpha_i \Viol_\PP(w_i) \right)^2 \geq \left( \Viol(\PP) \cdot \sum_{i=1}^s \alpha_i  \right)^2 = \Viol(\PP)^2.
	\end{equation*}
	Taking the square root of both sides gives the lower bound.

	For the upper bound, let $w \in \bool^n$, we will show that $\Viol_\PP(w) = \D^2_\PP(\delta_w)$ and so in particular $\Viol_\PP(w) \geq \D^2(\PP)$. By \Cref{eq:dirac-userful} and definition of $\Viol_\PP(w)$ as the fraction of claims violated by $w$,
	\begin{equation*}
		\D^2_\PP(\delta_w) = \frac{1}{m}\sum_{(x,y,p) \in \PP}(\delta_w(x \cap y) - p\delta_w(x))^2 = \Viol_\PP(w),
	\end{equation*}
	using the fact that the summands are Boolean (\Cref{eq:dirac-userful}), therefore each summand's square coincides with its absolute value.

\end{proof}

\paragraph{Tightness of \Cref{lem:dirac}.} The lower and upper bounds are of course attained for consistent deterministic CPCs, because all quantities are zero. Let us briefly detour to discuss tightness for inconsistent instances. The lower bound can still be an equality: consider a CPC containing just two contradictory claims $\{(\ast, y,0), (\ast, y,1)\}$; then $\Viol(\PP) = 1/2$, and $\D_\PP(\U) = (1/4 + 1/4) = 1/2$ as well for the uniform distribution over its single variable.

For the upper bound, we first note that it becomes an equality if \emph{and only if} the CPC is consistent. To see this, let $\PP$ be an inconsistent deterministic CPC with $|\PP| = m$ claims, and let $w^\ast \in \bool^n$ be a maximally satisfying assignment. Take any claim $P \in \PP$ violated by $w^\ast$, and take $w_P \in \bool^n$ to be any assignment that satisfies the claim $P$. Consider the distribution $\mu$ that assigns mass $(1-1/m)$ to $w^\ast$ and $1/m$ mass to $w_P$. Then any claim satisfied by $w^\ast$ is violated by $\mu$ with probability at most $1/m$, but now $P$ is violated by $\mu$ with probability $(1-1/m)$, giving
\begin{equation}\label{eq:uppertight}
	\D_\PP(\mu)^2 \leq \Viol_\PP(w^\ast) - 1/|\PP|^2 < \Viol_\PP(w^\ast) = \Viol(\PP)
\end{equation}

That said, the inequality in \Cref{eq:uppertight} (and the upper-bound of \Cref{lem:dirac} more generally) can approach equality as the number of claims $m$ grows by simply adding more consistent claims: Consider a single-variable CPC $\PP_m$ containing a single copy of $\{*,y,0\}$ and $m-1$ copies of $\{*,y,1\}$. Then $\Viol(\PP) = 1/m$, but similarly to the argument before \Cref{eq:uppertight}, a distribution $\mu$ that assigns $y=0$ with probability $1/m$ will have $\D_\PP(\mu)^2 = \Viol(\PP) - 1/m^2$, so the ratio $\D_\PP(\mu)^2 / \Viol(\PP) \to 1$ as $m \to \infty$.

\subsection{Reduction from $\EkSAT$.}
Let $F$ be an Exact $k$CNF formula over $n$ Boolean variables. We let $|F|$ denote its number of clauses, and for any assignment $w \in \bool^n$ we let $\Viol_F(w)$ denote the number of clauses violated by $w$ in $F$. We  write
\begin{equation*}
	\Viol(F) = \frac{1}{|F|}\min_{w \in \bool^n}\Viol_F(w).
\end{equation*}
\begin{claim}\label{claim:ksat}
	There is a polynomial-time computable reduction mapping Exact $k$CNF formulas $F$ to deterministic CPCs $\PP_F$, such that $\Viol(F) = \Viol(\PP_F)$.
\end{claim}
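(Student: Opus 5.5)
The plan is to translate each clause of $F$ into a single deterministic claim, chosen so that an assignment violates the claim exactly when it violates the clause. The key observation is that a clause $C = \ell_1 \lor \dots \lor \ell_k$ is violated by $w$ iff all $k$ literals are false under $w$. We designate one literal, say $\ell_k$ on variable $v_k$, as the target, and the remaining $k-1$ literals as the context.

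Concretely, for each clause $C$ we define the partial assignment $x_C \in \{0,1,\ast\}^n$ by setting $(x_C)_{v_i} \coloneqq 0$ if $\ell_i = v_i$ is a positive literal, and $(x_C)_{v_i} \coloneqq 1$ if $\ell_i = \neg v_i$, for $i \in [k-1]$, with $\ast$ elsewhere. Thus $w \models x_C$ iff $\ell_1,\dots,\ell_{k-1}$ are all false under $w$. We take target $y_C \coloneqq v_k$, and set $p_C \coloneqq 1$ if $\ell_k$ is positive and $p_C \coloneqq 0$ if $\ell_k$ is negative. In words, the claim $(x_C, y_C, p_C)$ asserts that whenever the first $k-1$ literals fail, $\ell_k$ holds.

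By the definition of violation, $w$ violates $(x_C,y_C,p_C)$ iff $w \models x_C$ and $w_{v_k} = 1 - p_C$. The second condition says exactly that $\ell_k$ is false, so the claim is violated iff every literal of $C$ is false, i.e.\ iff $w$ violates $C$. We then let $\PP_F$ be the multiset $\{(x_C,y_C,p_C)\}_{C \in F}$, so that $|\PP_F| = |F|$. For every $w$, the number of claims of $\PP_F$ violated by $w$ equals $\Viol_F(w)$. Dividing by $|F| = |\PP_F|$ and minimizing over $w$ gives $\Viol(\PP_F) = \Viol(F)$. The map is clearly polynomial time, writing $O(n + \log n + B)$ bits per clause (deterministic $p$ admitted per the footnote), and it preserves the arity $k$.

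The only delicate point is well-definedness of $x_C$. The context entries must not conflict, which holds because in an Exact $k$CNF the $k$ literals of a clause are on distinct variables. We also need $v_k$ not to appear in the context, which holds for the same reason. If one wishes to allow repeated variables, tautological clauses (containing $v$ and $\neg v$) would be mapped to a tautological claim such as in \Cref{def:implicit-cpc}, and duplicate literals would be merged. I expect this bookkeeping to be the only obstacle, and it is not substantive.
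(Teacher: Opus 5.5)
Your proposal is correct and is essentially the paper's proof: it uses the same clause-to-claim map (the context sets the first $k-1$ literals false, the target is the $k$th variable with $p$ equal to that literal's polarity) and the same clause-by-clause correspondence of violations, with $|F| = |\PP_F|$ giving equality of the violated fractions. You also check explicitly that the context is well defined because the variables within an Exact $k$CNF clause are distinct, which the paper leaves implicit in its sorted-variables footnote.
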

\begin{proof}
	Let $F = \{F_j\}_{j=1}^m$ be a Exact $k$CNF formula with $m$ clauses over $n$ variables. We write $F_j = (j_1=b_1 \vee \dots \vee  j_k=b_k)$ for variables $j_1,\dots,j_k \in [n]$ and bits $b_1,\dots,b_k \in \bool$.\footnote{Formally, $b_i=0$ indicates that the variable $j_i$ is negated in the clause, and we assume the $j_1,\dots,j_k$ are sorted.} For each clause $F_j$, the reduction adds the Probabilistic Claim $P_j = (x^j,j_k,b_k)$ where $x^j_{j_i} = (1-b_i)$ for each $i \in [k-1]$, and the rest of $x^j$'s entries are set to $\ast$. The output of the reduction is the instance $\PP = \{P_j\}_{j=1}^m$.

	Note that $|F| = |\PP_F|$, and so to prove the claim it suffices to show that for any assignment $w \in \bool^n$, the number of clauses of $F$ that $w$ violates equals the number of claims of $\PP$ that $w$ violates; since $|F| = |\PP_F|$, the violated \emph{fractions} then coincide as well. We will show a correspondance between violated clauses and claims: $w$ violates the clause $F_j = (j_1=b_1 \vee \dots \vee  j_k=b_k)$ if and only if $w_{j_i} = 1 - b_i$ for all $i \in [k]$. By construction, this is equivalent to having $w$ extend $x^j$ but $w_{j_k} = 1 - b_k$, \ie, $w$ violates $P_j$.

\end{proof}
\begin{claim}\label{claim:hardnessofapx}
	For all $\delta \in (0, 2^{-k})$ and $k \geq 3$, the following promise problem is $\NP$-hard. The instance is a (deterministic) CPC $\PP$ such that $\D(\PP) \leq 2^{-k}$, and the goal is to distinguish
	\begin{itemize}
		\item YES: $\D(\PP) = 0$, from
		\item NO: $\D(\PP) \geq 2^{-k} - \delta$,
	\end{itemize}
\end{claim}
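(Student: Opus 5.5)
We reduce from the gap version of Exact $k$SAT furnished by Håstad's optimal inapproximability theorem. For every $k \ge 3$ and $\delta' > 0$, it is $\NP$-hard to distinguish satisfiable Exact $k$CNF formulas $F$ from those with $\Viol(F) \ge 2^{-k} - \delta'$. The latter case means every assignment violates at least a $2^{-k} - \delta'$ fraction of the clauses, or equivalently, at most a $1 - 2^{-k} + \delta'$ fraction are satisfiable. We then compose this with the map $F \mapsto \PP_F$ of \Cref{claim:ksat}, which is polynomial-time and satisfies $\Viol(\PP_F) = \Viol(F)$. Finally, we translate $\Viol$ into $\D$ using \Cref{lem:dirac}.

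\textbf{Completeness (YES case).} If $F$ is satisfiable, then $\Viol(\PP_F) = 0$. The upper bound of \Cref{lem:dirac} then gives $\D(\PP_F) \le \sqrt{0} = 0$, so $\D(\PP_F) = 0$.

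\textbf{Soundness (NO case).} If $\Viol(F) \ge 2^{-k} - \delta'$, the lower bound of \Cref{lem:dirac} gives $\D(\PP_F) \ge \Viol(\PP_F) \ge 2^{-k} - \delta'$. We choose $\delta' = \delta$.

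\textbf{Promise $\D(\PP) \le 2^{-k}$.} This is the delicate step. \Cref{lem:dirac} only gives $\D(\PP_F) \le \sqrt{\Viol(F)}$, and Håstad instances may have $\Viol(F)$ as large as about $2^{-k}$, whose square root far exceeds $2^{-k}$. We therefore bound $\D(\PP_F)$ directly using the uniform distribution $\U$ over $\bool^n$.

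Each claim $P_j = (x^j, j_k, b_k)$ has a context fixing $k-1$ distinct variables, and its target $j_k$ is a further distinct variable, since the variables in a clause are sorted and distinct. Hence $\U(x^j) = 2^{-(k-1)}$ and $\U(x^j \cap \{\omega_{j_k} = 1\}) = 2^{-k}$. With $p = b_k \in \{0,1\}$, the residual is $2^{-k} - b_k 2^{-(k-1)} = \pm 2^{-k}$. Every squared residual therefore equals $2^{-2k}$, and $\D_{\PP_F}(\U) = 2^{-k}$ exactly. This gives $\D(\PP_F) \le 2^{-k}$ for every instance in the image of the reduction, so the promise holds in both cases.

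\textbf{Remaining details.} One point to verify is the target convention. The event is "$y = 1$" while the claimed probability is $b_k$, which matches the deterministic-claim convention of \Cref{claim:ksat}. Another is that $\delta < 2^{-k}$ keeps the NO threshold positive, so the two cases are disjoint. The main obstacle is importing Håstad's theorem in exactly the Exact-$k$SAT form with a $\Viol$ gap. The promise bound is an easy but necessary computation.
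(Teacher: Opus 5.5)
Your proof is correct and follows the paper's own argument. Both compose Håstad's gap Exact-$k$SAT with the map of \Cref{claim:ksat}, get the YES and NO cases from the two bounds of \Cref{lem:dirac}, and certify the promise $\D(\PP_F) \le 2^{-k}$ by evaluating the inconsistency at the uniform distribution. Your explicit remark that the target variable is distinct from the $k-1$ context variables is a detail the paper's computation uses only implicitly.
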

\begin{proof}
	Fix $k \geq 3$ and $\delta \in (0, 2^{-k})$. Theorems 6.5 and 6.14 of \cite{Hastad01} give a polynomial time reduction from any $L \in \NP$ to Exact $k$CNF formulas $F$ mapping instances of $L$ to \begin{itemize}
		\item YES: The set of satisfiable Exact $k$CNF formulas $F$, \ie, $\Viol(F) = 0$,
		\item NO: The set of Exact $k$CNF formulas $F$ such that $\Viol(F) \geq 2^{-k} - \delta$.
	\end{itemize}
	For each Exact $k$CNF $F$, we apply $\Cref{claim:ksat}$ to obtain the corresponding deterministic CPC $\PP_F$.

	We first observe that $\D(\PP_F) \leq 2^{-k}$. This follows from the fact that a clause is violated by a random assignment with probability $2^{-k}$: letting $\U$ denote the uniform distribution over $\bool^n$,
	\begin{equation*}
		\D(\PP_F) \leq \D_{\PP_F}(\U) = \sqrt{\frac{1}{|F|}\sum_{(x^j,j_k,b_k) \in \PP_F} \left( \U(x^j \cap j_k) - b_k \U(x^j) \right)^2} = \sqrt{\frac{1}{|F|} \cdot |F| \cdot 2^{-2k}} = 2^{-k},
	\end{equation*}
	where the rightmost equality uses the fact that $x^j$ has exactly $k-1$ non-$\ast$ entries.

	Completeness and soundness of the reduction follow from combining \Cref{claim:ksat} with \Cref{lem:dirac}. If $\Viol(F) = 0$ then by \Cref{claim:ksat}, $\Viol(\PP_F) = 0$, and so by \Cref{lem:dirac}, $\D(\PP)$ is sandwiched between zeros. Conversely, if $\Viol(F) \geq 2^{-k} - \delta$ then by the lower bound of the lemma, $\D(\PP) \geq 2^{-k} - \delta$.
\end{proof}

\subsection{$\NEXP$-completeness of $\MODELCONSIST$}\label{sec:nexp-completeness}
Recall the implicit CPC $\PP_{P,Q}$ of a predictive model (\Cref{def:implicit-cpc}): the exponentially long collection that has, for each query $q \in \Qcal$, $Q(q)$ copies of the claim $(x_q, \omega_t, P(q)/2^B)$, and whose consistency is that of the model (\Cref{claim:implicit-cpc}). With this translation in hand, $\NEXP$-completeness is short to derive.

\begin{corollary}\label{cor:mip}
	$\MODELCONSIST$ is $\NEXP$-complete (under polynomial-time many-one reductions).
\end{corollary}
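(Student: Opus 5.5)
Membership and hardness are handled separately; both reuse results already stated.

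\emph{Membership in $\NEXP$.} The witness is a sparse distribution in the style of Carath\'eodory, sized for the implicit CPC. By \Cref{claim:implicit-cpc}, $(P,Q)$ is $\tau$-consistent iff $\PP_{P,Q}$ is, and this CPC has $m = \norm{Q}_1 \le 2^{|\Qcal|+B}$ claims over $n = 2^d$ variables, all at precision $B$. Encoded naively by multiplicities, $\PP_{P,Q}$ has length $2^{\poly(|P|+|Q|)}$. The plan is to run the verifier of \Cref{thm:inNP} on this exponentially long instance inside a nondeterministic exponential-time machine. The machine first writes out $\PP_{P,Q}$ by evaluating $P,Q$ on all $2^{\ell(d+1)+d}$ queries, and it may also group equal claims by weights $Q(q)$ instead of copying them. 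It then guesses the certificate of length $O(mn+\log B)$ and runs the verifier in time $O(m^3(B+\log m)^2+m^2n)$. Both quantities are $2^{\poly}$ in the input length, since $\log m$ and $n$ are singly exponential. Completeness and soundness transfer directly from \Cref{thm:inNP}. One bookkeeping step remains: the verifier's sums should be treated as weighted by $Q(q)$, so that multiplicities $2^B$ never have to be expanded. Even expanded, however, the size is only $2^{|\Qcal|+B}$, which is still singly exponential.

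\emph{$\NEXP$-hardness.} I would scale the reduction of \Cref{claim:ksat} up to succinct formulas. Start from the $\NEXP$-complete problem \textsc{Succinct-3SAT}. An instance is a circuit $C$ that, given a clause index $j\in\bool^{r}$, outputs its three literals over $2^d$ variables. Build $(P,Q)$ with context length $\ell=2$, mirroring $P_j=(x^j,j_3,b_3)$. On query $q=(s_1,b_1',s_2,b_2',t)$, a circuit computes whether there exists $j$ with $C(j)$ matching $q$; this has to be checkable in polynomial size. To make it so, I change the query format so that the clause index $j$ is recoverable from $q$. One option is to pad the variable descriptions, letting $d' = d + r$ so that each variable index carries $j$ in dummy high-order bits; but that changes the variable set. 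A simpler option is to take $Q(q)=1$ iff $q$ is the encoding $\mathrm{enc}(C(j))$ for the $j$ read off designated bits, and to route the choice $j$ through an injective encoding, for instance by placing $j$ into unused bit positions of the target description $t$ via extra dummy variables, which the claims never constrain. Set $Q(q)=1$ exactly on these clause-encoding queries and $0$ otherwise, and let $P(q)$ be the target bit $b_3$. Then $\PP_{P,Q}$ is the CPC $\PP_F$ of \Cref{claim:ksat}, possibly with extra unconstrained variables. By \Cref{lem:dirac}, $\D=0$ iff $F$ is satisfiable, so taking $\tau=0$ completes the reduction. A precision subtlety arises because $P$ outputs values $<1$; I would handle it with the footnote convention allowing $p^\dagger=2^B$, or equivalently by negating the target literal.

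The main obstacle is making the hardness reduction honest: $Q$ must have polynomial-size circuits while its support is exactly the clause set, with no duplicates and no collisions between clauses. The membership direction is routine.
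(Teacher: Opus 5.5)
\textbf{Membership.} Your membership argument matches the paper's: write out $\PP_{P,Q}$ in exponential time, run the verifier of \Cref{thm:inNP} on it, and transfer completeness and soundness through \Cref{claim:implicit-cpc}. There is one notational slip: the bound is $\norm{Q}_1 \le |\Qcal|\cdot 2^B = 2^{\ell(d+1)+d+B}$. As written, $2^{|\Qcal|+B}$ is doubly exponential, since $|\Qcal| = 2^{\ell(d+1)+d}$.

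\textbf{Hardness: the proposed fix breaks the reduction.} The gap is at the step you flag as the main obstacle. A variable is identified by its entire description. So writing the clause index $j$ into bits of the target description does not tag the variable $\omega_{j_3}$; it replaces it with a fresh variable $\omega_{(j_3,j)}$ that occurs in no other claim. Take the point mass that sets every such fresh target to its claimed bit. It has zero residual on every claim, so $\D(\PP_{P,Q}) = 0$ for \emph{every} input formula, and the reduction sends all instances to YES. Your first option (writing $j$ into context descriptions) fails for the same reason, as you suspected. The ``dummy'' variables are not unconstrained; they are exactly what the claims now constrain.

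\textbf{How to repair it.} You need $Q$ to learn $j$ without renaming any clause variable.
\begin{itemize}
\item Add a context literal $(u_j,0)$ on a new dummy variable $u_j$. Force $u_j=0$ with the two $p=0$ claims $\Pr[u_j=1\mid v=0]=\Pr[u_j=1\mid v=1]=0$, for a fixed variable $v$, padding contexts by repetition.
\item Under any $\mu$ with $\D_{P,Q}(\mu)=0$, $u_j=0$ almost surely, so this literal leaves the clause's claim unchanged.
\item Meanwhile $Q$ reads $j$ off it, evaluates $C(j)$, compares the result with the rest of $q$, and also recognizes the forcing queries.
\end{itemize}
Alternatively, start from the form of Succinct-3SAT in which a polynomial-size circuit decides whether a given literal triple is a clause. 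This form is still $\NEXP$-complete, since the Cook--Levin tableau clauses are local. You then also need a fixed rule for which literal becomes the target, so that two clauses differing only in the target's sign never share a query. The paper's one-line appeal to the succinct version of \Cref{claim:ksat} does not spell this out either, but your explicit construction as written is incorrect.

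\textbf{The $p=1$ issue.} ``Negating the target literal'' is not equivalent to allowing $p^\dagger=2^B$. An all-positive clause has no negative literal to use as the target. Indeed, any deterministic CPC whose claims all have $p=0$ is satisfied by the all-zeros point mass. Since the model definition caps $P(q)$ at $2^B-1$, you must state the footnote-style amendment explicitly for models, or use a gadget that avoids $p=1$ claims.
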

\begin{proof}
	For membership, let $V_\NP$ be the verifier of \Cref{thm:inNP}, and let $V_\NEXP$ be the verifier that, on input $(P,Q,\tau)$ and proof $\pi$, constructs $\PP_{P,Q}$ explicitly---in time $2^{O(\ell d + B)} \cdot O(S)$, where $S$ bounds the sizes of the circuits $P$ and $Q$---and accepts if and only if $V_\NP(\PP_{P,Q}, \tau, \pi)$ accepts. By \Cref{claim:implicit-cpc}, $(P,Q)$ is $\tau$-consistent if and only if $\PP_{P,Q}$ is, so completeness and soundness are those of \Cref{thm:inNP}. Using the fact that $||Q|| \in 2^{O(\ell d + B)}$, \Cref{thm:inNP} implies that running $V_\NP$ on $\PP_{P,Q}$ takes time $||Q||^3(B + \log ||Q||)^2 + ||Q||^2 n \in 2^{O(\ell d + B)}$ with proofs of length $O(||Q|| \cdot n + \log B) \in 2^{O(\ell d + B)}$, so all together $V_\NEXP$ runs in time exponential in its input length. For hardness, $\MODELCONSIST$ is the succinct version of $\EXPCONSIST$, so the reduction of \Cref{claim:ksat} applies as a reduction from the $\mathsf{SUCCINCT}$-$\EkSAT$ problem (see, \eg, \cite{PapadimitriouY86,Goldreich08}).
\end{proof}

\begin{remark}[A pragmatic aside on implementation]\label{rem:implementation}
	Producing a faithful implementation of the black-box construction underlying \Cref{cor:mip} (specification, code, and checking that the two match) is, in the authors' own experience and despite considerable effort, well beyond the scope of an empirical pilot: we speculate it would occupy a graduate student fluent in the relevant literature for a semester or two: disentangling the Cook--Levin compilation and the BFLS arithmetization layers from one another. The remainder of this paper is dedicated to constructing an Interactive PCP that admits a self-contained specification (\Cref{alg:full-protocol}) whose proof oracle is simply a specialized encoding of the witnessing distribution (\Cref{sec:codecheck}); the same hypothetical graduate student should 
    be able to get experiments running on top of our \Cref{alg:full-protocol}, developed over the next two sections, in a matter of weeks. That said, even in our explicit IPCP, the Prover's job remains difficult enough that deploying the full proof system at scale remains, for now, out of reach.
\end{remark}

\section{\texorpdfstring{Reed--$\mu$ller}{Reed-muller}: a marginal-verifiable distribution encoding}\label{sec:codecheck}
The PCP component of our IPCP will be a locally-verifiable encoding of distributions which enables efficient (interactive) verification of marginals. This is essentially an encoding of distributions that enables \emph{delegation of marginal computation}; we present it as a self-contained ``library'' consisting of the encoding method (\Cref{def:code}), codeword-validity verifier (\Cref{alg:encoding}) and marginal verifier (\Cref{alg:marginal-pp}), which may be of independent interest.

To keep the exposition self-contained, we open with a standalone motivating story, in the storytelling tradition of \cite{Babai85,Babai90}.

\begin{rulequote}
	All-powerful Merlin (the Prover) holds a distribution $\mu$ over $\bool^n$, the truth-assignments to $n$ propositions. Each proposition is named by a short address of $d = \log n$ bits; picture a sentence king Arthur (the Verifier) could scrawl on one line, a hundred bits or so, so that $n = 2^d \approx 2^{100}$. Merlin's $\mu$ is the verdict of a finite engine, vastly smaller than the $2^n$ worlds it opines on, so it can deem only a modest number $m$ of assignments possible. He describes it by its support: points $z_1,\dots,z_m \in \bool^n$ with weights $\mu(z_1),\dots,\mu(z_m) \in [0,1]$. (In the application this section serves, Carath\'eodory pins $m = \poly(n)$; see \Cref{sec:NP}. In this section, $m$ is a free parameter.)

	Arthur, his computationally bounded sovereign, wants two assurances. First, that $\mu$ is a distribution at all: that the weights are nonnegative and that
	\begin{equation*}
		\sum_{j=1}^{m}\mu(z_j) = 1 .
	\end{equation*}
	Second, he wants to interrogate it. To learn, say, the chance that the first proposition holds and the second fails, Arthur must total the mass of every possible world that agrees:
	\begin{equation*}
		\mu(\omega_1=1,\ \omega_2=0) \;=\; \sum_{j   :   z_{j,1}=1,\ z_{j,2}=0} \mu(z_j) .
	\end{equation*}
	Each such question is a sum over Merlin's $m$ possible worlds, costing $\Omega(m)$ to answer by hand. And $m$ is the least of Arthur's troubles: his entire budget is $\poly(d) = \poly(\log n)$, so he cannot even tour the $n = 2^d$ propositions, let alone audit a sum over worlds.

	This section is the remedy. Merlin encodes $\mu$ once, as $\pi(\mu)$, and hands it over; thereafter, with a few rounds of interaction and his own coin flips, Arthur convinces himself that $\mu$ is a genuine distribution in time $O(\log m + \log n)$, and verifies any marginal $\mu(\omega_1=1,\omega_2=0)$ just as cheaply. More generally, he verifies the marginal on any $\ell$ propositions in time $O(\ell\log m + \log n)$.
\end{rulequote}

The encoding of $\mu$ is a Reed--Muller encoding of the support vectors and their weights (in binary); as such, we denote it $\Code$.\footnote{The notation reads as ``Reed--$\mu$ller,'' a Reed--Muller (RM) encoding of the distribution $\mu$.}

Before we define it, we first recall the notion of a \emph{multilinear extension}.
\begin{definition}[$i$-linear, multilinear extension]
    Fix $k \in \N$ and $i \in [k]$, we say that a function $f \colon \F^k \to \F$ is \emph{$i$-linear} if, for all $x_1,\dots,x_{i-1},x_{i+1},\dots,x_k \in \F$, the univariate function $f' \colon \F \to \F$ defined by $f'(\xi) \defeq f(x_1,\dots,\xi,\dots,x_k)$ is linear. We say that a function $f$ is \emph{multilinear} if it is $i$-linear in each of its coordinates.

     For a function $g_0 \colon \bool^k \to \bool$, its \emph{multilinear extension} $\hat g \colon \F^k \to \F$ is defined by $\hat g(x) \defeq \sum_{x' \in \bool^k} g_0(x') \prod_{i=1}^k \bigl(x_i x'_i + (1- x_i)(1-x'_i)\bigr)$. It is the only multilinear function agreeing with $g_0$ on the entirety of $\bool^k$.
\end{definition}

\begin{definition}[\Code]\label{def:code}
	Let $n,m,B \in \N$ and let $\mu$ be a distribution over $\bool^n$. Assume that $\mu$ is supported on $m$ points with weights expressible as $B$-bit precision rationals; that is, there are $z_1,\dots,z_{m} \in \bool^n$, and $\alpha^\dagger_1,\dots,\alpha^\dagger_{m} \in \{0,\dots,2^{B}-1\}$ such that $\mu$ is supported on $z_j$ with weight $\mu[z_j] = \alpha^\dagger_j / 2^{B}$.

	For $j \in [m], i \in [n]$ we let $z_{j,i}$ denote the $i$th coordinate of $z_j$. Similarly, for $j \in [m]$ and $\beta \in \{0,\dots,B-1\}$ let $\alpha^\dagger_{j,\beta}$ denote the $\beta$th bit of $\alpha^\dagger_j$, such that $\alpha^\dagger_j = \sum_{\beta=0}^{B-1}2^{\beta}\alpha^\dagger_{j,\beta}$. We denote
	\begin{align*}
		Z_0 \colon \bool^{\log{m}} \times \bool^{\log n} \to \bool \quad & Z_0(j,i) = z_{j,i}                      \\
		A_0 \colon \bool^{\log{m}} \times \bool^{\log B} \to \bool \quad & A_0(j,\beta) = \alpha^\dagger_{j,\beta}
	\end{align*}
	For a finite field $\F$, let $\tilde Z$ (resp. $\tilde A$) denote the multilinear extension of $Z_0$ (resp. $A_0$). A $\Code_\F(n,m,B)$-encoding of $\mu$ is the pair $(\tilde Z, \tilde A)$. We use $\Code_\F(n,m,B)$ to denote the set of all such valid encodings. Namely, $(Z,A) \in \Code_\F(n,m,B)$ if and only if
	\begin{enumerate}
		\item \emph{Multilinearity:} Both $Z$ and $A$ are multilinear.
		\item \emph{Booleanity:} For any $j \in \bool^{\log m}$, $i \in \bool^{\log n}$, and $\beta \in \bool^{\log B}$ it holds that $\{Z(j,i)$, $A(j,\beta) \} \subseteq \bool$.
		\item \emph{Unit measure:} For $j \in \bool^{\log m}$, let $a^\dagger_j \in \{0,\dots,2^B-1\}$ be the integer represented by the $B$-bit string $(A(j,\beta))_{\beta \in \bool^{\log B}}$. Then $\sum_{j \in \bool^{\log m}}a^\dagger_j = 2^B$.
	\end{enumerate}
\end{definition}

We note that $\Code$ is not an error-correcting code in the standard sense, because a distribution $\mu$ supported on $m$ vectors $z_1,\dots,z_m$ may be encoded in one of $m!$ ways (one for each ordering of its support vectors). One could require the vectors $z_1,\dots,z_m$ to be ordered lexicographically, but then one would also have to test that an alleged codeword respects this order---an added complication which is entirely unnecessary for our purposes. Hereafter when we consider an encoding of $\mu$, we let it be an encoding in an arbitrary order.

\subsection{Prerequisites: the sum-check protocol, and four multilinear gadgets}
We recall the sum-check protocol \cite{LundFKN92}, modified slightly for convenience.

\begin{algorithm}[h!]
	\caption{$\SumCheck_\F(\val, \rounds,\degree)$.}\label{alg:sumcheck}
	\KwInput{$\rounds,\degree \in \N$, a finite field $\F$ of odd characteristic, and $\val \in \F$.}
	\KwOutput{$\hat y \in \F^\rounds$ and $\hat v \in \F$.}
	\BlankLine
	\Verifier{Initialize the running value $u \gets v$.}\;
	\For{$k = 1, \dots, \rounds$}{
		\Prover{Send $a_1, a_2, \dots, a_\degree \in \F$. }\;
		\Verifier{Compute $a_0 \coloneqq 2^{-1} \cdot (u - a_1 - a_2 - \cdots - a_\degree)$ where $2^{-1}$ is over $\F$. Sample $\hat y_k \in \F$ uniformly at random and send $\hat y_k$ to the Prover. Update $u \gets \sum_{i=0}^\degree a_i \hat y_k^i$.}\;
	}
	\Verifier{Output $\hat v \gets u$ and $\hat y \gets (\hat y_1, \dots, \hat y_\rounds)$.}
\end{algorithm}
\begin{fact}[\cite{LundFKN92}]\label{fact:sumcheck}
	Let $r, d \in \N$, a finite field $\F$ of odd characteristic, and $v \in \F$. Let $\varphi \in \F[Y_1,\dots,Y_r]$ be an $r$-variate polynomial that has individual degree at most $d$ in each of its variables. $(\hat v, \hat y)$ denote the output of $\SumCheck_{\F}(\val \gets v,\rounds \gets r,\degree \gets d)$, noting that this is a random variable determined by the sampled $\hat{y}$. Then,
	\begin{itemize}
		\item \emph{Completeness.} If $v = \sum_{y \in \bool^r}\varphi(y)$ over $\F$ then there exists an honest Prover such that $\hat v = \varphi(\hat y)$ with probability $1$ over $\hat y$.
		\item{Soundness.} Otherwise, $\hat v = \varphi(\hat y)$ with probability at most $rd / |\F|$.
	\end{itemize}
	Furthermore, \Cref{alg:sumcheck} has $r$ many rounds, in which the Prover (resp. Verifier) sends a total of $r\cdot d$ (resp. $r$) field elements. The Verifier makes a total of $O(rd)$ field operations. Lastly, $\hat y$ is distributed marginally uniformly in $\F^{\rounds}$.
\end{fact}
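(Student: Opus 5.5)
We prove \Cref{fact:sumcheck} by the standard argument of \cite{LundFKN92}, adapted to the convention of \Cref{alg:sumcheck}: in round $k$ the Prover sends only the coefficients $a_1,\dots,a_\degree$ of a univariate polynomial $g_k(Y)=\sum_{i=0}^{\degree} a_i Y^i$, and the Verifier recovers $a_0$ from the constraint $g_k(0)+g_k(1)=u$. This works because $g_k(0)+g_k(1)=2a_0+\sum_{i\ge1}a_i$, and $2$ is invertible since the characteristic is odd. For every $k\in\{0,\dots,r\}$, let
\[
\varphi_k(y_1,\dots,y_k)\coloneqq\sum_{y'\in\bool^{r-k}}\varphi(y_1,\dots,y_k,y').
\]
Then $\varphi_0$ is the claimed sum and $\varphi_r=\varphi$. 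The invariant we track is the running value $u_k$ held after round $k$, compared with $\varphi_k(\hat y_1,\dots,\hat y_k)$.

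\emph{Completeness.} The honest Prover sends the coefficients of the univariate polynomial
\[
g_k(Y)=\sum_{y'\in\bool^{r-k}}\varphi(\hat y_1,\dots,\hat y_{k-1},Y,y').
\]
This polynomial has degree at most $d$, since $\varphi$ has individual degree at most $d$. Moreover $g_k(0)+g_k(1)=\varphi_{k-1}(\hat y_{<k})$. By induction $u_{k-1}=\varphi_{k-1}(\hat y_{<k})$, so the reconstructed $a_0$ is the true constant coefficient. The update then gives $u_k=g_k(\hat y_k)=\varphi_k(\hat y_{\le k})$. After $r$ rounds, $\hat v=\varphi(\hat y)$ with probability $1$.

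\emph{Soundness.} Suppose $v\ne\varphi_0$. Let $E_k$ be the event that $u_k=\varphi_k(\hat y_{\le k})$; by assumption $E_0$ fails. Condition on $E_{k-1}$ failing. Whatever the Prover sends determines $g_k$, a polynomial of degree at most $d$ with $g_k(0)+g_k(1)=u_{k-1}$. The honest polynomial $h_k(Y)=\varphi_k(\hat y_{<k},Y)$ satisfies $h_k(0)+h_k(1)=\varphi_{k-1}(\hat y_{<k})\ne u_{k-1}$, so $g_k\ne h_k$. Because $\hat y_k$ is sampled after $g_k$ is fixed, Schwartz--Zippel for univariate polynomials gives $\Pr[g_k(\hat y_k)=h_k(\hat y_k)]\le d/|\F|$. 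A union bound over the $r$ rounds shows that $E_r$, i.e.\ $\hat v=\varphi(\hat y)$, holds with probability at most $rd/|\F|$.

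The remaining claims follow by inspection. There are $r$ rounds; the Prover sends $d$ field elements per round and the Verifier sends one. Each round costs the Verifier $O(d)$ field operations, for computing $a_0$ and evaluating by Horner's rule. Each $\hat y_k$ is sampled uniformly and independently, so $\hat y$ is in fact uniform on $\F^r$.

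The main point needing care is soundness against an adaptive Prover. The argument must fix $g_k$ before $\hat y_k$ is drawn, and it must use the fact that the implicit reconstruction of $a_0$ still enforces $g_k(0)+g_k(1)=u_{k-1}$ exactly, which is why odd characteristic is needed.
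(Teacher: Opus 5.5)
Your proposal is correct and takes the same approach as the paper. The paper takes the standard sum-check analysis of Lund et al.\ as given, and adds only two observations: recovering $a_0 = 2^{-1}(u - a_1 - \cdots - a_d)$ enforces $g_k(0)+g_k(1)=u_{k-1}$ by construction, and the final test $\hat v \qeq \varphi(\hat y)$ is deferred to the caller. Your induction with the per-round degree-$d$ root bound and the union bound carries out exactly this adaptation, and your remark that $\hat y$ is in fact uniform on $\F^r$, not merely marginally uniform, is also right.
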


\Cref{alg:sumcheck} is the sum-check protocol of \cite{LundFKN92}, presented with two cosmetic changes that streamline its use as a subroutine. First, the per-round equality $g_k(0)+g_k(1) = u_{k-1}$ that the standard Verifier checks is instead enforced by construction: In the standard protocol the Prover sends, in round $k$, an individual degree-$d$ polynomial $g_k(Y) = \sum_{i=0}^{d} a_i Y^i$, and the Verifier checks $g_k(0)+g_k(1) = u_{k-1}$ before sampling $\hat y_k$ and updating $u_k \gets g_k(\hat y_k)$; whereas in \Cref{alg:sumcheck} the Prover sends only the $d$ non-free coefficients $a_1,\dots,a_d$, and the Verifier sets the free coefficient to $a_0 \gets 2^{-1}(u_{k-1} - a_1 - \cdots - a_d)$, the unique value for which $g_k(0)+g_k(1) = u_{k-1}$.\footnote{This is the one place odd characteristic of $\F$ is used}
Second, the final test $u_r \qeq \varphi(\hat y)$ that the standard Verifier asserts is deferred: \Cref{alg:sumcheck} outputs the pair $(\hat y, \hat v) = (\hat y, u_r)$ and leaves the comparison $\hat v \qeq \varphi(\hat y)$ to its caller.

We find this presentation useful for two reasons: The Verifier of \Cref{alg:sumcheck} never rejects; it only samples randomness and maintains the running value $u$, so its actions depend on $v$, $r$, and $d$ alone and never on $\varphi$. And the round polynomials $g_k$ never enter the Verifier's computation: they are objects of the analysis, not of the protocol. We still name the parties Prover and Verifier because, facing forward, each invocation of \Cref{alg:sumcheck} in \Cref{sec:mip:protocol} will be run by the main protocol's Verifier.

The protocols accompanying $\Code$ below rely on four multilinear polynomials ``gadgets.'' We collect them here.

\paragraph{$\eq$:}
For $t \in \N$ and $\hat y, y' \in \F^{t}$,
\begin{equation}\label{eq:eqdef}
	\eq(\hat y, y') \defeq \prod_{i=1}^{t}\hat y_i y'_i + (1-\hat y_i)(1-y'_i).
\end{equation}
On Boolean $\hat y, y'$ it equals $1$ if $\hat y = y'$ and $0$ otherwise, so it is the (unique) multilinear extension of the equality indicator on $\bool^t$.

\paragraph{$\exptwo$:} For $\beta \in \F^{\log B}$,
\begin{equation}\label{eq:exp2def}
	\exptwo(\beta) = \exptwo(\beta_1, \dots, \beta_{\log B}) \defeq \prod_{k=1}^{\log B} 1 + \beta_k (2^{2^{k-1}} - 1).
\end{equation}
It is multilinear, and on a Boolean $\beta \in \bool^{\log B}$ it equals $2$ raised to the integer value of the bit string $\beta$ (with $\beta_1$ the least significant bit). These are exactly the coefficients that decode an integer from its binary expansion: $\sum_{\beta \in \bool^{\log B}} \exptwo(\beta)   a_\beta$ is the integer with bits $(a_\beta)_\beta$. The Verifier can compute $\exptwo(\hat\beta)$ in $O(\log B)$ field operations at any $\hat\beta \in \F^{\log B}$.

\paragraph{$\weight$:}
For an oracle $A \colon \F^{\log m + \log B} \to \F$ and $j \in \F^{\log m}$, write
\begin{equation}\label{eq:Wdef}
	\weight^{A}(j) \defeq \sum_{\beta \in \bool^{\log B}} \exptwo(\beta)\cdot A(j,\beta),
\end{equation}
the integer that $\exptwo$ decodes from the bits $A$ stores in its $j$-th row. On a codeword, $A$ holds the weight bits, so $\weight^{A}(j) = \alpha^\dagger_j$ is the integer weight numerator of the $j$-th support point and $\weight^{A}(j)/2^B = \mu[z_j]$ its probability mass; for an arbitrary multilinear $A$ it is the same $\F$-linear combination of the rows $A(\cdot,\beta)$, hence multilinear in $j$.

\paragraph{$\match$:}
Fix an oracle $Z \colon \F^{\log m + \log n} \to \F$, points $s = (s_1, \dots, s_\ell)$ with $s_k \in \F^{\log n}$, and bits $b = (b_1, \dots, b_\ell) \in \F^\ell$. For $j \in \F^{\log m}$,
\begin{equation}\label{eq:satdef}
	\match^{Z}_{b,s}(j) \defeq \prod_{k=1}^{\ell} b_k Z(j,s_k) + (1-b_k)(1-Z(j,s_k)).
\end{equation}
Each factor is the equality indicator between the bit $b_k$ and the value $Z(j,s_k)$, so $\match^{Z}_{b,s}(j)$ is just $\eq$ of the bit-vector $b$ against the slice-values $(Z(j,s_k))_{k=1}^{\ell}$:
\begin{equation*}
	\match^{Z}_{b,s}(j) = \eq(b, (Z(j,s_k))_{k=1}^{\ell}).
\end{equation*}
For multilinear $Z$ the map $j \mapsto \match^{Z}_{b,s}(j)$ has individual degree at most $\ell$ in $j$, and on a codeword it indicates, at Boolean $j$, whether the $j$-th support point satisfies every context literal $z_{j,s_k} = b_k$.

\subsection{Verifying proximity to the code}

Proximity is measured in relative Hamming distance: for functions $f, g \colon \F^k \to \F$, $\dist(f,g) \defeq \Pr_{x \sim \F^k}[f(x) \neq g(x)]$ is the fraction of inputs on which they disagree, and $f$ is \emph{$\delta$-far} from a set of functions if $\dist(f,g) \geq \delta$ for every $g$ in the set (otherwise \emph{$\delta$-close} to it). For the pairs at hand, the worst part governs: the distance between $(Z,A)$ and a codeword $(\tilde Z, \tilde A)$ is $\max\{\dist(Z,\tilde Z), \dist(A,\tilde A)\}$.

\Cref{alg:encoding} tests that a pair of oracles $(Z,A)$ is close to a valid codeword: that both $Z$ and $A$ are multilinear, that each is Boolean on the hypercube, and that the masses $A$ encodes sum to one.

\begin{fact}[\cite{BabaiFL91,FeigeGLSS96}]\label{fact:fglss}
	Let $f \colon \F^k \to \F$. Consider a tester as follows: (1) Sample $i \in [k]$ uniformly at random; (2) Sample three points $x,y,z \in \F^k$ that differ only on the $i$th coordinate, uniformly at random; (3) Accept if and only if $f(x),f(y),f(z)$ agree with an $i$-linear function.
	Then, for any $\delta \in (0,1)$ such that $6k / |\F| < \delta < 1/2$,
	\begin{enumerate}
		\item \emph{Completeness:} If $f$ is multilinear, the tester accepts with probability $1$.
		\item \emph{Soundness:} If $f$ is $\delta$-far from multilinear in relative Hamming distance, the tester accepts with probability at most $1-\delta / k$.
	\end{enumerate}
\end{fact}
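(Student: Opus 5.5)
Completeness is immediate. If $f$ is multilinear, then for every $i$ and every fixing of the other coordinates, the restriction $\xi \mapsto f(x_1,\dots,\xi,\dots,x_k)$ is an affine function of $\xi$. So any three points on an axis-parallel line in direction $i$ have values agreeing with an $i$-linear function, and the tester accepts with probability $1$. The work is soundness. I would argue the contrapositive: assuming the tester accepts with probability greater than $1-\delta/k$, I would exhibit a multilinear $g$ with $\dist(f,g) < \delta$.

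\emph{Step 1: per-direction analysis.} Write $\epsilon_i$ for the rejection probability conditioned on direction $i$. Since $i$ is uniform, the hypothesis gives $\frac1k\sum_i \epsilon_i < \delta/k$, that is, $\sum_i \epsilon_i < \delta$. For each line $L$ in direction $i$, let $h_L$ be the affine function closest to $f|_L$, and let $\eta_L$ be the fraction of points of $L$ where they disagree. If $\eta_L$ is small, three random points reject with probability at least roughly $\eta_L$, since one bad point among two good ones is already inconsistent. If $\eta_L$ is large, then no affine function fits two-thirds of the line, and a union-bound or counting argument over pairs determining an affine function again gives rejection probability $\Omega(\eta_L)$. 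The only exceptions are collisions among the three sampled coordinates, which cost $O(1/|\F|)$ and are absorbed by $6k/|\F| < \delta$. Defining $g_i$ line by line as $h_L$, I obtain an $i$-linear function with $\dist(f,g_i) \le O(\epsilon_i) + O(1/|\F|)$.

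\emph{Step 2: gluing the directions.} This is the main obstacle: each $g_i$ is linear in its own coordinate only, and we need a single function linear in all coordinates simultaneously. I would follow the inductive self-correction argument of \cite{BabaiFL91,FeigeGLSS96}. Start from $g_1$ and successively correct in directions $2,\dots,k$. At stage $i$, define $G_i$ as the plurality, along each direction-$i$ line, of the affine fit to $G_{i-1}$. Then show that correcting in direction $i$ preserves $j$-linearity for each $j<i$ on all but a small fraction of points. The key sub-claim is a rectangle/commutation lemma: on a random axis-parallel plane spanned by directions $i$ and $j$, if $f$ is close to $i$-linear along most rows and close to $j$-linear along most columns, then it is close to a bilinear function on that plane. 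That bilinear function is exactly what both corrections agree on.

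\emph{Step 3: accounting.} Summing the losses over the stages gives $\dist(f, G_k) \le C\sum_i \epsilon_i + O(k/|\F|)$, while $G_k$ is multilinear. The hypothesis $\delta < 1/2$ is needed here, so that pluralities are well defined and unique decoding on lines applies. The constant must be tracked carefully to land below $\delta$ rather than $C\delta$. If the naive bound loses a constant factor, I would instead prove the contrapositive directly via the known statement of the BFL/FGLSS multilinearity-test analysis, which is what the fact cites, treating the sharp constant as inherited from there.
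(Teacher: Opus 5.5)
Your completeness argument is fine, but the soundness part does not establish the Fact as stated, and the Fact is sharp. In your contrapositive form you need $\sum_i \epsilon_i < \delta \Rightarrow \dist(f,\text{multilinear}) < \delta$, with constant essentially $1$ and only an additive $O(k/|\F|)$ term, absorbed by $6k/|\F| < \delta$.

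Your Step~2 does not even produce a multilinear function. If correcting in direction $i$ preserves $j$-linearity only ``on all but a small fraction of points'', then $G_k$ is exactly linear only in direction $k$. A function that is $j$-linear off a small set is not $j$-linear, and plurality decoding along lines is a nonlinear operation, so it cannot be expected to preserve exact $j$-linearity.

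Step~3's accounting is also unsupported. The natural bound is $\dist(G_{i-1}, i\text{-linear}) \le \dist(G_{i-1},f) + \dist(f,g_i)$. It gives $\dist(G_i,f) \le 2\dist(G_{i-1},f) + \dist(f,g_i)$, which compounds to $\sum_i 2^{k-i}\dist(f,g_i)$. That is not $C\sum_i \epsilon_i$ with an absolute $C$, let alone $C \le 1$. Exact multilinearity with loss linear in $k$ requires a genuinely different construction (e.g., interpolating through fixed anchor values in each coordinate) together with its own accounting.

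Your fallback of inheriting the constant from ``the known statement'' of BFL/FGLSS does not close the gap either. As the paper notes, \cite{FeigeGLSS96} state their soundness only for $\delta \ge 0.1$. The Fact is needed for small $\delta$: the encoding test uses $\delta = \epssound/(8\ell+12)$, and its proof even applies the Fact at proximity $\dist(Z,\tilde Z)$.

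The missing ingredient is the intermediate bound inside their Lemma~4.3.1.3, which is valid for every $\delta$: the tester rejects with probability at least $3(1-\delta)\delta/k - 3/|\F|$. Given that bound, the Fact takes two lines, and this is exactly where the hypotheses enter:
\begin{itemize}
\item $\delta < 1/2$ gives $3(1-\delta)\delta/k \ge 3\delta/(2k)$;
\item $6k/|\F| < \delta$ gives $3/|\F| < \delta/(2k)$.
\end{itemize}
So the rejection probability exceeds $3\delta/(2k) - \delta/(2k) = \delta/k$. In particular, $\delta < 1/2$ is not about uniqueness of pluralities. Likewise, the $6k/|\F|$ condition is calibrated against that specific $-3/|\F|$ term, not a generic $O(k/|\F|)$ collision cost.
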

We note that although \cite{FeigeGLSS96} state the soundness of their tester only for $\delta \geq 0.1$, \Cref{fact:fglss} follows immediately from their intermediate argument in Lemma 4.3.1.3. Namely, their lemma guarantees rejection with probability at least $3(1-\delta)\delta/k - 3/|\F|$. We then use the additional assumptions made in \Cref{fact:fglss} to obtain
\begin{equation*}
	\frac{3(1-\delta)\delta}{k} - \frac{3}{|\F|} \geq \frac{3 \delta}{2k} - \frac{3}{|\F|} \geq \frac{3\delta}{2k} - \frac{\delta}{2k} = \frac{\delta}{k},
\end{equation*}
using first the assumption that $\delta < 1/2$, then the assumption that $6k/|\F| < \delta$.

\begin{algorithm}[h!]
	\caption{$\VerifyEncoding_\F^{Z,A}(\delta, \eps)$}\label{alg:encoding}
	\KwInput{Finite field $\F$, a proximity parameter $\delta \in (0,1)$, and a soundness parameter $\eps \in (0,1)$. Oracle access to $Z \colon \F^{\log m + \log n} \to \F$ and $A\colon\F^{\log m + \log B} \to \F$ for $n,m,B \in \N$.}
	\BlankLine
	Repeat the multilinearity test of \Cref{fact:fglss} $\bigl\lceil(\log m + \log n)\max\{\ln(1/\eps)/\delta,\ 2/\eps\}\bigr\rceil$ times on oracle $Z$, and $\bigl\lceil(\log m + \log B)\max\{\ln(1/\eps)/\delta,\ 2/\eps\}\bigr\rceil$ times on oracle $A$. Reject if any test rejects.\;
	Sample a random $y' \in \F^{\log m + \log n}$ and send it to the Prover. Run
	\begin{equation*}
		(\hat v, \hat y) \defeq \SumCheck_\F(\val \gets 0,\rounds \gets \log m + \log n, \degree \gets 3)
	\end{equation*}
	Assert that
	\begin{equation}\label{eq:code:boolZ}
		\hat v \qeq \left(Z^2(\hat y) - Z(\hat y)\right) \cdot \eq(\hat y, y').
	\end{equation}
	\;
	Sample a random $x' \in \F^{\log m + \log B}$ and send it to the Prover. Run
	\begin{equation*}
		(\hat u, \hat x) \defeq \SumCheck_\F(\val \gets 0,\rounds \gets \log m + \log B, \degree \gets 3)
	\end{equation*}
	Assert that
	\begin{equation}\label{eq:code:boolA}
		\hat u \qeq \left(A^2(\hat x) - A(\hat x)\right) \cdot \eq(\hat x, x').
	\end{equation}
	\;
	Run $\left(\hat w, (\hat j, \hat \beta ) \right) \defeq \SumCheck_\F(\val \gets 2^B,\rounds \gets \log m + \log B, \degree \gets 2)$; here we mean that $\hat w \in \F$ is the first output, and $(\hat j, \hat \beta)$ is the second output, with $\hat j \in \F^{\log m}$ and $\hat \beta \in \F^{\log B}$. Assert that
	\begin{equation}\label{eq:code:normA}
		\hat w \qeq A(\hat j, \hat \beta) \cdot \exptwo(\hat \beta).
	\end{equation}\;
	If any assertion failed, reject. Otherwise, accept.
\end{algorithm}
\begin{claim}\label{claim:ipp}
	Suppose $\F$ is a prime field with $2^B < |\F|$, and $\VerifyEncoding_\F^{Z,A}(\delta, \eps)$ is run on $Z \colon \F^{\log m + \log n} \to \F$ and $A \colon \F^{\log m + \log B} \to \F$ such that
	\begin{equation}\label{eq:code:delta}
		m\cdot (2^B - 1)<|\F|,\quad 6\cdot \frac{\log m + \max(\log n, \log B)}{|\F|} < \delta < \tfrac12,\quad 10\cdot\frac{\log m + \max(\log n, \log B)}{|\F|} \le \eps
	\end{equation}
	Then,
	\begin{itemize}
		\item \emph{Completeness:} If $(Z,A) \in \Code_\F(n,m,B)$, there exists an honest Prover such that the Verifier accepts with probability 1.
		\item \emph{Soundness:} If
		      \begin{equation*}
			      \min_{(\tilde Z, \tilde A) \in \Code_\F(n,m,B)} \max\{\dist(Z,\tilde Z), \dist(A, \tilde A)\} \geq \delta,
		      \end{equation*}
		      then the Verifier accepts with probability at most $\eps$.
	\end{itemize}
	Furthermore, the Verifier runs in $O\bigl((\log m + \log n + \log B)\max\{\log(1/\eps)/\delta,\ 1/\eps\}\bigr)$ field operations.
\end{claim}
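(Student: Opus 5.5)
The proof splits along the three stages of \Cref{alg:encoding}: the multilinearity tests, the two Booleanity sum-checks, and the unit-measure sum-check.

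\emph{Completeness.} If $(Z,A)\in\Code_\F(n,m,B)$, then both oracles are multilinear, so every repetition of the test of \Cref{fact:fglss} accepts. By Booleanity, $Z^2-Z$ vanishes on $\bool^{\log m+\log n}$. Hence the polynomial $\varphi_Z(y)\defeq (Z^2(y)-Z(y))\,\eq(y,y')$ sums to $0$ over the hypercube. It has individual degree at most $3$ ($2$ from $Z^2$ and $1$ from $\eq$), so the honest Prover of \Cref{fact:sumcheck} makes \eqref{eq:code:boolZ} hold with probability $1$; the same argument gives \eqref{eq:code:boolA}. For the last check, take $\varphi_A(j,\beta)\defeq A(j,\beta)\exptwo(\beta)$, of individual degree at most $2$. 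Its hypercube sum is $\sum_j \weight^A(j)=\sum_j a^\dagger_j=2^B$ over $\Z$, hence also over $\F$, so \eqref{eq:code:normA} holds.

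\emph{Soundness.} Suppose $(Z,A)$ is $\delta$-far from $\Code_\F(n,m,B)$. I would case-split as follows.

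(i) \emph{$Z$ or $A$ is $\delta$-far from multilinear.} Say it is $Z$, with $k=\log m+\log n$. The hypothesis \eqref{eq:code:delta} gives $6k/|\F|<\delta<1/2$, so by \Cref{fact:fglss} each repetition accepts with probability at most $1-\delta/k$. After $k\ln(1/\eps)/\delta$ independent repetitions the acceptance probability is at most $(1-\delta/k)^{k\ln(1/\eps)/\delta}\le e^{-\ln(1/\eps)}=\eps$.

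(ii) \emph{Both are $\delta$-close to multilinear.} Let $\tilde Z,\tilde A$ be the nearest multilinear functions; these are unique because $\delta<1/2$ and distinct multilinear functions disagree on at least a $1-k/|\F|$ fraction of points. Since $(Z,A)$ is $\delta$-far from the code, the pair $(\tilde Z,\tilde A)$ is not a codeword, so it violates Booleanity or unit measure.

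\emph{Booleanity fails for $\tilde Z$.} Then the restriction of $\tilde Z^2-\tilde Z$ to the hypercube is nonzero. Its multilinear extension is $g(y')=\sum_y(\tilde Z^2-\tilde Z)(y)\eq(y,y')$, a nonzero multilinear polynomial, so by Schwartz--Zippel $g(y')=0$ with probability at most $k/|\F|$. Whenever $g(y')\neq 0$, \Cref{fact:sumcheck} gives $\hat v=\varphi_{\tilde Z}(\hat y)$ with probability at most $3k/|\F|$.

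\emph{Unit measure fails.} Here $\sum_j\weight^{\tilde A}(j)\in\{0,\dots,m(2^B-1)\}$ as an integer, and the hypothesis $m(2^B-1)<|\F|$ (with $2^B<|\F|$) ensures this integer differs from $2^B$ also in $\F$; sum-check soundness then bounds acceptance by $2(\log m+\log B)/|\F|$.

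The main obstacle is that all of these sum-check bounds are stated for $\tilde Z,\tilde A$, while the Verifier queries $Z,A$ at the final points. The final point $\hat y$ is marginally uniform (\Cref{fact:sumcheck}), so $Z(\hat y)\neq\tilde Z(\hat y)$ with probability at most $\delta$, which is not small. I would handle this with the second branch of the repetition count: $k\cdot 2/\eps$ repetitions of the multilinearity test. The idea is to measure the actual distance $\delta'=\dist(Z,\tilde Z)$. If $\delta'>\eps/2$, the tests alone reject except with probability $(1-\delta'/k)^{2k/\eps}\le e^{-1}$. This is not below $\eps$, so the point needs care. I would instead take $\max\{\ln(1/\eps)/\delta',2/\eps\}$ jointly and argue that either the tests pass with probability at most $\eps/2$, or $\delta'$ is small enough that the final-point disagreement costs at most $\eps/2$.

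The remaining error terms are $(k+3k)/|\F|\le\eps/2$ by the hypothesis $10k/|\F|\le\eps$, and they combine by a union bound with this distance term. Making this trade-off close exactly at $\eps$ is the step I expect to fiddle with most, possibly by bounding by $\delta'$ the probability that the tests pass and absorbing the constants.

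\emph{Running time.} The repetitions dominate. Each test costs $O(1)$ field operations plus queries, and each sum-check costs $O(\text{rounds}\cdot\text{degree})$ operations, as does evaluating $\eq$ and $\exptwo$. This gives the stated bound.
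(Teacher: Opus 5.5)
Your completeness argument, case (i), and the analysis of the three sum-checks on the multilinear $\tilde Z,\tilde A$ agree with the paper's proof. The gap is the step you flag as the main obstacle, and the dichotomy you sketch cannot close it.

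The repetition count is $R=\lceil k\max\{\ln(1/\eps)/\delta,\,2/\eps\}\rceil$ with $k=\log m+\log n$. It is fixed by $\delta$, not by $\delta'=\dist(Z,\tilde Z)$. Since $\delta'<\delta$, its $\ln(1/\eps)/\delta$ branch only yields $\eps^{\delta'/\delta}>\eps$. So the only bounds available are $e^{-2\delta'/\eps}$ for the tests passing, and $4k/|\F|+\delta'$ for \eqref{eq:code:boolZ} passing. For every $\delta'\in(\eps,\tfrac{\eps}{2}\ln(1/\eps))$ both bounds exceed $\eps$; this range is nonempty once $\eps<e^{-2}$. Hence no split of the form ``either the tests pass with probability at most $\eps/2$ or $\delta'$ is small'' works, and neither does taking the minimum of the two bounds.

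The missing idea is to multiply the two bounds. Acceptance requires both the multilinearity tests on $Z$ and the check \eqref{eq:code:boolZ} to pass. These use independent coins, and the sum-check bound holds for every Prover strategy and every outcome of the tests. Therefore
\[
\Pr[\text{accept}]\le e^{-2\delta'/\eps}\Bigl(\tfrac25\eps+\delta'\Bigr)\le \tfrac25\eps+\max_{x\ge 0}x\,e^{-2x/\eps}=\tfrac25\eps+\tfrac{\eps}{2e}<\eps .
\]
This step invokes \Cref{fact:fglss} at proximity $\delta'$, which is valid only for $\delta'>6k/|\F|$. The paper therefore splits at $6k/|\F|$ rather than at $\eps/2$. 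In the low regime the sum-check bound alone gives $4k/|\F|+\delta'\le 10k/|\F|\le\eps$.

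The same two-regime argument, with $\delta_A=\dist(A,\tilde A)$ in place of $\delta'$, handles a non-Boolean $\tilde A$ and a failure of unit measure. As a side remark, your uniqueness claim for the nearest multilinear function would need $2\delta<1-k/|\F|$, which does not follow from $\delta<1/2$. Uniqueness is never used, though, and any closest multilinear function will do.
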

\begin{proof}
	At a high level, the Verifier tests that $Z$ and $A$ are close to multilinear (amplifying this test, by repetition, to soundness $\eps$), and then $\SumCheck$s that each is Boolean on the hypercube (\Cref{eq:code:boolZ,eq:code:boolA}) and that the masses encoded by $A$ sum to one (\Cref{eq:code:normA}). The three $\SumCheck$s are sound by the assumption on the field size, namely, $|\F| \geq \Omega((\log m + \max(\log n, \log B)) / \eps)$.

	We will first show that \Cref{eq:code:boolZ,eq:code:boolA,eq:code:normA} establish completeness and soundness under the assumption that $Z$ and $A$ were multilinear, and remove this assumption in the end. Rather than separating completeness and soundness, we will show that each check of the Verifier asserts a necessary property of $\Code$, and that if all properties are met then $(Z,A) \in \Code$.

	\pparagraph{Booleanity of a multilinear $Z$.}
	For any fixed $y' \in \F^{\log m + \log n}$, consider the polynomial
	\begin{equation*}
		G_{y'}(y) = \left(Z^2(y) - Z(y) \right)\cdot \eq(y,y').
	\end{equation*}
	It is a $(\log m + \log n)$-variate polynomial. By $Z$'s assumed multilinearity, $G_{y'}$ has individual degree at most $3$. Finally, if $Z$ was Boolean on $\bool^{\log(m) + \log n}$ then for Boolean-entry $y$, $Z^2(y) = Z(y)$, therefore for all $y' \in \F^{\log m + \log n}$
	\begin{equation*}
		\sum_{y \in \bool^{\log m + \log n}}(Z^2(y) - Z(y)) \eq(y, y') = \sum_{y \in \bool^{\log m + \log n}} 0 \cdot \eq(y, y') = 0
	\end{equation*}
	Therefore, \Cref{fact:sumcheck} applied to $\SumCheck(0,\log m + \log n, 3)$ implies that \Cref{eq:code:boolZ} is satisfied with probability 1 if the Prover participates honestly in the protocol.

	For soundness, suppose that $Z$ is not Boolean on the hypercube. Then there exists $y^\ast \in \bool^{\log m + \log n}$ such that $Z^2(y^*) - Z(y^*) \neq 0$. Consider the polynomial
	\begin{equation*}
		F(y') \defeq \sum_{y \in \bool^{\log m + \log n}}(Z^2(y) - Z(y)) \eq(y, y').
	\end{equation*}
	We have $F(y^*) = Z^2(y^*) - Z(y^*) \neq 0$, and $F$ has total degree at most $\log m + \log n$, therefore, by the Schwartz--Zippel lemma~\cite{Schwartz80,Zippel79},
	\begin{equation*}
		\Pr_{y' \sim \F^{\log m + \log n}}\left[ F(y') = 0 \right] \leq \frac{\log m + \log n}{|\F|}.
	\end{equation*}
	Assuming $F(y') \neq 0$ for the $y'$ sampled and sent by Verifier, by soundness of $\SumCheck$, the Verifier accepts with probability at most $3\cdot (\log m + \log n) / |\F|$. Combining with a union bound, we have that the Verifier accepts \Cref{eq:code:boolZ} with probability at most
	\begin{equation*}
		\frac{\log m + \log n}{|\F|} + \frac{3(\log m + \log n)}{|\F|} = \frac{4(\log m + \log n)}{|\F|} \le \frac{2}{5} \eps,
	\end{equation*}
	where the last inequality uses $10(\log m + \log n)/|\F| \le \eps$ from \Cref{eq:code:delta}.
	\pparagraph{Booleanity of a multilinear $A$.} By an identical argument, we have that if $A$ was Boolean on the hypercube then \Cref{eq:code:boolA} is satisfied with probability $1$; else, \Cref{eq:code:boolA} is satisfied with probability at most $4(\log m + \log B)/|\F| \le \tfrac{2}{5}\eps$.

	\pparagraph{Unit measure of a multilinear $A$.} Assume that $A$ was Boolean on its hypercube (otherwise, see previous case). The argument is similar to the previous one, but simpler, with $\exptwo$ (\Cref{eq:exp2def}) decoding the weight numerators from the bits stored in $A$.

	Suppose $A$ encodes probabilities that sum to 1; that is, letting $\alpha_{j}^\dagger = \sum_{\beta} \exptwo(\beta)  A(j,\beta)$, the assumption is that $\sum_j{\alpha_j^\dagger} = 2^B$. Then,
	\begin{equation*}
		\sum_{\substack{j \in \bool^{\log m} \\ \beta \in \bool^{\log B}}} A(j,\beta) \cdot \exptwo(\beta) = \sum_{j \in \bool^{\log m}} \alpha^\dagger_j = 2^B.
	\end{equation*}
	The summand on the left hand side above has individual degree at most $2$. Thus, \Cref{fact:sumcheck} applied to $\SumCheck(2^B, \log m + \log B, 2)$ implies that an honest Prover convinces the Verifier with probability 1.

	For soundness, suppose $A$ was Boolean on the hypercube but did \emph{not} have unit measure, $\sum_{j,\beta}A(j,\beta)  \exptwo(\beta) \neq 2^B$ over $\Z$. We claim the inequality holds over $\F$ as well; indeed, by Booleanity of $A$, the following holds over $\Z$,
	\begin{equation*}
		0 \leq \sum_{\substack{j \in \bool^{\log m} \\ \beta \in \bool^{\log B}}} A(j,\beta)  \exptwo(\beta) \leq 2^{\log m} \sum_{\beta \in \bool^{\log B}}\exptwo(\beta) = m (2^{B} - 1) < |\F|.
	\end{equation*}
	Therefore, \Cref{fact:sumcheck} implies that \Cref{eq:code:normA} holds with probability at most $2 (\log m + \log B) / |\F| \le \tfrac{1}{5}\eps$, again by $10(\log m + \log B)/|\F| \le \eps$ of \Cref{eq:code:delta}.

	\pparagraph{Non-multilinear $Z$ and $A$.} We now remove the assumption that $Z$ and $A$ are multilinear. Let $\tilde Z, \tilde A$ be the multilinear polynomials closest to $Z, A$, at relative Hamming distances $\delta_Z \defeq \dist(Z,\tilde Z)$ and $\delta_A \defeq \dist(A,\tilde A)$, and write $R \defeq \lceil(\log m+\log n)\max\{\ln(1/\eps)/\delta,  2/\eps\}\rceil$ for the number of multilinearity repetitions on $Z$ (the count for $A$ is analogous). We are in the soundness case, meaning that $(Z,A)$ is $\delta$-far from every codeword; therefore at least one of the next two situations holds:

	\emph{(i) $Z$ or $A$ is $\delta$-far from multilinear.} Say $\delta_Z \ge \delta$, so $Z$ is in particular $\delta$-far from multilinear. Since $6(\log m+\log n)/|\F| < \delta < 1/2$, \Cref{fact:fglss} applies at proximity $\delta$, and each of the $R$ multilinearity tests on $Z$ accepts with probability at most $1-\delta/(\log m+\log n)$; hence they all accept with probability at most
	\begin{equation*}
		\Bigl( 1-\tfrac{\delta}{\log m + \log n} \Bigr)^{R} \le e^{-R  \delta/(\log m + \log n)} \le e^{-\ln(1/\eps)} = \eps,
	\end{equation*}
	using $R \ge (\log m+\log n)\ln(1/\eps)/\delta$.

	\emph{(ii) Both $Z$ and $A$ are $\delta$-close to multilinear, but $(\tilde Z, \tilde A) \notin \Code$.} Here $\delta_Z, \delta_A < \delta$, yet $\tilde Z$ or $\tilde A$ is non-Boolean, or $\tilde A$ is not of unit measure---otherwise $(Z,A)$ would be $\delta$-close to the codeword $(\tilde Z,\tilde A)$, contrary to assumption. We treat the case that $\tilde Z$ is non-Boolean---the other two failures are handled identically by \Cref{eq:code:boolA,eq:code:normA}. By the multilinear case above applied to $\tilde Z$, together with the at-most-$\delta_Z$ chance that the marginally-uniform query into $Z$ for \Cref{eq:code:boolZ} lands where $Z \ne \tilde Z$, \Cref{eq:code:boolZ} accepts with probability at most $4(\log m+\log n)/|\F| + \delta_Z$; we split on whether the multilinearity test can see the defect.

	If $\delta_Z \le 6(\log m+\log n)/|\F|$, then $Z$ is below the detection threshold of \Cref{fact:fglss}, but \Cref{eq:code:boolZ} alone already suffices:
	\begin{equation*}
		\Pr[\text{accept}] \le \frac{4(\log m+\log n)}{|\F|} + \delta_Z \le \frac{10(\log m+\log n)}{|\F|} \le \eps,
	\end{equation*}
	by \Cref{eq:code:delta}. Otherwise $6(\log m+\log n)/|\F| < \delta_Z < \delta < 1/2$, so \Cref{fact:fglss} applies at proximity $\delta_Z$ and the multilinearity test on $Z$ accepts with probability at most $(1-\delta_Z/(\log m+\log n))^{R} \le e^{-2\delta_Z/\eps}$, using $R \ge 2(\log m+\log n)/\eps$. Acceptance requires both this test and \Cref{eq:code:boolZ} to pass, on independent randomness, so multiplying and using $4(\log m+\log n)/|\F| \le \tfrac{2}{5}\eps$,
	\begin{equation*}
		\Pr[\text{accept}] \le e^{-2\delta_Z/\eps}\bigl(\tfrac25 \eps + \delta_Z\bigr) = \tfrac25 \eps   e^{-2\delta_Z/\eps} + \delta_Z   e^{-2\delta_Z/\eps} \le \tfrac25 \eps + \frac{\eps}{2e} < \eps,
	\end{equation*}
	using $e^{-2\delta_Z/\eps} \le 1$ for the first term and $\max_{x \ge 0} x   e^{-2x/\eps} = \eps/(2e)$ (attained at $x = \eps/2$) for the second.

	In every situation the Verifier accepts with probability at most $\eps$, which is the soundness claim.
\end{proof}
\begin{remark}\label{remark:fingerprinting}
	The field size $|\F| \geq m2^B$ is needed only for the unit-measure check, where the sum-check on \Cref{eq:code:normA} verifies the integer identity $\sum_j \alpha_j^\dagger = 2^B$ and the partial sums, which can reach $m(2^B-1)$, must not wrap around in $\F$. It could be avoided in an IP, by fingerprinting \cite{Freivalds77}: have the Verifier issue a random prime $p$ of size $\poly(B,n,\log m)$ and check the identity modulo $p$. However, the field $\F$ is then determined only during the protocol's runtime, thereby losing the clean conceptual cut of having a distribution encoded as $(Z,A)$. We prefer the cleaner presentation. On a technical level, our full protocol (\Cref{alg:full-protocol}) pays an exponential field size regardless, so the saving would not help our overarching goal.
\end{remark}

\subsection{Verifying a succinctly-specified marginal}
We give two protocols. The first, $\VerifyMarginoid$ (\Cref{alg:marginal}), is the one our main construction (the IPCP of \Cref{sec:mip:protocol}) actually calls. It runs on multilinear oracles and verifies $\sum_j \weight^{A}(j)  \match^Z_{b,s}(j)$ over $\F$ for a given $b,s$. We call this a \emph{marginoid} rather than a marginal because its context is specified over $\F$: the variable descriptions $s_1,\dots,s_\ell \in \F^{\log n}$ and the truth values $b_1,\dots,b_\ell \in \F$ range over the field, so the pair $(s,b)$ need not describe any actual event $\{\omega_{s_k} = b_k\}_k$. When $s$ and $b$ are Boolean and $(Z,A)$ is a codeword the marginoid is a genuine marginal mass; in general it is only the marginal-shaped algebraic value that the sum-check manipulates. This is all that \Cref{sec:mip:protocol} needs.

For completeness, and because it may be of independent interest, we then package $\VerifyMarginoid$ into the self-contained primitive $\VerifyMarginal$ (\Cref{alg:marginal-pp}). It verifies a genuine marginal up to tolerance $\tau$ using \emph{self-correction}: recovering the value of the nearby codeword at a point by querying the given (possibly corrupted) oracle at correlated random points (\Cref{fact:selfcorrect}). It will not be used downstream in this paper, however.

\begin{algorithm}[h!]
	\caption{$\VerifyMarginoid_\F^{Z,A}(v,   s_1,b_1,\dots,s_\ell,b_\ell)$}\label{alg:marginal}
	\KwInput{Finite field $\F$, $v,b_1,\dots,b_\ell \in \F$, and $s_1,\dots,s_\ell \in \F^{\log n}$. Oracle access to $Z \colon \F^{\log m + \log n} \to \F$ and $A\colon\F^{\log m + \log B} \to \F$.}
	\BlankLine
	Run
	\begin{equation*}
		(\hat v, \hat j) \defeq \SumCheck_\F(\val \gets v,\rounds \gets \log m, \degree \gets \ell+1)
	\end{equation*}\;
	Expect $w \in \F$ to be sent from the Prover. Run
	\begin{equation*}
		(\hat w, \hat \beta) \defeq \SumCheck_\F(\val \gets w,\rounds \gets \log B, \degree \gets 2)
	\end{equation*}\;
	Accept if and only if
	\begin{align}
		\hat w & \qeq \exptwo(\hat \beta) \cdot A(\hat j, \hat \beta), \ \textrm{and }\label{eq:marginal:w} \\
		\hat v & \qeq w \cdot \match^{Z}_{b,s}(\hat j). \label{eq:marginal:v}
	\end{align}
\end{algorithm}

\begin{claim}\label{claim:marginal}
	Fix a finite field $\F$, $v,b_1,\dots,b_\ell \in \F$, and $s_1,\dots,s_\ell \in \F^{\log n}$, and suppose $\VerifyMarginoid$ is run with \emph{multilinear} oracles $Z \colon \F^{\log m + \log n} \to \F$ and $A\colon\F^{\log m + \log B} \to \F$. Recall the gadgets $\exptwo$ and $\match^{Z}_{b,s}(j)$ from \Cref{eq:exp2def,eq:satdef}.
	\begin{itemize}
		\item \emph{Completeness:} There exists an honest Prover such that if
		      \begin{equation}\label{eq:marginal:claim}
			      v
                  = \sum_{\substack{j \in \bool^{\log m}\\ \beta \in \bool^{\log B}}} \exptwo(\beta)   A(j,\beta)   \match^{Z}_{b,s}(j)
		      \end{equation}
		      then the Verifier accepts with probability 1.
		\item \emph{Soundness:} If \Cref{eq:marginal:claim} does not hold then regardless of the Prover's messages, the Verifier accepts with probability at most $((\ell+1) \log m + 2\log B)/|\F|$.
	\end{itemize}
	Furthermore, the Verifier runs in $O((\ell+1) \log m + \log B)$ field operations, makes $\ell$ queries to $Z$ and one query to $A$. The query to $A$ is marginally uniform on $\F^{\log m + \log B}$, and each query to $Z$ is marginally uniform in its first $\log m$ coordinates.
\end{claim}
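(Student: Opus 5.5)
The plan is to view \Cref{alg:marginal} as two chained applications of \Cref{fact:sumcheck}: first over $j$, then over $\beta$. The first step rewrites the target sum in product form. By the definition of $\weight$ (\Cref{eq:Wdef}),
\[
\sum_{j,\beta}\exptwo(\beta)A(j,\beta)\match^Z_{b,s}(j)=\sum_{j\in\bool^{\log m}}\varphi(j),\qquad \varphi(j)\defeq \weight^A(j)\cdot\match^Z_{b,s}(j).
\]
Because $A$ is multilinear, $\weight^A$ is multilinear in $j$. Because $Z$ is multilinear, each factor of $\match^Z_{b,s}$ is linear in each coordinate of $j$, so $\match^Z_{b,s}$ has individual degree at most $\ell$. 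Hence $\varphi$ has individual degree at most $\ell+1$, which matches the parameter $\degree\gets\ell+1$ of the first sum-check.

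For completeness, the honest Prover runs the first sum-check on $\varphi$. By \Cref{fact:sumcheck}, this gives $\hat v=\varphi(\hat j)$ with probability 1. The Prover then sends $w=\weight^A(\hat j)=\sum_\beta \exptwo(\beta)A(\hat j,\beta)$ and runs the second sum-check on $\psi(\beta)\defeq\exptwo(\beta)A(\hat j,\beta)$. Both factors are multilinear in $\beta$, so $\psi$ has individual degree at most $2$. Again by \Cref{fact:sumcheck}, $\hat w=\psi(\hat\beta)$, which is \Cref{eq:marginal:w}. Finally, $\hat v=\varphi(\hat j)=w\cdot\match^Z_{b,s}(\hat j)$, which is \Cref{eq:marginal:v}.

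For soundness, suppose \Cref{eq:marginal:claim} fails, so $v\ne\sum_j\varphi(j)$. By \Cref{fact:sumcheck}, $\hat v=\varphi(\hat j)$ with probability at most $(\ell+1)\log m/|\F|$; call the complementary event $E_1$, under which $\hat v\neq\varphi(\hat j)$. Condition on $E_1$ and on the value $w$ the Prover sends. There are two cases.
\begin{itemize}
\item If $w=\weight^A(\hat j)$, then $w\cdot\match^Z_{b,s}(\hat j)=\varphi(\hat j)\ne\hat v$, so \Cref{eq:marginal:v} fails deterministically.
\item If $w\ne\weight^A(\hat j)=\sum_\beta\psi(\beta)$, then $\psi$ is fixed before $\hat\beta$ is sampled, and the second sum-check accepts \Cref{eq:marginal:w} with probability at most $2\log B/|\F|$.
\end{itemize}
A union bound then gives total acceptance probability at most $((\ell+1)\log m+2\log B)/|\F|$. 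The one delicate point is this conditioning: the second sum-check's polynomial $\psi$ depends on $\hat j$, which is fixed before $\hat\beta$ is drawn, so \Cref{fact:sumcheck} still applies.

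For efficiency, the two sum-checks cost $O((\ell+1)\log m)$ and $O(\log B)$ field operations. Evaluating $\exptwo(\hat\beta)$ costs $O(\log B)$. Evaluating $\match$ costs $O(\ell)$ operations plus the $\ell$ queries $Z(\hat j,s_k)$, and there is the single query $A(\hat j,\hat\beta)$. By \Cref{fact:sumcheck}, $\hat j$ and $\hat\beta$ are each marginally uniform, and they come from independent coins. So the query to $A$ is marginally uniform on $\F^{\log m+\log B}$, and each query to $Z$ is marginally uniform in its first $\log m$ coordinates. Its last coordinates are the fixed point $s_k$.
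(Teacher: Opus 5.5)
Your proposal is correct and follows essentially the same route as the paper's proof. You use the same factorization into $\varphi(j)=\weight^A(j)\,\match^Z_{b,s}(j)$ and $\psi_{\hat j}(\beta)=\exptwo(\beta)A(\hat j,\beta)$ with individual degrees $\ell+1$ and $2$, the same honest Prover, the same two-case soundness argument on whether $w=\weight^A(\hat j)$ closed by a union bound, and the same appeal to \Cref{fact:sumcheck} for cost and smoothness; the only difference is that you condition on the first sum-check's failure before splitting on $w$, which is a cosmetic reordering of the paper's case split.
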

\begin{proof}
    We use the shorthand $\weight^A(\hat j) = \sum_\beta \exptwo(\beta)A(\hat j, \beta)$ throughout this proof (see \Cref{eq:Wdef})
	At a high level, the first $\SumCheck$ reduces the claimed value $v$ to the evaluation $\weight^{A}(\hat j)\match^{Z}_{b,s}(\hat j)$ at a random $\hat j$, and the second reduces the Prover's alleged value of $\weight^{A}(\hat j)$ to a single query to $A$. We first record the degrees of the two summands, then prove completeness and soundness.

	\pparagraph{Degrees of the two summands.} Denote
	\begin{equation*}
		\varphi(j) \defeq \weight^{A}(j)  \match^{Z}_{b,s}(j) \qquad \text{and} \qquad \psi_{\hat j}(\beta) \defeq \exptwo(\beta)   A(\hat j,\beta).
	\end{equation*}
	Because $Z$ is multilinear, for each fixed $s_k$ the map $j \mapsto Z(j,s_k)$ is multilinear. Therefore, each factor $b_k Z(j,s_k)+(1-b_k)(1-Z(j,s_k))$ has individual degree $1$ in each coordinate of $j$, and their product $\match^{Z}_{b,s}(j)$ has individual degree at most $\ell$. Since $\weight^{A}$ is multilinear in $j$ (\Cref{eq:Wdef}), the product $\varphi(\bullet) = \weight^{A}(\bullet) \cdot \match^{Z}_{b,s}(\bullet)$ has individual degree at most $\ell+1$ in each of its $\log m$ variables. As for $\psi_{\hat j}$: $\exptwo(\beta)=\prod_{k=1}^{\log B}(1+\beta_k(2^{2^{k-1}}-1))$ is multilinear in $\beta$, and so is $A(\hat j,\beta)$ for fixed $\hat j$, so $\psi_{\hat j}$ has individual degree at most $2$ in each of its $\log B$ variables. These are the degree parameters passed to the two invocations of \Cref{alg:sumcheck}.

	\pparagraph{Completeness.} Assume \Cref{eq:marginal:claim}; that is, $v=\sum_{j\in\bool^{\log m}}\varphi(j)$. Run the honest \Cref{fact:sumcheck} Prover for $\varphi$ in the first $\SumCheck$ (its individual degree is at most $\ell+1$, as required); by completeness of \Cref{fact:sumcheck}, $\hat v=\varphi(\hat j)=\weight^{A}(\hat j)\match^{Z}_{b,s}(\hat j)$ with probability $1$. Next, let the Prover send $w \defeq \weight^{A}(\hat j)=\sum_{\beta\in\bool^{\log B}}\psi_{\hat j}(\beta)$ and answer honestly as per \Cref{fact:sumcheck} for $\psi_{\hat j}$ in the second $\SumCheck$; then $\hat w=\psi_{\hat j}(\hat\beta)=\exptwo(\hat\beta)A(\hat j,\hat\beta)$ with probability $1$, so \Cref{eq:marginal:w} holds. Finally, $\hat v = \weight^{A}(\hat j)\match^{Z}_{b,s}(\hat j)=w\cdot\match^{Z}_{b,s}(\hat j)$, so \Cref{eq:marginal:v} holds as well, and the Verifier accepts with probability $1$.

	\pparagraph{Soundness.} Suppose \Cref{eq:marginal:claim} fails; that is, $v \ne \sum_{j\in\bool^{\log m}}\varphi(j)$. Fix any Prover, and distinguish two events according to the value $w$ it sends: either the Verifier accepts and $w=\weight^{A}(\hat j)$, or the Verifier accepts and $w\ne \weight^{A}(\hat j)$.

	Suppose the Verifier accepts and $w=\weight^{A}(\hat j)$. Then \Cref{eq:marginal:v} is $\hat v=w  \match^{Z}_{b,s}(\hat j)=\weight^{A}(\hat j)\match^{Z}_{b,s}(\hat j)=\varphi(\hat j)$. Since $v\ne\sum_j\varphi(j)$, by soundness of the first $\SumCheck$, the Verifier accepts with probability at most $(\ell+1)\log m/|\F|$.

	Now suppose $w\ne \weight^{A}(\hat j)$. Then $w\ne\sum_\beta\psi_{\hat j}(\beta)$, so by soundness of the second $\SumCheck$, \Cref{eq:marginal:w} holds with probability at most $2\log B/|\F|$.

    By a union bound, the Verifier accepts with probability at most
	\begin{equation*}
		\frac{(\ell+1)\log m + 2\log B}{|\F|}.
	\end{equation*}

	\pparagraph{Complexity and smoothness.}\footnote{Smoothness \cite{KT00}, \ie\ that the Verifier's oracle queries are (marginally) uniformly distributed over the proof, is a well-studied property of PCPs \cite{Par21,AC25,BHPT24}; we rely on it repeatedly, as it drives both the self-correction of \Cref{fact:selfcorrect} and the passage from the oracle to a nearby codeword in the soundness proof of \Cref{thm:explicit-iop}.} Both follow from \Cref{fact:sumcheck}. The two $\SumCheck$s cost $O((\ell+1)\log m)$ and $O(\log B)$ field operations, and their outputs $\hat j, \hat\beta$ are marginally uniform. To evaluate \Cref{eq:marginal:v}, the Verifier computes $\match^{Z}_{b,s}(\hat j)$ directly via the $\ell$ queries $Z(\hat j,s_k)$; to evaluate \Cref{eq:marginal:w}, it computes $\exptwo(\hat\beta)$ directly and makes the single query $A(\hat j,\hat\beta)$.
\end{proof}

We now package $\VerifyMarginoid$ as the standalone primitive $\VerifyMarginal$. Given oracles $(Z,A)$ close to a codeword for a distribution $\mu$, it certifies that an alleged value $v$ agrees, within a tolerance $\tau$, with the true marginal mass of a context---the denominator-cleared probability $2^B\mu[s_1,b_1,\dots,s_\ell,b_\ell]$. The Prover commits to that mass, the Verifier checks it against $v$ by an integer comparison exactly as the tolerance test of $\VerifyConsistency$ (\Cref{alg:full-protocol}), and the commitment is certified by running $\VerifyMarginoid$ on the self-corrected oracle.

\begin{fact}[\cite{BlumLR90}]\label{fact:selfcorrect}
	Fix a field $\F$, $k \leq |\F| - 2$, and $f \colon \F^k \to \F$.
	Consider the following procedure $\SelfCorrect_\F^f(x)$ for input $x \in \F^k$:
	\begin{enumerate}
		\item Sample $r \in \F^k$ uniformly at random.
		\item Query $f(x + r), f(x + 2r), \dots, f(x + (k+1) r)$.
		\item Compute $g \colon \F \to \F$ the unique polynomial of degree at most $k$ such that $g(i) = f(x + ir)$ for all $i \in [k+1]$.
		\item Output $g(0)$.
	\end{enumerate}
	For any $\delta \in (0,1)$, if $f$ is $\delta$-close to a multilinear $\tilde f \colon \F^k \to \F$, then $\SelfCorrect_\F^f(x)$ outputs $\tilde f(x)$ with probability at least $1 - (k+1) \delta$. $\SelfCorrect$ makes $k+1$ marginally uniform queries to $f$, and runs in time $O(k^2)$ field operations.\footnote{More precisely, given precomputed interpolation weights, the running time is $O(k)$ scalar field operations plus $O(k)$ vector additions.}
\end{fact}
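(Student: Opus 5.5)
The plan is to take $\tilde f$ to be the multilinear function with $\dist(f,\tilde f)\le\delta$ and to study the restriction of $\tilde f$ to the random line $L_{x,r}=\{x+ir : i\in\F\}$. Along this line, $g^\ast(i)\defeq\tilde f(x+ir)$ is a univariate polynomial in $i$. Since $\tilde f$ is multilinear, its total degree is at most $k$, so substituting the affine map $i\mapsto x+ir$ makes $\deg g^\ast\le k$ for every fixed $x,r$.

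Correctness then reduces to a single event. Suppose all $k+1$ queried values agree with $\tilde f$, meaning $f(x+ir)=\tilde f(x+ir)$ for each $i\in[k+1]$. The points $1,\dots,k+1$ are distinct in $\F$, because $k+1\le|\F|-1$; this is where the hypothesis $k\le|\F|-2$ enters, ensuring the evaluation points are distinct and nonzero. By uniqueness of degree-$\le k$ interpolation through $k+1$ distinct points, the interpolant $g$ computed in step 3 equals $g^\ast$. Hence the output is $g(0)=g^\ast(0)=\tilde f(x)$.

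To bound the probability of that event, I would first show that for each fixed $i\in[k+1]$ the point $x+ir$ is uniform on $\F^k$. This holds because $r$ is uniform and $i\ne0$ is invertible in $\F$, so $r\mapsto x+ir$ is a bijection. This uniformity is exactly the claimed "marginally uniform queries" property. Since $\Pr[f(x+ir)\ne\tilde f(x+ir)]=\dist(f,\tilde f)\le\delta$ for each $i$, a union bound over the $k+1$ queries gives a failure probability of at most $(k+1)\delta$.

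The running time I would handle by Lagrange interpolation evaluated at $0$. The output is $g(0)=\sum_{i=1}^{k+1}c_i f(x+ir)$ with $c_i=\prod_{j\ne i} j/(j-i)$, and these weights depend only on $k$. Computing them naively takes $O(k^2)$ field operations. Forming the $k+1$ query points costs $O(k)$ vector operations, and taking the final linear combination costs $O(k)$, which matches the footnote. I expect no step to be a genuine obstacle; the only point needing care is noting that a multilinear $\tilde f$ has total degree at most $k$, which is what justifies using $k+1$ interpolation points.
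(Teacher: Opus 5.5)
Your proposal is correct and is the standard line-restriction argument behind this fact; the paper does not prove it but cites it from Blum--Luby--Rubinfeld, and your steps are the usual ones: $\deg \le k$ on the line, each $x+ir$ uniform, a union bound over the $k+1$ queries, and Lagrange interpolation at $0$. One point to state precisely is why $1,\dots,k+1$ are distinct nonzero field elements: this holds because the paper works over a prime field, where $k+1\le|\F|-1<\mathrm{char}\,\F$, whereas over a field of small characteristic one would instead interpolate through any $k+1$ distinct nonzero points, which exist exactly when $k\le|\F|-2$.
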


\begin{algorithm}[h!]
	\caption{$\VerifyMarginal^{Z,A}(v,   s_1,b_1,\dots,s_\ell,b_\ell,   \tau)$}\label{alg:marginal-pp}
	\KwInput{Finite field $\F$, an alleged value $v \in \F$, a context $s_1,\dots,s_\ell \in [n]$ with bits $b_1,\dots,b_\ell \in \bool$, and a tolerance $\tau = \tau^\dagger/2^B$ given to precision $B$ by an integer numerator $\tau^\dagger \in \{0,\dots,2^B\}$. Oracle access to $Z \colon \F^{\log m + \log n} \to \F$ and $A\colon\F^{\log m + \log B} \to \F$ that are $\delta$-close to a codeword.}
	\BlankLine
	Expect $v^\ast \in \F$ from the Prover. Treating $v^\ast$ and $v$ as integers in $\{0,\dots,|\F|-1\}$, reject unless,
	\begin{equation}\label{eq:marginal-pp:threshold}
		\bigl|   v^\ast - v   \bigr| < \tau^\dagger \quad \text{over $\Z$}.
	\end{equation}\;
	Run $\VerifyMarginoid_\F^{Z,A}(v^\ast,   s_1,b_1,\dots,s_\ell,b_\ell)$, replacing each query $Z(\hat j, s_k)$ with $\SelfCorrect_\F^{Z}(\hat j, s_k)$ (\Cref{fact:selfcorrect}). Accept if and only if it accepts.
\end{algorithm}

In fact, only the $s_k$ argument to $Z$ needs to be self-corrected, but we self-correct the entire query $Z(\hat j, s_k)$ for simplicity of notation.

\begin{corollary}[Marginal certification within a tolerance]\label{cor:marginal}
	Fix $m,n,B \in \N$ and a prime field $\F$ with $\log m + \log n \le |\F| - 2$ and $2^B < |\F|$. Let $\pi = (Z,A)$ be functions that are $\delta$-close, in relative Hamming distance, to a codeword $(\widetilde Z,\widetilde A) \in \Code_\F(n,m,B)$ encoding a distribution $\mu$ over $\bool^n$. Fix a context $s_1,\dots,s_\ell \in [n]$ with bits $b_1,\dots,b_\ell \in \bool$, an alleged value $v \in \F$, and a tolerance $\tau = \tau^\dagger/2^B$ with integer numerator $\tau^\dagger \in \{0,\dots,2^B\}$. Then $\VerifyMarginal^{Z,A}(v, s_1,b_1,\dots,s_\ell,b_\ell, \tau)$ satisfies:
	\begin{itemize}
		\item \emph{Completeness:} if $\pi = (\widetilde Z, \widetilde A)$ is exactly the codeword and $|\mu[s_1,b_1,\dots,s_\ell,b_\ell] - v/2^B| < \tau$, there is an honest Prover with which the Verifier accepts with probability $1$.
		\item \emph{Soundness:} if $|\mu[s_1,b_1,\dots,s_\ell,b_\ell] - v/2^B| \ge \tau$, then regardless of the Prover's messages the Verifier accepts with probability
		      \begin{equation*}
			      O\!\left( \frac{\ell\log m + \log B}{|\F|} \;+\; \ell  (\log m + \log n)  \delta \right).
		      \end{equation*}
		\item \emph{Complexity:} the Verifier makes $O\bigl(\ell  \log m + \ell \log n\bigr)$ marginally-uniform queries to $Z$ and one to $A$, exchanges $O(\ell \log m + \log B)$ field elements with the Prover, and runs in $O\bigl(\ell  (\log m + \log n)^2 + \log B\bigr)$ field operations, plus $O(\log |\F|)$ bit operations for the integer comparison.
	\end{itemize}
\end{corollary}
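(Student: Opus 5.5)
The plan is to reduce \Cref{cor:marginal} to \Cref{claim:marginal}, applied to the multilinear codeword $(\widetilde Z,\widetilde A)$ rather than to the given oracles. The self-correction step accounts for the difference between the two.

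First, evaluate the marginoid at a codeword with Boolean context. For Boolean $j$, $\match^{\widetilde Z}_{b,s}(j)$ is the indicator that $z_j$ satisfies every literal $z_{j,s_k}=b_k$, and $\weight^{\widetilde A}(j)=\alpha^\dagger_j$. Hence the right-hand side of \Cref{eq:marginal:claim} equals $\sum_{j: z_j\models (s,b)}\alpha^\dagger_j = 2^B\mu[s_1,b_1,\dots,s_\ell,b_\ell]$ over $\Z$. This integer lies in $[0,2^B]$, and since $2^B<|\F|$ there is no wraparound, so the identity also holds over $\F$ and the integer representative is the true mass. (If $2^B=|\F|-1$ is too tight, I would require $2^B<|\F|$ as stated.)

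\emph{Completeness.} Take $\pi=(\widetilde Z,\widetilde A)$ exactly. Then self-correction always returns the true value, because a multilinear $f$ restricted to a line is a polynomial of degree at most $k$, so interpolation recovers it. The honest Prover sends $v^\ast \defeq 2^B\mu[\cdot]$. The hypothesis $|\mu[\cdot]-v/2^B|<\tau$ gives $|v^\ast-v|<\tau^\dagger$ over $\Z$, where $v$ is read as its integer representative and we assume, as in the tolerance test, that $v$ is the intended integer. The Prover then runs the honest $\VerifyMarginoid$ Prover, which by \Cref{claim:marginal} makes the Verifier accept with probability $1$.

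\emph{Soundness.} Suppose $|\mu[\cdot]-v/2^B|\ge\tau$. If the Prover's $v^\ast$ passes \Cref{eq:marginal-pp:threshold}, then $v^\ast\ne 2^B\mu[\cdot]$ as integers, and hence as field elements. So the marginoid claim \Cref{eq:marginal:claim} fails for the codeword $(\widetilde Z,\widetilde A)$. Consider the hybrid Verifier that answers every $Z$-query with $\widetilde Z$ and queries $\widetilde A$ in place of $A$. By \Cref{claim:marginal} it accepts with probability at most $((\ell+1)\log m+2\log B)/|\F|$.

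The real Verifier's view differs from the hybrid's only in two ways:
\begin{itemize}
\item some $\SelfCorrect^Z$ call returns a value other than $\widetilde Z(\hat j,s_k)$, which happens with probability at most $(\log m+\log n+1)\delta$ per call by \Cref{fact:selfcorrect}, giving $O(\ell(\log m+\log n)\delta)$ over the $\ell$ calls;
\item the single $A$-query hits a point where $A\ne\widetilde A$, which has probability at most $\delta$ because that query is marginally uniform (\Cref{claim:marginal}).
\end{itemize}
A union bound yields the stated soundness bound.

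The main subtlety is that the Prover's messages may be adaptive. This is handled by coupling the real and hybrid executions on identical coins and Prover strategy: the two executions coincide unless one of the bad events above occurs. Each bad event has small probability for any fixed query point, because the self-correction randomness $r$ is fresh and independent of the point being corrected. For the $A$-query, marginal uniformity holds regardless of the Prover, since $(\hat j,\hat\beta)$ come from the Verifier's own coins.

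\emph{Complexity.} $\VerifyMarginoid$ makes $\ell$ $Z$-queries, each self-corrected with $\log m+\log n+1$ marginally uniform queries at cost $O((\log m+\log n)^2)$ each, together with one $A$-query. Adding the $O(\ell\log m+\log B)$ field operations and communication of the two $\SumCheck$s, and a single $O(\log|\F|)$-bit integer comparison, gives the claimed counts.
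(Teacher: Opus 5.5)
Your proposal is correct and follows the paper's proof essentially step for step. You identify the codeword marginoid with $2^B\mu[\cdot]$ without wraparound (using $2^B<|\F|$), and for completeness the honest Prover sends $v^\ast=2^B\mu[\cdot]$; for soundness, the threshold test forces $v^\ast\neq 2^B\mu[\cdot]$, so \Cref{claim:marginal} applies to the codeword, and a union bound over the $\ell$ self-correction failures and the single marginally uniform $A$-query closes the argument. Your explicit coupling of the real and hybrid executions is a more careful write-up of the paper's ``condition on all agreements'' step, not a different route.
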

\begin{proof}
	Write $\mu^\dagger \defeq 2^B \mu[s_1,b_1,\dots,s_\ell,b_\ell]$. Note that, for the $\Code$-encoding $(\widetilde Z, \widetilde A)$ of $\mu$, the value certified by \Cref{claim:marginal} (the right-hand side of \Cref{eq:marginal:claim}) is
    \begin{equation*}
        \sum_j \weight^{A}(j)  \match^{\widetilde Z}_{b,s}(j) = \sum_{j:\forall k\ z_{j,s_k} = b_k} \alpha^\dagger_j = \mu^\dagger,
    \end{equation*}
    indeed $\weight^{A}(j) = \alpha^\dagger_j$ is the integer weight of $z_j$ (\Cref{eq:Wdef}), and $\match^{\widetilde Z}_{b,s}(j)$ is the Boolean indicator that $z_{j,s_k} = b_k$ for all $k$ (\Cref{def:code,eq:satdef}). Because $0 \le \mu^\dagger \le 2^B < |\F|$, the integer comparison \Cref{eq:marginal-pp:threshold} against a Prover message claiming $\mu^\dagger$ is unambiguous.

	\emph{Completeness.} The honest Prover sends $v^\ast \defeq \mu^\dagger$. Since $|\mu^\dagger - v| < \tau^\dagger$, \Cref{eq:marginal-pp:threshold} holds. On the exact codeword every $\SelfCorrect$ returns $\widetilde Z(\hat j, s_k)$ with probability $1$, so the second step is $\VerifyMarginoid$ on $(\widetilde Z, \widetilde A)$ with claimed value $v^\ast = \mu^\dagger$; by \Cref{claim:marginal} it accepts with probability $1$.

	\emph{Soundness.} Suppose $|\mu^\dagger - v| \ge \tau^\dagger$ and fix any Prover sending some $v^\ast$ as its message. If the Verifier accepts then \Cref{eq:marginal-pp:threshold} holds, \ie, $|v^\ast - v| < \tau^\dagger$. Then $v^\ast \ne \mu^\dagger$ over $\F$. The second step is therefore $\VerifyMarginoid$ run with claimed value $v^\ast \ne \mu^\dagger$ and each $Z$-query self-corrected. By \Cref{fact:selfcorrect} (with $k = \log m + \log n \le |\F|-2$), each of the $\ell$ self-corrected queries returns $\widetilde Z(\hat j, s_k)$ except with probability $(\log m + \log n + 1)\delta$, and the single $A$-query is marginally uniform on $\F^{\log m + \log B}$ (\Cref{claim:marginal}), agreeing with $\widetilde A$ except with probability $\delta$. Conditioned on all of these agreements, the run is then exactly $\VerifyMarginoid$ against the multilinear codeword $(\widetilde Z, \widetilde A)$, whose true value is $\mu^\dagger$. Since $v^\ast \ne \mu^\dagger$, \Cref{eq:marginal:claim} fails and \Cref{claim:marginal} bounds acceptance by $((\ell+1)\log m + 2\log B)/|\F|$. Conclude with a union bound.

	\emph{Complexity.} Each $\SelfCorrect$ makes $\log m + \log n + 1$ marginally-uniform queries and $O((\log m + \log n)^2)$ field operations (\Cref{fact:selfcorrect}), and there are $\ell$ of them; the integer comparison \Cref{eq:marginal-pp:threshold} costs $O(\log |\F|)$ bit operations on the lifted representatives, and the remaining cost, including the $O(\ell \log m + \log B)$ field elements of Prover communication, is that of \Cref{claim:marginal}.
\end{proof}

\section{Verifying consistency of predictive models}\label{sec:succinct}

The main contribution of this section is an \emph{explicit} polynomial-time Interactive PCP ($\IPCP$) for $\MODELCONSIST$: a protocol in which the Verifier is given an (alleged) $\Code$-encoding (\Cref{sec:codecheck}) of a consistent distribution, and uses it to certify consistency of a given predictive model.

We note the consequence we do \emph{not} pursue: \Cref{cor:mip} composed with $\MIP = \NEXP$ \cite{BabaiFL91} gives that $\MODELCONSIST$ admits a $\MIP$, namely, polynomial-time Verifier interacting with two provers.\footnote{Equivalently, a polynomial time Verifier querying an exponential length PCP. We stick with the $\MIP$ strawman, but our informal argument would apply to a PCP just the same.} One could then hope to augment the training of $(P, Q)$ with the training of two Provers $P_1, P_2$ towards a predictive model that proves its own consistency (see \Cref{rem:spmip}). The practical obstacle to such a programme is not (only) the cost of running $P_1, P_2$: they may be worst-case exponential-time, but perhaps more efficient on average and may even be trained for average success by Reinforcement Learning from Verifier Feedback (RLVF) or Transcript Learning (TL) \cite{AmitGPR25}. However, before one trains an honest Prover by RLVF (resp. TL), one needs an implementation of the Verifier (resp. honest Prover) to learn against: code that, on input the witnessing distribution $\mu$ underlying $(P, Q)$'s consistency, produces the messages an honest $P_1, P_2$ would send. Attempting to follow the rationale of \Cref{cor:mip}, this code is the composition of two generic constructions. The first is the Cook--Levin \cite{Cook71,Levin73} reduction of $(P, Q, \tau, V_\NEXP)$ into a succinct CNF formula, which discards the convex-geometric semantics of the Carath\'eodory witness.
The second is the BFLS honest Prover strategy \cite{BabaiFLS91} which runs on top of the succinct CNF.

The IPCP we construct in this section is designed towards clearing this obstacle by explicitly describing the Verifier and honest Prover. Namely, the proof oracle is the $\Code$-encoding of the witnessing distribution, and the $\SumCheck$s that reduce the inconsistency inequality to queries on that oracle admit an arguably self-contained, implementable specification (\Cref{alg:full-protocol}). We reflect in \Cref{sec:conclusions} on what was natural about the resulting protocol: how it factors into a model-specific ``witness'' core and a model-agnostic ``proving'' overlay, which we hope opens a route to training self-proving predictive models in the spirit of \cite{AmitGPR25}.

\begin{definition}[Interactive PCP \cite{KalaiR08}]\label{def:ipcp}
	An \emph{Interactive PCP} ($\IPCP$) is an interactive proof in which the Verifier additionally has oracle access to a proof string $\pi$ fixed before the interaction begins (the PCP). It is measured by the \emph{Verifier runtime}; the \emph{PCP length}, the number of symbols of $\pi$; the \emph{query complexity}, the number of point queries the Verifier makes to $\pi$; and the \emph{communication} exchanged with the Prover.
\end{definition}

Importantly, the oracle cannot adapt to the interaction, whereas the Prover can; this is what separates an $\IPCP$ from an interactive proof, and since a proof oracle is equivalent to non-communicating provers \cite{FortnowRS94}, an $\IPCP$ is a special case of a multi-prover proof. As elsewhere, we specify only the Verifier, and describe the honest Prover and proof string when proving completeness.

\subsection{Setup: from a witnessing distribution to the proof oracle}\label{sec:mip:witness}

Fix a predictive model $(P,Q)$ with variable description length $d$, context length $\ell$, and precision $B$. We write $n \coloneqq 2^d$ for the number of (implicit) Boolean variables, and recall that $P$'s probability denominator is $2^{B}$.
For readability, we omit the subscript from $\D_{(P,Q)}$, writing simply $\D$ for the model inconsistency.

\paragraph{Queries.}
Recall from \Cref{sec:def:model} that a \emph{query} is a string $q = (s_1, b_1, \dots, s_\ell, b_\ell, t) \in \Qcal$, with $\Qcal \coloneqq \bool^{\ell(d+1)+d}$ the \emph{query universe}, $s_k \in \bool^d$ the description of the $k$th context variable, $b_k \in \bool$ the corresponding context bit, and $t \in \bool^d$ the target-variable description. For example, $q=(s_1, 0, s_5, 1, t)$ is asking for the probability that the variable $\omega_t$ is true conditioned on $\omega_{s_1} = 0 $ and $\omega_{s_5} = 1$. We abbreviate $\norm{Q} \coloneqq \norm{Q}_1 = \sum_{q \in \Qcal} Q(q)$, the number of (confidence-weighted) claims implicit in $(P,Q)$; note that $\norm{Q}_1 \le |\Qcal | \cdot 2^B \in 2^{O(\ell d + B)}$.

We now describe what the Prover of a consistent model must do to come up with the proof oracle: find a distribution $\mu^\star$ consistent with the model $(P,Q)$, \emph{sparsify} it, and \emph{encode} the sparse result.

\paragraph{Step 1: find any witnessing distribution.}
Suppose $(P,Q)$ is $(\tau - \epsgap)$-consistent for some gap $\epsgap > 0$: by definition, there is \emph{some} distribution $\mu^\star$ over $\bool^n$ with $\Incof(P,Q;\mu^\star) \le \tau - \epsgap$. Nothing more is assumed of $\mu^\star$; its support may well be all of $\bool^n$.

\paragraph{Step 2: sparsify.}
By the Carath\'eodory step of \Cref{sec:NP} applied to the implicit CPC $\PP_{P,Q}$ (\Cref{def:implicit-cpc,claim:implicit-cpc}), in the low-precision explicit form of \Cref{prop:near-linear-witnesses} (gap $\epsgap$), the witness $\mu^\star$ can be traded for a sparse distribution $\mu_{z,\alpha}$ with
\begin{itemize}
	\item support vectors $z_1,\dots,z_{m} \in \bool^{n}$, with $m = \norm{Q}_1 + 1$, \footnote{\Cref{prop:near-linear-witnesses} gives a  support of at most this size, and \Cref{def:code} permits the zero-weight padding. We assume, here and throughout, that $\norm{Q}_1+1$ is a power of two, so that $\log m$ is an integer as \Cref{def:code} requires; the decoupled variant described in \Cref{rem:variable-length} removes this assumption.} and
	\item weights $\alpha_j = \alpha^\dagger_j / 2^{B'}$ for integers $\alpha^\dagger_j \in \{0,\dots,2^{B'}-1\}$ with $\sum_{j=1}^{m} \alpha^\dagger_j = 2^{B'}$, where the denominator's bit-length $B'$ is the certificate precision $B_{\epsgap}$ of \Cref{prop:near-linear-witnesses} at this $m$, rounded up to a power of two as \Cref{def:code} requires, so that $B' \in O(\log \norm{Q} + \log(1/\epsgap)) \subseteq O(\ell d + B + \log(1/\epsgap))$,\footnote{\Cref{prop:near-linear-witnesses} also permits $\alpha^\dagger_j = 2^{B_{\epsgap}}$, a point mass, which the range above excludes. Such a witness is brought into range by duplicating its support vector.}
	\item such that $\D_{\PP_{P,Q}}(\mu_{z,\alpha}) \le \tau$.
\end{itemize}
The sparsification costs the additive $\epsgap$ slack in completeness. Throughout the rest of this section we work exclusively with the low-precision witness; \Cref{rem:exact-witness} explains why the exact witness of \Cref{lem:rationalopt} is not used here.

\paragraph{Step 3: encode.}
The Prover encodes $\mu$ using $\Code_\F(n, m, B')$, for sufficiently large field $\F$. Namely, the proof oracle is the codeword $\pi = (Z, A)$, where as in \Cref{def:code}, $Z$ is the multilinear extension of the support-vector table $(j,i) \mapsto z_{j,i}$ and $A$ that of the weight-bit table $(j,\beta) \mapsto (\alpha^\dagger_j)_\beta$.

From this point onward $\pi = (Z, A)$ denotes an \emph{alleged} $\Code_\F(n, m, B')$-codeword---a pair of functions over $\F$ of the types prescribed by \Cref{def:code}.

The predictive model $(P,Q)$ is given as a pair of circuits taking an input query $q$ given as a bit-string. The Verifier in our protocol will eventually want to compute the outputs of these circuits on $\F$-valued queroids $\hat q$; this is done by \emph{arithmetizing} each circuit, \ie, deriving polynomials $\hat P, \hat Q$ by replacing Boolean gates by their arithmetic form over $\F$: negation is replaced by $a \mapsto 1-a$, conjunction by $(a,b) \mapsto ab$, and disjunction by $(a,b) \mapsto a+b-ab$.

\begin{fact}[Naive]\label{fact:arith}
	Let $C$ be a Boolean circuit with $n$ inputs and $S$ many gates, computing a function $\bool^n \to \{0,\dots,M-1\}$ with $M < |\F|$. The polynomial $\hat C \in \F[X_1,\dots,X_n]$ is obtained by replacing each Boolean gate with its arithmetic form over $\F$. Then, for any Boolean input $x \in \bool^n$, $\hat C(x) = C(x)$, and for any $\hat x \in \F^n$, $\hat C(\hat x)$ is computable in $O(S)$ field operations.
\end{fact}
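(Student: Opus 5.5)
The plan is to argue gate by gate, by induction over a topological order of $C$. Define $\hat C$ by assigning to each wire $w$ a polynomial $\hat g_w \in \F[X_1,\dots,X_n]$. Input wires get the variable $X_i$ (or the constants $0,1$). A gate applies its arithmetic form to the polynomials on its input wires: $1-a$ for negation, $ab$ for conjunction, and $a+b-ab$ for disjunction.

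The first claim is that each $\hat g_w$ agrees with the Boolean value $g_w$ on every $x \in \bool^n$. Since $\bool \subseteq \F$ via $0,1$, it suffices to check that on Boolean arguments the three arithmetic forms return exactly the Boolean truth tables. For instance $a+b-ab\in\bool$ equals $a \lor b$ for $a,b\in\bool$. Hence the values stay in $\bool$ and induction goes through.

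The circuit outputs $B'=\lceil\log M\rceil$ bits $c_0,\dots,c_{B'-1}$. We take $\hat C \coloneqq \sum_{\beta} 2^\beta \hat g_{c_\beta}$, computed in $\F$. On Boolean $x$ this evaluates to $\sum_\beta 2^\beta c_\beta(x)$ reduced modulo $|\F|$. The one point needing care, and the only place the hypothesis is used, is that this integer is $C(x) \le M-1 < |\F|$. So no wraparound occurs, and the field element is identified with the integer $C(x)$ under the paper's convention of integer representatives.

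For the running time, we do not expand $\hat C$ into monomials. At a point $\hat x\in\F^n$ we evaluate it by the same topological sweep, replacing each gate by at most two field multiplications/additions. That costs $O(S)$ operations, plus $O(B')\subseteq O(S)$ for the final weighted sum (output gates are counted in $S$; the powers $2^\beta$ accumulate by doubling).
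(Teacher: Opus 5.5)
Your proposal is correct and follows the same route as the paper's proof (which appears right after \Cref{cor:boolean-ipcp}). The arithmetic forms $1-a$, $ab$, $a+b-ab$ agree with the Boolean connectives on $\bool$, so induction over gates in topological order gives $\hat C = C$ on the cube and an $O(S)$ evaluation at any $\hat x \in \F^n$; the hypothesis $M < |\F|$ ensures the integer output is read back without reduction. You are more explicit than the paper in combining the output bits as $\sum_\beta 2^\beta \hat g_{c_\beta}$, which is what the protocol's later use of $\hat P(\hat q)$ as a numerator requires, while the paper's proof also records a degree bound $\deg \hat C \le \Delta \le 2^D$ that this statement does not ask for.
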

We let $\hat P, \hat Q$ denote the polynomials obtained by this naive arithmetization, and let $\Delta$ denote an upper bound on their degrees (a scalar, not to be confused with the simplex $\Delta X$, which always takes a set argument). The protocol presented in this section costs linearly in $\Delta$, which is itself exponential in the circuits' depth in the worst case; \Cref{sec:delegation} then shows how doubly-efficient interactive proofs \cite{GoldwasserKR15,Thaler13} replace the dependency on $\Delta$ by one on the depth, and improve the dependency on $S$.

\paragraph{Parameter glossary.} Denominators are written as explicit powers of two. We collect, for reference, all symbols used throughout the rest of \Cref{sec:succinct}:
\begin{itemize}
	\item $d$: variable description length, so that $n \coloneqq 2^d$ is the number of (implicit) Boolean variables.
	\item $\ell$: context length, the number of variables conditioned upon in each query.
	\item $B$: precision of the model $(P,Q)$, formally number of output gates of $P$ and $Q$.
	\item $\tau = \tau^\dagger / 2^{B}$: the consistency tolerance from $\MODELCONSIST$, given to precision $B$ with integer numerator $\tau^\dagger$.
	\item $\epsgap > 0$: the slack parameter from the gap variant of $\MODELCONSIST$.
	\item $\Qcal \coloneqq \bool^{\ell(d+1)+d}$: the query universe; each $q \in \Qcal$ is parsed as $q=(s_1, b_1, \dots, s_\ell, b_\ell, t)$.
	\item $\norm{Q}_1$: the number of (multiplicity-weighted) claims implicit in $(P,Q)$.\footnote{We may omit the subscript $1$ for readability}
	\item $m = \norm{Q}_1 + 1$: support size of the sparsified witness.
	\item $\PP_{P,Q}$: the implicit CPC of $(P,Q)$, as defined in \Cref{def:implicit-cpc}.
	\item $B'$: bit-length of the weights of the sparsified witnessing distribution.
	\item $\F$: the finite field over which all arithmetic and oracles are evaluated; specified in \Cref{sec:mip:protocol}.
	\item $\pi = (Z, A)$: the proof oracle, \ie, the PCP component of the IPCP. With $Z\colon \F^{\log m} \times \F^d \to \F$ and $A \colon \F^{\log m} \times \F^{\log B'} \to \F$.
\end{itemize}

\begin{remark}[Support size $m$ and the length of the proof $\pi$]\label{rem:variable-length}
	The honest oracle has length $m = \norm{Q}_1 + 1$, the length parameter of the codeword $\Code_\F(n, m, B')$, fixed the moment $\pi$ is given. The Verifier cannot compute $\norm{Q}_1 = \sum_q Q(q)$ on its own (it is a sum over all of $\Qcal$), but it can read $m$ off $\pi$'s type signature, recover $\norm{Q}_1 = m - 1$, and reject out of hand any oracle whose declared length exceeds the a-priori cap $m \le 2^B|\Qcal| + 1$. The honest length thus depends on the particular input, through $\norm{Q}_1$, not only on the input length. Instance-dependent proof length is common in the literature: \eg, the $\IPCP$ oracle of \cite{KalaiR08} is polynomial in the witness size, and the interactive oracle proofs of \cite{RonZewiR24} are made to approach the witness length. The instance-dependence may nonetheless be undesirable in our context, \eg, as it complicates preparing a uniform-length ``batch'' for feeding into a GPU; one regains uniformity by padding $\pi$ up to the maximal length $O(2^{\ell d})$ with further zero-weight dummy support vectors. We prefer the variable length for its granularity, and for the conceptual point it portrays: a more confident model, or one with more implicit claims (a larger $\norm{Q}_1$ in either case), requires a longer proof.

	We also note that $m = \norm{Q}_1 + 1$ tracks the model's total confidence rather than the true support size of the witness, which a resourceful Prover might make smaller still. A modest modification would enable this efficiency: Instead of identifying $\norm{Q}_1$ with $m-1$, the Verifier reads only the support size $m$ off $\pi$'s type signature and \emph{asks} the Prover for $\norm{Q}_1$ as a separate field element; it certifies this alleged value by the $\norm{Q}_1$-$\SumCheck$ (now seeded at the alleged value), uses the certified $\norm{Q}_1$ in the threshold \eqref{eq:full:threshold} in place of $m-1$, and checks over $\Z$ that $m \le 2(\norm{Q}_1 + 1)$ (the Carath\'eodory cap, doubled to absorb padding the support up to a power of two). The honest Prover could then encode the witness at its true support size (at most $\norm{Q}_1 + 1$, and often far smaller), and the proof would shrink accordingly. We keep the simpler protocol above, in which the oracle's length has $m=\norm{Q}_1 + 1$.
\end{remark}

\subsection{The full protocol and analysis}\label{sec:mip:protocol}

We now formalize the protocol sketched in \Cref{sec:mip:overview}.

\paragraph{Queroids.}
Recall that a query $q \in \Qcal$ is parsed into its components $q = (s_1, b_1, \dots, s_\ell, b_\ell, t)$, with $s_k \in \bool^d$, $b_k \in \bool$, and $t \in \bool^d$. The $\SumCheck$s below run over $\F$, so they manipulate the $\F$-valued analogue of a query:
\begin{equation}\label{eq:q-extension}
	\hat{q} = (\hat s_1, \hat b_1, \dots, \hat s_\ell, \hat{b}_\ell, \hat{t}) \quad\text{where}\quad \hat{s}_1,\dots, \hat s_\ell , \hat t \in \F^{d} \text{ and } \hat b_1,\dots, \hat b_\ell \in \F .
\end{equation}
We call such an $\hat q$ a \emph{queroid} (in the same spirit as the marginoid of \Cref{sec:codecheck}): it has the shape of a query, but its elements are in $\F$, so it may not describe any actual query. A genuine query asks for the mass of an event such as $\omega_{s_1} = b_1$; once the description $\hat s_1$ and the ``bit'' $\hat b_1$ are non-Boolean, no such event exists.

\paragraph{Gadgets.}
The protocol adapts three of the multilinear gadgets of \Cref{sec:codecheck}. The first is $\match^{Z}_{b,s}$ of \Cref{eq:satdef}, modified for covenience:
\begin{equation}\label{eq:match}
	\match^Z(\hat j, \hat q) = \match^Z(\hat j, \hat s_1, \hat b_1, \dots, \hat s_\ell, \hat{b_\ell}, \hat{t}) \coloneqq \prod_{k=1}^\ell \Bigl( \hat b_k\hat Z(\hat j, \hat s_k) + \left(1-\hat b_k\right) \left(1-\hat Z(\hat j, \hat s_k)\right) \Bigr).
\end{equation}
The only difference is that we promoted $b,s$ from subscript to an argument $q=(b,s,t)$, but the target variable description $t$ is ignored.

The second is $\exptwo$ of \Cref{eq:exp2def}, here in its $\log B'$-variate form $\exptwo(\hat \beta) = \prod_{k=1}^{\log B'}\bigl(1 + \hat \beta_k (2^{2^{k-1}} - 1)\bigr)$, which on a Boolean $\beta \in \bool^{\log B'}$ equals $2$ raised to the integer value of $\beta$ (as a field element). The Verifier evaluates $\exptwo(\hat\beta)$ in $O(\log B')$ field operations at any $\hat \beta \in \F^{\log B'}$.\footnote{The doubly-exponential constants $2^{2^{k-1}}$ for $k \le \log B'$ are precomputed once, each by $k-1$ repeated squarings, for $O((\log B')^2)$ field operations in total; each subsequent evaluation of $\exptwo$ then costs only the $O(\log B')$ multiplications of \Cref{eq:exp2def}.}

The third is the weight numerator $\weight^{A}(j) = \sum_{\beta \in \bool^{\log B'}} \exptwo(\beta) A(j,\beta)$ of \Cref{eq:Wdef}, here at precision $B'$: on the codeword $\pi$ it is the integer weight $\alpha^\dagger_j$ of the $j$-th support vector, so that $\weight^{A}(j)/2^{B'}$ is the probability mass placed on that vector.

\paragraph{Marginoids.}
The key quantities that connect the oracle $\pi=(Z,A)$ to the inconsistency are the two \emph{(denominator-cleared) marginoids}
\begin{align}\label{eq:masses}
	\Cmasshat & \defeq \sum_{j \in \bool^{\log m}} \weight^{A}(j)\match^Z(j,\hat q)
	\\
	\Qmasshat & \defeq \sum_{j \in \bool^{\log m}} \weight^{A}(j)\match^Z(j, \hat q)Z(j, \hat t),\nonumber
\end{align}
whose context $(\hat s, \hat b)$ and target $\hat t$ are the components of $\hat q$, parsed as in \Cref{eq:q-extension}.\footnote{In $\Qmass$ we spell the query event as $(s = b) \cap (t = 1)$; the $\mu[\cdot]$ notation of the preliminaries would abbreviate it as the context event intersected with the target, but spelling out that the target variable is required to be \emph{true} is clearer here.} They are \emph{marginoids} because $\hat q$ is over $\F$: on a Boolean query $q$ they are the genuine (denominator-cleared) marginals $2^{B'}\mu[s = b]$ and $2^{B'}\mu[(s = b) \cap (t = 1)]$. However, for a non-Boolean queroid $\hat q$ they are only marginal-shaped $\F$-vectors.

The quantity the protocol ultimately tests is the \emph{denominator-cleared inconsistency} $\Incdag$,
\begin{equation}\label{eq:Incdag}
	\Incdag \defeq \sum_{q \in \Qcal} Q(q)\bigl(2^B \Qmass - P(q) \Cmass\bigr)^2.
\end{equation}
The dagger follows the convention of $\tau^\dagger$ and $\alpha^\dagger_j$: $\Incdag$ is the integer obtained from the squared inconsistency by clearing all denominators (with the square superscript omitted for readability). As we soon prove, when $\pi$ is the $\Code$-codeword witnessing distribution $\mu$, $\Incdag$ has the closed form
\begin{equation}\label{eq:I-Dckt}
	\Incdag = 2^{2(B+B')} \cdot \norm{Q}_1 \cdot \Incof(P, Q; \mu)^2,
\end{equation}
so that checking $\Incdag \le \tau^22^{2(B+B')} \norm{Q}_1$ is equivalent to $\Incof(P, Q; \mu) \le \tau$.

\begin{algorithm}[h!]
	\caption{$\VerifyConsistency^{\pi}(P,Q;\tau,\epssound,\epsgap)$}\label{alg:full-protocol}
	\KwInput{Predictive model $(P,Q)$ with description length $d$, context length $\ell$ and precision $B$, given as arithmetic circuits of degree at most $\Delta$ (\Cref{fact:arith}); tolerance $\tau = \tau^\dagger / 2^{B} \ge 0$; soundness error $\epssound > 0$; and soundness gap $\epsgap > 0$.}
	\KwOracle{$\pi=(Z,A)$, with $Z \colon \F^{\log m} \times \F^{d} \to \F$ and $A \colon \F^{\log m} \times \F^{\log B'} \to \F$.}
	\BlankLine
	Read the dimensions $m, B' \in \N$ off the signature of the oracle $\pi$, and reject unless $m \le 2^B|\Qcal| + 1$, $2^{2(B+B')}(m-1) < |\F|$, and $m(2^{B'}-1) < |\F|$.\;
	Run $\VerifyEncoding_\F^{\pi}(\delta \gets \nicefrac{\epssound}{8\ell+12}, \eps \gets\epssound/4)$, and reject if it rejects.\;
	Expect $\vInc \in \F$ from the Prover, and reject unless, as integers,
	\begin{equation}\label{eq:full:threshold}
		\vInc \le (\tau^\dagger)^2 2^{2B'} \cdot (m - 1)
		\quad\text{over } \Z.
	\end{equation}\;
	Run
	\begin{align*}
		(\hvInc, \hat q)    & \defeq \SumCheck_\F(\val \gets \vInc,\ \rounds \gets \ell(d+1)+d,\ \degree \gets 3\Delta+2), \\
		(\hvNormQ, \hat q') & \defeq \SumCheck_\F(\val \gets m-1,\ \rounds \gets \ell(d+1)+d,\ \degree \gets \Delta),
	\end{align*}
	and parse $\hat q = (\hat s_1, \hat b_1, \dots, \hat s_\ell, \hat b_\ell, \hat t)$ for $\hat s_1,\dots, \hat s_\ell, \hat t \in \F^{d}$ and $\hat b_1,\dots,\hat b_\ell \in \F$.\;
	Expect $\vContext, \vQuery \in \F$ from the Prover, and run
	\begin{align*}
		 & \VerifyMarginoid_\F^{\pi}(\vContext;\ \hat s_1, \hat b_1, \dots, \hat s_\ell, \hat b_\ell)
		\quad\text{and}                                                                                         \\
		 & \VerifyMarginoid_\F^{\pi}(\vQuery;\ \hat s_1, \hat b_1, \dots, \hat s_\ell, \hat b_\ell, \hat t, 1).
	\end{align*}\;
	Compute the model values $\hat P(\hat q), \hat Q(\hat q), \hat Q(\hat q') \in \F$ directly, by evaluating the circuits $P, Q$ at $\hat q$ and $\hat q'$ over $\F$ (\Cref{fact:arith}).\;
	Accept if and only if $\VerifyEncoding$ and both $\VerifyMarginoid$ accepted, and
	\begin{equation}\label{eq:full:close}
		\hvInc \qeq \hat Q(\hat q) \cdot \left(2^B \vQuery - \hat P(\hat q) \cdot \vContext\right)^2
		\quad\text{and}\quad
		\hvNormQ \qeq \hat Q(\hat q')\quad \text{over}\ \F.
	\end{equation}
\end{algorithm}

\Cref{alg:full-protocol} is an $\IPCP$ in the sense of \Cref{def:ipcp}, with the PCP length counted in $\F$-symbols; \Cref{thm:explicit-iop} below bounds its four complexity measures.

\begin{theorem}[Explicit IPCP for $\MODELCONSIST$]\label{thm:explicit-iop}
	For every $\epssound, \epsgap > 0$, \Cref{alg:full-protocol} is an $\IPCP$ Verifier for $\MODELCONSIST$ with the following properties. Let $(P,Q)$ be a predictive model whose circuits have size $\leq S$, with arithmetizations $\hat P, \hat Q$ of degree $\leq \Delta$. Let $\tau = \tau^\dagger / 2^B$ be the consistency threshold for $\tau^\dagger \in \{0,\dots,2^{B}-1\}$. Let $\F$ be a prime field such that $|\F| \geq 2^{O(\ell d + B + \log(1/\epsgap))}\cdot \Delta/\epssound$.
	\begin{itemize}
		\item \emph{Completeness.} There exist a proof oracle and an honest Prover such that, if $(P,Q)$ is $(\tau - \epsgap)$-consistent, the Verifier accepts with probability $1$.
		\item \emph{Soundness.}  If $(P,Q)$ is not $\tau$-consistent, the Verifier accepts with probability at most $\epssound$, regardless of the proof oracle and the Prover.
		\item \emph{Verifier runtime.}  $\poly(\ell, d, B, \log(1/\epsgap), 1/\epssound) + O(|P| + |Q| + \Delta \ell d)$ field operations.
		\item \emph{PCP length and query complexity.}  The proof oracle consists of $|\F|^{O(\ell d + B)}$ symbols of $\F$, of which the Verifier reads $\poly(\ell, d, B, \log(1/\epsgap), \log(1/\epssound))/\epssound$.
		\item \emph{Communication.}  Prover messages altogether have total length $\poly(\ell, d, B,$ $\log(1/\epsgap), \log(1/\epssound))$ $+ O(\Delta \ell d)$.
	\end{itemize}
\end{theorem}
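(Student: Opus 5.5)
The plan is to prove completeness, soundness, and the complexity bounds separately. The soundness argument passes from the (possibly corrupted) oracle to the nearby codeword, which it can do because every oracle query is marginally uniform.

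\emph{Completeness.} Take the honest oracle $\pi=(Z,A)$ of \Cref{sec:mip:witness}. This is the $\Code_\F(n,m,B')$-encoding of the sparse witness $\mu_{z,\alpha}$ from Step~2, which satisfies $\D(\mu_{z,\alpha})\le\tau$ by \Cref{prop:near-linear-witnesses} and \Cref{claim:implicit-cpc}. The dimension checks pass by the choice of $|\F|$. $\VerifyEncoding$ accepts by \Cref{claim:ipp}. The honest Prover sends $\vInc\defeq\Incdag$.

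First I verify the closed form \eqref{eq:I-Dckt}. On a codeword and a Boolean $q$, $\weight^A(j)\match^Z(j,q)$ is $\alpha^\dagger_j$ times the indicator that $z_j\models x_q$, so the marginoids \eqref{eq:masses} equal $2^{B'}\mu[s=b]$ and $2^{B'}\mu[(s=b)\cap(t=1)]$. Substituting these into \eqref{eq:Incdag} and using $\norm{Q}_1=m-1$ gives \eqref{eq:I-Dckt}. Every quantity involved is a nonnegative integer below $2^{2(B+B')}(m-1)<|\F|$, so the threshold \eqref{eq:full:threshold} holds over $\Z$ exactly when $\D\le\tau$.

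Next, both \SumChecks run on the polynomials $\hat Q(\hat q)\bigl(2^B\vQuery(\hat q)-\hat P(\hat q)\vContext(\hat q)\bigr)^2$ and $\hat Q$. Here I view $\vContext(\hat q),\vQuery(\hat q)$ as the marginoid polynomials in $\hat q$. Their individual degree in each coordinate of $\hat q$ is at most $1$ (each $\hat s_k,\hat b_k,\hat t$ enters through a multilinear $Z$ or through $\hat b_k$), so the whole summand has individual degree at most $\Delta+2(\Delta+1)=3\Delta+2$. By \Cref{fact:sumcheck}, an honest Prover forces \eqref{eq:full:close}. It then sends the true marginoid values at $\hat q$, and $\VerifyMarginoid$ accepts by \Cref{claim:marginal}.

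\emph{Soundness.} Suppose $(P,Q)$ is not $\tau$-consistent. If $\pi$ is $\delta$-far from $\Code$, then $\VerifyEncoding$ accepts with probability at most $\epssound/4$. Otherwise fix the closest codeword $(\tilde Z,\tilde A)$, which encodes a genuine distribution $\tilde\mu$ with $\D(\tilde\mu)>\tau$. By \eqref{eq:I-Dckt} applied to $\tilde\mu$, the integer $\Incdag(\tilde\mu)$ exceeds the threshold, so any $\vInc$ that passes \eqref{eq:full:threshold} differs from it over $\Z$. Both values lie below $|\F|$ because of the field-size checks, so they also differ over $\F$.

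The first \SumCheck then produces a false claim $\hvInc\ne \tilde\varphi(\hat q)$ except with probability $(\ell(d+1)+d)(3\Delta+2)/|\F|$. From here, accepting \eqref{eq:full:close} requires one of two things. Either some marginoid value sent by the Prover is wrong with respect to $(\tilde Z,\tilde A)$, which \Cref{claim:marginal} catches except with probability $O((\ell\log m+\log B')/|\F|)$; or every value is correct, in which case \eqref{eq:full:close} fails outright.

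It remains to move from $\pi$ to $(\tilde Z,\tilde A)$. $\VerifyMarginoid$ makes $\le\ell+1$ queries to $Z$, each marginally uniform in its $j$-coordinate. The main obstacle is that these queries are not uniform in the $\hat s_k$ coordinates, so I have to argue that the point $(\hat j,\hat s_k)$ as a whole is marginally uniform. It is: $\hat s_k$ is a block of $\hat q$, which is marginally uniform by \Cref{fact:sumcheck}, and $\hat j$ comes from independent later randomness. Granting that, each query hits a corrupted point with probability at most $\delta$. With $\delta=\epssound/(8\ell+12)$ and about $2\ell+4$ queries across both calls, the total is at most $\epssound/4$. A union bound over all failure events, using the lower bound on $|\F|$, keeps the total acceptance probability below $\epssound$.

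\emph{Complexity.} These bounds follow by adding up the costs of \Cref{claim:ipp}, \Cref{fact:sumcheck}, \Cref{claim:marginal} and \Cref{fact:arith}, with $m\le 2^{O(\ell d+B)}$, $B'\in O(\ell d+B+\log(1/\epsgap))$, and $\delta,\eps=\Theta(\epssound/\ell)$. The oracle length is $|\F|^{\log m+\log n}+|\F|^{\log m+\log B'}=|\F|^{O(\ell d+B)}$. The $O(\Delta\ell d)$ terms come from the two \SumChecks with degree $O(\Delta)$ over $\ell(d+1)+d$ rounds.
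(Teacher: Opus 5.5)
Your completeness and complexity arguments are fine, and your far/close split with marginally uniform reads matches the paper. The gap is in soundness: it never uses the second \SumCheck (the one seeded at $m-1$ and closed by $\hvNormQ \qeq \hat Q(\hat q')$). When you say that $\Incdag(\tilde\mu)$ ``exceeds the threshold,'' you are comparing $2^{2(B+B')}\norm{Q}_1\D(\tilde\mu)^2$ with $(\tau^\dagger)^2 2^{2B'}(m-1)$, and that comparison needs $m-1=\norm{Q}_1$. This holds for the honest oracle, and you used it in completeness. In soundness, however, the Verifier reads $m$ and $B'$ off the signature of an adversarial oracle, so they are the cheating Prover's choice.

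Concretely, the Prover can take a genuine codeword of some $\tilde\mu$ and pad it with zero-weight rows, up to the cap $m\le 2^B|\Qcal|+1$. As soon as $(m-1)\tau^2\ge\norm{Q}_1\D(\tilde\mu)^2$, the following all hold:
\begin{itemize}
\item the true value $\vInc=\Incdag(\tilde\mu)$ passes \eqref{eq:full:threshold};
\item $\VerifyEncoding$ accepts, since the oracle is a genuine codeword;
\item the $\Incdag$-\SumCheck and both $\VerifyMarginoid$ calls pass under an honest strategy.
\end{itemize}
So every check your argument analyzes accepts with probability $1$, even for a model that is not $\tau$-consistent (for a barely inconsistent model, doubling $m$ already suffices). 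Your no-wraparound step has the same dependence. The dimension check bounds only $2^{2(B+B')}(m-1)$, and $B'$ is also adversarial. So concluding $\Incdag(\tilde\mu)\le 2^{2(B+B')}\norm{Q}_1<|\F|$ requires $\norm{Q}_1\le m-1$.

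The missing step is the paper's extra case. Suppose $m-1\ne\norm{Q}_1$ over $\Z$. Both are at most $2^B|\Qcal|<|\F|$, so they also differ over $\F$. By \Cref{fact:sumcheck}, the check $\hvNormQ\qeq\hat Q(\hat q')$ then passes with probability at most $\Delta(\ell(d+1)+d)/|\F|$; no oracle reads are involved, since the Verifier evaluates $\hat Q$ itself. Once $m-1=\norm{Q}_1$ is pinned, the dimension check gives $\Incdag(\tilde\mu)<|\F|$, and \eqref{eq:I-Dckt} gives $\Incdag(\tilde\mu)>(\tau^\dagger)^2 2^{2B'}(m-1)$. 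From there the rest of your argument goes through.

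A minor point on the count: outside $\VerifyEncoding$ the oracle is read $2\ell+3$ times. These are $\ell$ and $\ell+1$ reads of $Z$, plus one read of $A$ in each $\VerifyMarginoid$. That is what makes $(2\ell+3)\delta=\epssound/4$ exact for $\delta=\epssound/(8\ell+12)$. With your count of $2\ell+4$, you would have to spend some of the remaining slack.
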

\begin{corollary}[Boolean IPCP]\label{cor:boolean-ipcp}
	\Cref{thm:explicit-iop} holds with the proof oracle over the Boolean alphabet, each $\F$-symbol written in binary. The bounds remain asymptotically the same, with a PCP length of
	\begin{equation*}
		2^{\poly(\ell,d,B)} \cdot \left( \frac{\Delta}{\epsgap \cdot \epssound} \right)^{O(\ell d + B)}
	\end{equation*}
\end{corollary}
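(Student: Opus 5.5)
The corollary is a binary re-encoding of the oracle of \Cref{thm:explicit-iop}, followed by parameter bookkeeping. No new protocol analysis is needed.

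\emph{Step 1: the Boolean oracle.} Fix an injective map $\F\to\bool^{\lceil\log|\F|\rceil}$ sending each field element to the binary expansion of its integer representative. The honest Boolean oracle $\pi_{\bool}$ is the $\F$-oracle $\pi=(Z,A)$ of \Cref{thm:explicit-iop} with each symbol replaced by its $\lceil\log|\F|\rceil$ bits.

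\emph{Step 2: simulation.} The Boolean-oracle Verifier runs \Cref{alg:full-protocol} unchanged. Whenever the original Verifier queries a location of $\pi$, it instead reads the corresponding $\lceil\log|\F|\rceil$ bits. It rejects if they encode an integer $\ge|\F|$; otherwise it decodes them. Completeness is inherited directly, since the honest $\pi_{\bool}$ decodes exactly to the honest $\pi$.

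For soundness, let $\pi_{\bool}$ be any Boolean string. Define an $\F$-oracle $\pi'$ by decoding each block, assigning $0$ to any block whose value is out of range. The simulated Verifier accepts only on runs in which the original Verifier, with oracle $\pi'$ and the same Prover, would accept, because out-of-range blocks cause rejection. Hence its acceptance probability is at most that of the original Verifier on $\pi'$, which is at most $\epssound$ by \Cref{thm:explicit-iop}.

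\emph{Step 3: accounting.}
\begin{itemize}
\item Query complexity is multiplied by $\log|\F|$.
\item Verifier runtime gains $O(\log|\F|)$ bit operations per query, which is dominated by field-operation costs already counted.
\end{itemize}
Since $\log|\F|=O(\ell d+B+\log(1/\epsgap)+\log(\Delta/\epssound))$ is polynomial in the parameters, both bounds remain asymptotically the same as stated.

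\emph{Step 4: PCP length.} This is the only step needing care. The length is
\[
|\F|^{O(\ell d+B)}\cdot\log|\F| .
\]
Substitute $|\F|=2^{O(\ell d+B+\log(1/\epsgap))}\cdot\Delta/\epssound$. The first factor becomes
\[
2^{O((\ell d+B)^2)}\cdot\Bigl(\tfrac{1}{\epsgap}\Bigr)^{O(\ell d+B)}\cdot\Bigl(\tfrac{\Delta}{\epssound}\Bigr)^{O(\ell d+B)} .
\]
The first piece is $2^{\poly(\ell,d,B)}$, and the other two combine into $\bigl(\Delta/(\epsgap\epssound)\bigr)^{O(\ell d+B)}$. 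The extra $\log|\F|$ factor is itself of the form $2^{O(\log(\ell dB))}$ times a polynomial in $\log(\Delta/(\epsgap\epssound))$, so it is absorbed into these factors. This gives the stated bound.
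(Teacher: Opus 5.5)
The proof the paper places under this corollary is actually the argument for \Cref{fact:arith}: gate-by-gate evaluation of $\hat C$ in $O(S)$ field operations, agreement with $C$ on the cube, and the degree bound $\Delta\le 2^{D}\le 2^{S}$. It never addresses the binary re-encoding, the simulation, or the length count. So there is no competing route to compare with: your encode-and-simulate argument is the argument the corollary actually needs, and it is sound.

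Completeness follows from exact decoding. Your soundness coupling works because the decoded oracle $\pi'$ is a legitimate oracle of the prescribed type, and \Cref{thm:explicit-iop} bounds acceptance by $\epssound$ regardless of the oracle. The length computation is also right, with two small notes:
\begin{itemize}
\item The theorem only gives a lower bound on $|\F|$. You should fix a prime within a factor of two of $2^{O(\ell d+B+\log(1/\epsgap))}\Delta/\epssound$ (Bertrand's postulate).
\item The extra $\lceil\log|\F|\rceil$ factor is most simply absorbed via $\log|\F|\le|\F|$, which adds one to the exponent.
\end{itemize}

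One claim in your Step~3 overreaches. You correctly write $\log|\F| = O(\ell d+B+\log(1/\epsgap)+\log(\Delta/\epssound))$, but then conclude that the query bound is unchanged ``as stated.'' The query bound of \Cref{thm:explicit-iop} is $\poly(\ell,d,B,\log(1/\epsgap),\log(1/\epssound))/\epssound$, with no $\Delta$ in it. The field-size hypothesis forces $\log|\F|\ge\log\Delta$, so every symbol read now costs at least $\log\Delta$ bit-queries. Since $\Delta$ can be exponential in the circuit depth (up to $2^{S}$, by the very bound in the misplaced argument), this factor is not polynomial in $\ell,d,B$. What you have actually shown is a bit-query complexity of $\poly(\ell,d,B,\log(1/\epsgap),\log(1/\epssound))\cdot(1+\log\Delta)/\epssound$.

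That is ``asymptotically the same'' only if $\poly(\log\Delta)$ factors are tolerated. The paper itself glosses over this when it writes $\log|\F|=O(\ell d+B+\log(1/\epsgap)+\log(1/\epssound))$ in the proof of \Cref{thm:explicit-iop}. For the delegated variant (\Cref{thm:explicit-iop-pp}) the extra factor shrinks to $O(\log D+\log\log S)$. Your runtime claim is unaffected, since decoding a block costs $O(1)$ field operations. Communication is likewise unchanged, because the Prover's messages do not change.
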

\begin{proof}
	Evaluating each gate over $\F$ in topological order takes $O(S)$ field operations and returns $\hat C(\hat x)$, where $\hat C$ is the polynomial in $X_1,\dots,X_n$ that the output gate computes. The arithmetic gates agree with the Boolean connectives on $\bool$ ($\neg a = 1-a$, $a \wedge b = ab$, $a \vee b = a+b-ab$), so $\hat C$ extends $C$ off the cube; the hypothesis $M < |\F|$ guarantees the integer value is read back without reduction. For the degree, induct over the gates: an input and a constant compute polynomials of degree $1$ and $0$, an addition gate one of degree at most the maximum of its inputs', and a multiplication gate one of degree at most their sum---which is the syntactic degree, so $\deg \hat C \le \Delta$. Only multiplication gates raise the degree, and each at most doubles it, so along a path of length $D$ it at most doubles $D$ times: $\Delta \le 2^{D}$; and $D \le S$ gives $\Delta \le 2^S$.
\end{proof}

\begin{proof}[Proof of \Cref{thm:explicit-iop}]
	At a high level, the encoding test $\VerifyEncoding$ pins $\pi$ to a genuine codeword. The two $\VerifyMarginoid$ calls then return true marginoids of the encoded distribution, the three direct circuit evaluations (\Cref{fact:arith}) supply the true model values, and the $\SumCheck$s certify that the Prover's $\vInc$ is the actual inconsistency and that the total mass $\norm{Q}_1$ equals its known value $m-1$. At that point the integer threshold \eqref{eq:full:threshold} turns ``$\Incdag$ exceeds the bound'' into a deterministic rejection.

	Both directions of the proof rest on \Cref{eq:I-Dckt}, which we now prove. For any codeword $\widetilde\pi = (\widetilde Z,\widetilde A) \in \Code_\F(n,m,B')$ encoding a distribution $\mu$, the two marginoids of \Cref{eq:masses}, computed for $\widetilde\pi$, are, on a Boolean query $q=(s_1,b_1,\dots,s_\ell,b_\ell,t)$,
	\begin{align*}
		\Cmass & = \sum_{j} \weight^{A}(j)\match^Z(j,q) = 2^{B'}\mu[s = b],
		\\
		\Qmass & = \sum_{j} \weight^{A}(j)\match^Z(j,q)Z(j,t) = 2^{B'}\mu[(s = b) \cap (t = 1)].
	\end{align*}
	Indeed, $\weight^{A}(j) = \alpha^\dagger_j$ is the integer weight (\Cref{def:code}), $\match^Z(j,q) = \mathds{1}[\forall k\ z_{j,s_k} = b_k]$ indicates the context, and $Z(j,t) = z_{j,t}$. Substituting into \Cref{eq:Incdag} and pulling out $2^{B+B'}$,
	\begin{align*}
		2^B\Qmass - P(q)\Cmass        & = 2^{B+B'}\Bigl(\mu[(s = b) \cap (t = 1)] - \frac{P(q)}{2^B}\mu[s = b]\Bigr),
		\\
		\text{therefore}\quad \Incdag & = 2^{2(B+B')}\norm{Q}_1\D(\mu)^2,
	\end{align*}
	which is \Cref{eq:I-Dckt}. Therefore, the check in \Cref{eq:full:threshold} holds if and only if $\D(\mu) \le \tau$. %

	Before we turn to completeness and soundness, we argue that the field $\F$ is large enough to avoid wraparounds. By \Cref{claim:sparse-witness}, whenever $(P,Q)$ is $(\tau-\epsgap)$-consistent there is a witness $\mu$ with $\D(\mu) \le \tau$, supported on $z_1,\dots,z_{m}$ with integer weights summing to $2^{B'}$ where $B' \in O(\log m + \log(1/\epsgap))$. The honest proof oracle is exactly the codeword $\pi = \Code_\F(n,m,B')$ of this $\mu$ (\Cref{def:code}). (If a single point of $\mu$ carries weight $2^{B'}$, which \Cref{def:code} cannot store in $B'$ bits, the Prover lists that point twice in $z$ with the mass split evenly; the support is a multiset, so this encodes the same distribution with every stored weight in $\{0,\dots,2^{B'}-1\}$.) Write $M \coloneqq  2^{B + \ell(d+1)+d}$; as observed in \Cref{rem:variable-length}, $\norm{Q}_1 \leq M$ for any confidence circuit $Q$. By assumption,
	\begin{equation}\label{eq:full:fieldsize}
		|\F| > 2^{2(B+B')} M \ge 2^{2(B+B')} \norm{Q}_1.
	\end{equation}
	Since $\D(\mu) \le 1$ for every distribution $\mu$, the true inconsistency of any encoded $\mu$ satisfies
	\begin{equation*}
		\Incdag = 2^{2(B+B')}\norm{Q}_1\D(\mu)^2 \le 2^{2(B+B')} \norm{Q}_1 \le 2^{2(B+B')} M < |\F|,
	\end{equation*}
	and likewise $\norm{Q}_1 \le M < |\F|$. Therefore, whenever $\vInc$ equals $\Incdag$ over $\F$ and the $\norm{Q}_1$-$\SumCheck$ has pinned the mass to $m-1$, their integer lifts in \eqref{eq:full:threshold} are exactly $\Incdag$ and $\norm{Q}_1 = m-1$, no wraparounds. For an adversarial oracle the same holds with the \emph{declared} dimensions: the dimension check of \Cref{alg:full-protocol} enforces $2^{2(B+B')}(m-1) < |\F|$ and $m(2^{B'}-1) < |\F|$ for the $(m, B')$ read off $\pi$'s signature, so the lifted comparison in \eqref{eq:full:threshold} is wraparound-free for any oracle the Prover supplies, once the $\norm{Q}_1$-$\SumCheck$ has pinned the mass to $m-1$ (as in the case analysis below); the honest oracle passes the check by \Cref{eq:full:fieldsize}.

	Throughout the analysis below, each cited bound (\Cref{fact:sumcheck,claim:ipp,claim:marginal}) holds for every fixed claimed value and every Prover strategy, over the sub-protocol's own fresh coins; conditioning on the transcript prefix that fixes that value and averaging therefore preserves each bound, so the failure events may be union-bounded even though the Prover chooses $\vContext, \vQuery$ only after $\hat q$ is public.

	\pparagraph{Completeness.} Suppose $(P,Q)$ is $(\tau-\epsgap)$-consistent, and let $\mu$ be the aforementioned sparse distribution so $\D(\mu) \le \tau$. The proof oracle is the codeword $\pi$ (whose dimensions pass the check of \Cref{alg:full-protocol}, by $m \le M + 1$ and \Cref{eq:full:fieldsize}), and the honest Prover sends $\vInc = 2^{2(B+B')} \norm{Q}_1 \D(\mu)^2$ and the true marginoids $\vContext = \Cmasshat$ and $\vQuery = \Qmasshat$ (the total mass $\norm{Q}_1 = m-1$ is determined by $\pi$'s length, not sent), and plays the honest strategies of \Cref{claim:ipp,claim:marginal}. The threshold \eqref{eq:full:threshold} then holds over $\Z$: the lifts are exactly $\Incdag$ and $\norm{Q}_1$ (by \Cref{eq:full:fieldsize}, as recorded above), and $\Incdag = 2^{2B} 2^{2B'} \norm{Q}_1 \D(\mu)^2 \le (\tau^\dagger)^2 2^{2B'} \norm{Q}_1$ since $\D(\mu) \le \tau = \tau^\dagger/2^B$. $\VerifyEncoding$ accepts because $\pi$ is a genuine codeword; each $\VerifyMarginoid$ accepts with probability $1$ because $\vContext,\vQuery$ satisfy \Cref{eq:marginal:claim}; and the model values $\hat P(\hat q), \hat Q(\hat q), \hat Q(\hat q')$ are the Verifier's own evaluations of the circuits (\Cref{fact:arith}). Finally, both equalities of \eqref{eq:full:close} hold by construction: the $\Incdag$-$\SumCheck$ runs on the polynomial $q \mapsto \hat Q(q)\bigl(2^B\Qmass - \hat P(q)\Cmass\bigr)^2$, so its output is that polynomial at $\hat q$,
	\begin{equation*}
		\hvInc = \hat Q(\hat q)\bigl(2^B\Qmasshat - \hat P(\hat q)\Cmasshat\bigr)^2 = \hat Q(\hat q)\bigl(2^B\vQuery - \hat P(\hat q)\vContext\bigr)^2,
	\end{equation*}
	likewise $\hvNormQ = \hat Q(\hat q')$, since the $\norm{Q}_1$-$\SumCheck$ runs on $\hat Q$, which agrees with $Q$ on $\Qcal$ (\Cref{fact:arith}). Therefore, the Verifier accepts with probability $1$.

	\pparagraph{Soundness.} Now suppose $(P,Q)$ is not $\tau$-consistent. Fix any proof oracle $\pi$ and any Prover; we bound the acceptance probability by cases. Let $\delta = \nicefrac{\epssound}{8\ell+12}$.

	If $\pi$ is $\delta$-far from all $\Code$ codewords, then by \Cref{claim:ipp} (with soundness parameter $\epssound/4$) $\VerifyEncoding$ accepts with probability at most $\epssound/4$.

	Otherwise $\pi$ is $\delta$-close to a codeword $\widetilde\pi = (\widetilde Z,\widetilde A) \in \Code$ encoding a genuine distribution $\widetilde\mu$. The rest of the protocol reads the oracle at $2\ell+3$ points, each uniform over its oracle's domain: the $A$-reads by \Cref{claim:marginal}, and the $Z$-reads $(\hat j, \hat s_k), (\hat j, \hat t)$ because $\hat s_k, \hat t$ are coins of the $\Incdag$-$\SumCheck$, independent of the fresh $\hat j$. By a union bound, all of these reads agree with $\widetilde\pi$ except with probability $(2\ell+3)\delta \le \epssound/4$. Condition on all queries landing in the agreement region; every subprotocol now behaves as if run on $\widetilde\pi$. Now,

	\begin{itemize}
		\item If $\vContext \neq 2^{B'}\widetilde\mu[s = b]$, the corresponding $\VerifyMarginoid$ accepts with probability at most $\frac{(\ell+1)\log m + 2\log B'}{|\F|}$ by \Cref{claim:marginal}. Similarly for $\vQuery = 2^{B'}\widetilde\mu[(s = b) \cap (t = 1)]$, but with $\ell+2$ instead of $\ell+1$ in the numerator.
		\item If $\vInc \neq \Incdag$, the corresponding \SumCheck in \Cref{eq:full:close} passes with probability at most $(3\Delta+2)\log|\Qcal|/|\F|$. Here we use the assumption that $\hat P$ and $\hat Q$ have degree at most $\Delta$ and that the marginoids are multilinear in $q$.
		\item If $m \neq \norm{Q}_1 + 1$, the corresponding \SumCheck in \Cref{eq:full:close} passes with probability at most $\Delta\log|\Qcal|/|\F|$. Here we similarly use the degree bound on $\hat P$ and $\hat Q$.
	\end{itemize}

	If instead all the above equalities hold, then by the assumption on $|\F|$ and \Cref{eq:I-Dckt}, $\Incdag > (\tau^\dagger)^2 2^{2B'} \norm{Q}_1$ over $\Z$, so the threshold check fails and the Verifier rejects. %

	Collecting these bounds, the Verifier accepts with probability at most
	\begin{equation*}
		\Pr[\acc] \le \frac{\epssound}{4} + \frac{\epssound}{4} + \frac{(2\ell+3)\log m + 4\log B' + (4\Delta+2)\log|\Qcal|}{|\F|}.
	\end{equation*}
	The preceding bound gives $\Pr[\acc] \le \epssound$ as soon as $|\F|$ exceeds twice its numerator divided by $\epssound$. It remains to confirm that this, and the field hypotheses \eqref{eq:code:delta} of the encoding test (\Cref{claim:ipp}, at soundness parameter $\epssound/4$), hold for a field of the theorem's size---on top of the wraparound bound \eqref{eq:full:fieldsize} assumed above. The encoding test's no-wrap condition $m(2^{B'}-1) < |\F|$ is itself implied by \eqref{eq:full:fieldsize}, and every remaining condition is polynomial in the model parameters, $\Delta$, and $1/\epssound$. Each is monotone in $m$, so the Verifier certifies them at the a-priori cap $m \le M+1$ (\Cref{rem:variable-length}), never needing the true support size. As \eqref{eq:full:fieldsize} is $2^{O(\ell d + B)}\poly(1/\epsgap)$ and dominates, the theorem's field size $|\F| \ge 2^{O(\ell d + B)}\poly(1/\epsgap)\cdot\Delta/\epssound$ satisfies them all. %

	\pparagraph{The four complexity measures.} The \emph{PCP length} is the size of the oracle $\pi = (Z,A)$: it consists of $|\F|^{\log m+d} + |\F|^{\log m+\log B'}$ symbols (\Cref{def:code}), exponential in $\ell, d, B, \log(1/\epsgap)$, and $\log(1/\epssound)$. The other three measures are additive over the sub-protocols; the table below collects their contributions, writing $b \coloneqq \log m + d + \log B' = O(\ell d + B)$ and recalling $\delta = \epssound/(4(2\ell+3))$, $\log|\Qcal| = \ell(d+1)+d$, and $\log|\F| = O(\ell d + B + \log(1/\epsgap) + \log(1/\epssound))$ bits per element.
	\[
		\begin{array}{l ccc}
			                                            & \text{oracle queries}     & \text{communication } (\F\text{-elts}) & \text{Verifier ops}       \\
			\hline
			\VerifyEncoding                             & b\log(1/\epssound)/\delta & b\log(1/\epssound)                     & b\log(1/\epssound)/\delta \\
			\VerifyMarginoid\ (\times 2)                & \ell                      & \ell\log m + \log B'                   & \ell\log m + \log B'      \\
			\text{Direct }\SumCheck\text{s}\ (\times 2) & 0                         & \Delta\log|\Qcal|                      & \Delta\log|\Qcal|         \\
			\text{model evaluations}\ (\times 3)        & 0                         & 0                                      & |P|+|Q|                   \\
		\end{array}
	\]
	all entries $O(\cdot)$. Only $\VerifyEncoding$ and the two $\VerifyMarginoid$ read the oracle, $\VerifyEncoding$ dominating; the query total is $O\bigl(\ell(\ell d + B + \log(1/\epsgap))\log(1/\epssound)/\epssound\bigr)$. Summing the communication column and converting at $\log|\F|$ bits per element gives $\Delta\cdot\poly(\ell, d, B, \log(1/\epsgap), \log(1/\epssound))$, its only circuit-dependent term being the direct $\SumCheck$s. And at $\poly(\log|\F|)$ per field operation, the Verifier runs in $O(|P|+|Q|)\poly(\log|\F|) + \Delta\cdot\poly(\ell, d, B, \log(1/\epsgap), 1/\epssound)$, its only circuit-dependent term being the three circuit evaluations.
\end{proof}

\begin{remark}[On the $1/\epssound$ in the query complexity]\label{rem:query-eps}
	The $1/\epssound$ in the query complexity enters through the proximity parameter $\delta = \epssound/(4(2\ell+3))$ of the encoding test (\Cref{claim:ipp}); we wonder whether it can be brought down to $\poly\log(1/\epssound)$. For example, one could encode the witness not by its multilinear extension but by an extension of higher individual degree (though perhaps lower total degree), admitting a more query-efficient proximity test; the higher individual degree would propagate into the $\SumCheck$s and hence into the field size. We have not attempted this direction beyond this sketch.
\end{remark}

\begin{remark}[A single evaluation point]\label{rem:single-point}
	The two direct $\SumCheck$s of \Cref{alg:full-protocol} range over the same universe $\Qcal$, and their soundness does not require independent randomness: the union bound in the proof of \Cref{thm:explicit-iop} sums the two error events whatever their joint distribution. The Verifier may therefore run both $\SumCheck$s on a single stream of coins, making $\hat q' = \hat q$: the final checks \eqref{eq:full:close} then evaluate the model at one queroid (the value $\hat Q(\hat q)$ serves both equalities), and the three $\VerifyMLE$ calls of \Cref{thm:explicit-iop-pp} become two. The one requirement is that each round's two messages be committed before that round's coin is revealed. Running the $\SumCheck$s in lockstep (both round-$k$ polynomials, then the coin) satisfies it with a single Prover, and the protocol remains an $\IPCP$, so \Cref{sec:delegation} and the compilers of \Cref{rem:succinct-argument} still apply. Running them sequentially with one Prover does not: having seen the first transcript, the Prover knows every coin of the second $\SumCheck$ in advance, and can choose its final round's coefficients so that the residual claim lands exactly on $\hat Q(\hat q)$, passing on a false claim with probability $1 - O(1/|\F|)$. Sequential order is sound again when the $\norm{Q}_1$-$\SumCheck$ is played by a prover isolated from the first transcript: formally a second prover, and in the self-proving instantiation one further context reset (\Cref{rem:spmip}). %
\end{remark}

\begin{remark}[The gap and the exact-witness instantiation]\label{rem:exact-witness}
	Running \Cref{alg:full-protocol} with the exact rational witness of \Cref{lem:rationalopt} (no gap) preserves completeness and soundness, but the threshold $\tau^22^{2(B+B')} \norm{Q}_1$ has bit-length $2^{O(\ell d + B)}$, so the field required by the proof of \Cref{thm:explicit-iop} becomes single-exponentially large. The IPCP statement above does not apply to this instantiation. This does not contradict \Cref{cor:mip}: that corollary obtains its polynomial-time Verifier by Cook--Levin arithmetization of the $\NP$ verifier underlying \Cref{thm:inNP}, foregoing Carath\'eodory structure entirely. \Cref{thm:explicit-iop} should be read as an explicit but (slightly) weaker counterpart of \Cref{cor:mip}.

	Nor can the support-only design of \Cref{thm:inNP} be imported here to dispense with the gap, though in the explicit setting it costs no completeness slack at all. Its Verifier cannot even write down the linear system it would solve, the side of which is $m = \norm{Q}_1 + 1 = 2^{\Theta(\ell d)}$, and delegating the solve to the Prover does not help: that design relocates the weights' \emph{computation}, not their \emph{precision}. By \Cref{lem:rationalopt} the exact weights are ratios of determinants of $\Theta(m(B+\log m))$ bits, and whatever carries them, oracle or messages, forces $\log|\F|$ of that order. It would also owe a \emph{local} check of positivity for exponentially many weights, which $\Code_\F$ (\Cref{def:code}) gets for free from its unsigned representation; and delegating a generic computation is the move \Cref{cor:mip} already makes in full, which \Cref{rem:implementation} is our reason for not making.
\end{remark}

\subsection{Improved efficiency with delegation of computation}\label{sec:delegation}
The dependence on the degree $\Delta$ in \Cref{thm:explicit-iop}, exponential in the circuit depth in the worst case, stems from a single design choice: the Verifier reads each model value $\hat P(\hat q), \hat Q(\hat q), \hat Q(\hat q')$ by evaluating the circuit as the high-degree polynomial it computes. The standard machinery of \emph{delegation of computation} removes it. We replace each direct evaluation by a call to $\VerifyMLE$ (\Cref{fact:thaler}), now letting $P_\MLE, Q_\MLE$ denote the \emph{multilinear} extensions of $P, Q$: rather than evaluate the circuit itself, the Verifier receives an alleged value from the Prover and has $\VerifyMLE$ certify it through interaction. This is the same move, from a quantity the Verifier computes to one a sub-protocol certifies, as the passage from $\VerifyMarginoid$ to $\VerifyMarginal$ in \Cref{sec:codecheck}.

The fact underlying $\VerifyMLE$ is a doubly-efficient interactive proof for circuit evaluation: it delegates the evaluation to the Prover, at a cost scaling with the circuit's depth $D$ rather than its degree.
\begin{fact}[\cite{Thaler13}]\label{fact:thaler}
	Let $C$ be a layered arithmetic circuit over a field $\F$ with $n$ inputs, depth $D$, and size $S$, computing a function $\bool^n \to \{0,\dots,M-1\}$ for $M < |\F|$, and let $C_\MLE \colon \F^n \to \F$ be its multilinear extension.\footnote{We emphasize that $C_\MLE$ is the multilinear extension of the \emph{function} computed by $C$, not the polynomial $\hat C$ of \Cref{fact:arith}; the two agree only on $\bool^n$. The primitive below, with these exact costs, is obtained by running the protocol of \cite{Thaler13} on the wide circuit made of $2^n$ parallel copies of $C$, the $w$th hardwired to input $w$: the multilinear extension of its output layer is $C_\MLE$, and its width enters the costs only through its logarithm, $n + \log S$.} There is a Verifier $\VerifyMLE_\F(C, \hat x, y)$ such that, for any $\hat x \in \F^n$ and $\hat y \in \F$,
	\begin{itemize}
		\item \emph{Completeness:} there is an honest Prover for which, if $\hat y = C_\MLE(\hat x)$, the Verifier accepts with probability $1$.
		\item \emph{Soundness:} if $\hat y \ne C_\MLE(\hat x)$, the Verifier accepts with probability $O(D(n + \log S)/|\F|)$, regardless of the Prover.
		\item \emph{Complexity:} the communication is $O(D(n + \log S))$ field elements and the Verifier runs in $O(S + D(n + \log S))$ field operations. The $S$ term shrinks to $\polylog S$ if $C$ is logspace-uniform.
	\end{itemize}
\end{fact}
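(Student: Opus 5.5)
We would prove \Cref{fact:thaler} by running the GKR protocol \cite{GoldwasserKR15}, in Thaler's refinement for data-parallel circuits \cite{Thaler13}, on the wide circuit $W$ described in the footnote. $W$ consists of $2^n$ copies $C^{(w)}$ of $C$, one for each $w \in \bool^n$, where the $w$th copy has its inputs hardwired to the constants $w$. Index the gates of layer $i$ of $W$ by pairs $(w,g) \in \bool^n \times \bool^{\log S_i}$, where $S_i \le S$ is the width of layer $i$ of $C$. The output layer of $W$ is then the table $w \mapsto C(w)$, and its multilinear extension in the copy index is exactly $C_\MLE$. The claim $\hat y = C_\MLE(\hat x)$ is therefore a claim about the multilinear extension $\widetilde V_0$ of $W$'s output layer at the point $\hat x$. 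The standing hypothesis $M < |\F|$ guarantees that the values $C(w)$ embed into $\F$ without wraparound, so this identification is faithful.

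The proof then proceeds layer by layer. Let $\widetilde V_i(\hat w,\hat g)$ denote the multilinear extension of the gate values at layer $i$. Because all copies share the same wiring, the standard GKR identity factors through an equality gadget on the copy index:
\begin{equation*}
\widetilde V_i(\hat w,\hat g) = \sum_{w,g_1,g_2} \eq(\hat w,w)\Bigl[\widetilde{\mathrm{add}}_i(\hat g,g_1,g_2)\bigl(\widetilde V_{i+1}(w,g_1)+\widetilde V_{i+1}(w,g_2)\bigr) + \widetilde{\mathrm{mult}}_i(\hat g,g_1,g_2)\,\widetilde V_{i+1}(w,g_1)\widetilde V_{i+1}(w,g_2)\Bigr].
\end{equation*}
Here $\widetilde{\mathrm{add}}_i$ and $\widetilde{\mathrm{mult}}_i$ are the wiring predicates of $C$ itself, not of $W$. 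Crucially, the copy index $w$ appears only once in this sum, rather than once for each of $g_1,g_2$. This is what makes the wide circuit cost only its logarithm.

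We apply \Cref{fact:sumcheck} to this sum. It has $n + 2\log S_i$ rounds with individual degree $O(1)$. The Verifier ends holding $\widetilde V_{i+1}$ at two points $(\hat w',\hat g_1)$ and $(\hat w',\hat g_2)$ that share the copy coordinate $\hat w'$. The Prover sends the restriction of $\widetilde V_{i+1}(\hat w',\cdot)$ to the line through $\hat g_1$ and $\hat g_2$; this is a polynomial of degree $O(\log S)$. The Verifier picks a random point on that line, reducing the two claims to a single claim for layer $i+1$.

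At the input layer the gate values are $(w,i) \mapsto w_i$, whose multilinear extension is $\sum_i \eq(\hat i, i)\,\hat w_i$. The Verifier evaluates this itself in $O(n + \log S)$ field operations. It also evaluates each $\widetilde{\mathrm{add}}_i$ and $\widetilde{\mathrm{mult}}_i$ at one point per layer. This costs $O(S)$ in total by the standard evaluation of wiring predicates, or $\polylog S$ when $C$ is logspace-uniform.

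The costs then follow. Each layer exchanges $O(n+\log S)$ field elements and errs with probability $O((n+\log S)/|\F|)$, by Schwartz--Zippel for the sum-check rounds and for the line restriction. A union bound over the $D$ layers yields the stated soundness error $O(D(n+\log S)/|\F|)$ and communication $O(D(n+\log S))$. Completeness is inherited from the honest sum-check Prover.

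The main obstacle is the accounting in the wide circuit. A naive application of GKR to $W$ would make the predicate-evaluation cost scale with $W$'s size $2^n S$, and would leave $n$ appearing twice per layer. We would therefore establish the factorization above explicitly: the wiring predicates of $W$ equal $\eq$ on the three copy indices times the wiring predicates of $C$. We would then verify, following \cite{Thaler13}, that summing out the copy index once keeps both the round count and the Verifier's work per layer at $O(n+\log S)$.
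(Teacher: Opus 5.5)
Your proposal is correct and follows the route the paper indicates in its footnote: run Thaler's data-parallel GKR protocol on the wide circuit of $2^n$ copies of $C$, the $w$th hardwired to input $w$, so that the copy index is summed only once per layer and the output-layer multilinear extension is exactly $C_\MLE$. Your factorization through $\eq(\hat w, w)$, the closed-form input-layer extension $\sum_k \eq(\hat g,k)\,\hat w_k$, and the per-layer soundness and cost accounting supply the details the paper leaves to the citation. The only slip is cosmetic: $g_1,g_2$ range over layer $i+1$, so the number of sum-check rounds is $n+2\log S_{i+1}$, not $n+2\log S_i$.
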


\begin{theorem}[Explicit IPCP for $\MODELCONSIST$, with model delegation]\label{thm:explicit-iop-pp}
	Let $(P,Q)$ be a predictive model whose circuits are layered, each with depth $\leq D$ and size $\leq S$. Run \Cref{alg:full-protocol} with the direct circuit evaluations replaced by the calls $\VerifyMLE_\F(P, \hat q, P_\MLE(\hat q))$, $\VerifyMLE_\F(Q, \hat q, Q_\MLE(\hat q))$, and $\VerifyMLE_\F(Q, \hat q', Q_\MLE(\hat q'))$ of \Cref{fact:thaler}, and the two direct $\SumCheck$s at degrees $5$ and $1$. The result is an $\IPCP$ Verifier for $\MODELCONSIST$ with completeness and soundness as in \Cref{thm:explicit-iop}, with PCP length and query complexity unchanged, and with the following bounds, polynomial in the depth $D$ in place of the degree $\Delta$, over any prime field with $|\F| \ge 2^{O(\ell d + B + \log(1/\epsgap))}\cdot\poly(\ell,d,B,D,\log S)/\epssound$:
	\begin{itemize}
		\item \emph{Verifier runtime.} $\poly(\ell, d, B, \log(1/\epsgap), 1/\epssound)$, plus the three $\VerifyMLE$ calls, $O(|P| + |Q| + D(\ell d + \log S))$ field operations.
		\item \emph{Communication.} $\poly(\ell, d, B, \log(1/\epsgap), \log(1/\epssound))$, plus the three $\VerifyMLE$ transcripts, $O(D(\ell d + \log S))$ field elements.
	\end{itemize}
\end{theorem}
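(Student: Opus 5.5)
The plan is to re-run the proof of \Cref{thm:explicit-iop} almost verbatim and change only the places where the model values enter. There the Verifier computed $\hat P(\hat q),\hat Q(\hat q),\hat Q(\hat q')$ itself. Here it will receive alleged values $y_P, y_Q, y_Q'$ from the Prover, certify each by the corresponding $\VerifyMLE$ call of \Cref{fact:thaler}, and use them in \eqref{eq:full:close}.

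The first step is to swap the polynomials that the two direct $\SumCheck$s run on. The $\Incdag$-$\SumCheck$ now runs on $q\mapsto Q_\MLE(q)\bigl(2^B\Qmass - P_\MLE(q)\Cmass\bigr)^2$, and the $\norm{Q}_1$-$\SumCheck$ runs on $Q_\MLE$. Since $P_\MLE,Q_\MLE$ agree with $P,Q$ on $\Qcal$, the sums over the Boolean cube are still $\Incdag$ and $\norm{Q}_1$, so \Cref{eq:I-Dckt} and the wraparound discussion carry over unchanged.

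Next comes the degree bookkeeping that justifies the degrees $5$ and $1$. Each of $P_\MLE,Q_\MLE$ has individual degree $1$. The marginoids $\Cmasshat,\Qmasshat$ have individual degree at most $1$ in each coordinate of $q$, because each context block $(\hat s_k,\hat b_k)$ enters through a single factor of $\match^Z$ that is multilinear in it, and $\hat t$ enters only through $Z(j,\hat t)$. I need to check that the $\hat b_k$-factor is jointly linear in $\hat b_k$ and in $Z(j,\hat s_k)$ but contributes degree $1$ per variable. Granting this, the bracket has individual degree $2$ (from $P_\MLE\cdot\Cmass$), its square has degree $4$, and multiplying by $Q_\MLE$ gives $5$. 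The $\norm{Q}_1$ summand has degree $1$.

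Completeness follows because the honest Prover sends the true MLE values, and then \Cref{fact:thaler} accepts with probability $1$. Soundness adds three error events to the case analysis. If some alleged value differs from the true $P_\MLE(\hat q)$, $Q_\MLE(\hat q)$ or $Q_\MLE(\hat q')$, the corresponding $\VerifyMLE$ accepts with probability $O(D(\ell d+\log S)/|\F|)$. As in \Cref{thm:explicit-iop}, each bound holds for every fixed claimed value, so conditioning on the transcript prefix and union bounding is legitimate. Otherwise, the checks \eqref{eq:full:close} involve the true values, and the $\SumCheck$ soundness terms become $5\log|\Qcal|/|\F|$ and $\log|\Qcal|/|\F|$. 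These terms are of the form $\poly(\ell,d,B,D,\log S)/|\F|$, so the stated field size pushes them below $\epssound/2$ alongside the $\epssound/4+\epssound/4$ from the encoding and agreement events.

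The complexity claims follow by updating the table. The $\Delta\log|\Qcal|$ entries become $O(\log|\Qcal|)$. The three $\VerifyMLE$ calls contribute $O(S+D(\ell d+\log S))$ operations and $O(D(\ell d+\log S))$ field elements, using $n=\log|\Qcal|=O(\ell d)$ inputs. The oracle is untouched, so the PCP length and query complexity are unchanged. I expect the main obstacle to be the degree claim for the marginoids in $q$, specifically confirming that $\Cmasshat$ really is multilinear in the queroid coordinates. Everything else is substitution.
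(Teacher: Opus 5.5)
Your proposal is correct and follows the paper's proof essentially step for step. You swap in $P_{\MLE},Q_{\MLE}$, add three $\VerifyMLE$ error terms of $O(D(\ell d+\log S)/|\F|)$ to the union bound, and recount the $\SumCheck$ degrees as $\deg Q_{\MLE}+2(\deg P_{\MLE}+1)=5$ and $1$. The multilinearity of the marginoids in $q$ that you flag is exactly what the paper carries over ``as before'' from \Cref{thm:explicit-iop}, and your sketch settles it: each block $(\hat s_k,\hat b_k)$, and also $\hat t$, enters a single factor that is multilinear in that block once the oracle is replaced by the multilinear codeword $\widetilde Z$ in the soundness analysis.
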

\begin{proof}
	The substitution changes the protocol in three places; everywhere else the analysis of \Cref{thm:explicit-iop} applies verbatim.

	\emph{Model values.} The values $P_\MLE(\hat q), Q_\MLE(\hat q), Q_\MLE(\hat q')$ are now supplied by the Prover and certified by $\VerifyMLE$ rather than computed by the Verifier. By \Cref{fact:thaler}, with input length $n = \log|\Qcal| = \ell(d+1)+d$, each call is complete on the true value and, on a false value, accepts with probability $O(D(\log|\Qcal| + \log S)/|\F|)$. In the soundness union bound this replaces the zero error of direct evaluation with $O(D(\log|\Qcal| + \log S)/|\F|)$ from the three $\VerifyMLE$ calls, restoring the bound of the GKR-based protocol.

	\emph{$\SumCheck$ degrees.} With $P_\MLE, Q_\MLE$ multilinear the $\Incdag$-summand $Q_\MLE(q)(2^B\Qmass - P_\MLE(q)\Cmass)^2$ has individual $q$-degree $\deg Q_\MLE + 2(\deg P_\MLE + 1) = 5$ and the $\norm{Q}_1$-summand $Q_\MLE(q)$ degree $1$ (the marginoids being multilinear in $q$ as before). The two direct $\SumCheck$s thus contribute $6\log|\Qcal|/|\F|$ to the soundness bound and $O(\log|\Qcal|)$ field elements each to the communication, and the degree $\Delta$ disappears from every bound. The field size \eqref{eq:full:fieldsize} and the wraparound argument are unchanged (they never involved $\Delta$), so collecting the soundness terms gives the stated $|\F| \ge 2^{O(\ell d + B + \log(1/\epsgap))}\cdot\poly(\ell,d,B,D,\log S)/\epssound$, dominated as before by the wraparound term.

	\emph{Verifier work and communication.} The Verifier no longer evaluates $P, Q$; its circuit-dependent cost is the three $\VerifyMLE$ calls, each $O(S + D(\ell d + \log S))$ field operations and $O(D(\ell d + \log S))$ communication by \Cref{fact:thaler}, hence $O(|P| + |Q| + D(\ell d + \log S))$ and $O(D(\ell d + \log S))$ in total. The per-call $S$ term is the cost of evaluating the layered circuit once, as \Cref{fact:thaler} requires. A circuit that is not already layered is first layered: this preserves the depth $D$ and increases the size to at most $O(SD)$, so the $|P| + |Q|$ term grows to $(|P| + |Q|)D$, while $\log(SD) = O(\log S)$ (as $D \le S$) leaves the field size and communication unchanged.
\end{proof}
Since $D \le S$, every bound above is polynomial in the model size for circuits of any depth, removing the constant- or logarithmic-depth restriction under which \Cref{thm:explicit-iop} is polynomial. And by the argument of \Cref{cor:boolean-ipcp}, \Cref{thm:explicit-iop-pp} too holds with the proof oracle over the Boolean alphabet, with $\Delta$ replaced by $\poly(\ell,d,B,D,\log S)$ in the PCP length.

\begin{remark}[Verifier efficiency for uniform models]\label{rem:doubly-efficient}
	The only Verifier costs in \Cref{thm:explicit-iop-pp} that depend on the size of $(P,Q)$ are reading the input circuits and the three $\VerifyMLE$ calls. If the model is instead given \emph{succinctly}, as a logspace-uniform layered circuit family accessed through its description, then the uniform clause of \Cref{fact:thaler} brings each $\VerifyMLE$ call to $\poly(D, \ell d, \log S)$ field operations, and the Verifier runs in $\poly(\ell, d, B, D, \log S, \log(1/\epsgap), 1/\epssound)$ time overall---polylogarithmic in the circuit size. The honest Prover remains exponential-time (it writes the proof oracle), so this is Verifier-side efficiency only, in the spirit of doubly-efficient interactive proofs \cite{GoldwasserKR15}.
\end{remark}

\begin{remark}[Succinct arguments for $\MODELCONSIST$]\label{rem:succinct-argument}
	The delegated Verifier of \Cref{thm:explicit-iop-pp} runs in polynomial time, reads its oracle at polynomially many points, and exchanges polynomially many bits, so standard compilers turn the protocol into a \emph{succinct argument} for $\MODELCONSIST$ at no asymptotic cost. Two apply, each trading an assumption for a guarantee:
	\begin{itemize}
		\item \emph{Collision-resistant hashing} yields a public-coin interactive argument \citep{Kilian92}: the Prover commits the oracle by a single Merkle tree, and the interaction runs in the clear. The oracle is exponentially long, but the honest Prover is already exponential-time (it writes $\pi$), so the commitment costs nothing asymptotically, as in the universal arguments of \cite{BarakG02}.
		\item \emph{A random oracle} yields a non-interactive argument \citep{Micali94}: apply the Fiat--Shamir transform to the resulting public-coin interactive oracle proof \citep{Ben-SassonCS16}.
	\end{itemize}
	Two caveats attach. Negligible soundness is obtained by parallel repetition before the interactive compilation; repetition does not survive the Fiat--Shamir transform (a cheating prover grinds the hash round by round), so the non-interactive route instead requires round-by-round soundness, which our protocol is expected to acquire once its plain oracle reads are replaced by self-corrected ones (\Cref{fact:selfcorrect}); we do not prove this here.  The knowledge guarantee is only the weak, bit-by-bit one of \cite{BarakG02}, an exponentially long witness being too large to extract outright. Here ``succinct'' means $\poly(|x|,\lambda)$ communication and Verifier time (polynomial in the instance, not polylogarithmic), which, unlike the base \Cref{thm:explicit-iop}, holds for circuits of \emph{any} depth by \Cref{thm:explicit-iop-pp}. A user can thus certify a model's consistency against a $\poly(|x|,\lambda)$ certificate, at the price of computational rather than unconditional soundness.
\end{remark}

\section{Conclusions and open problems}\label{sec:conclusions}

We gave an explicit Interactive PCP for deciding whether a predictive model $(P,Q)$ is consistent up to an $\ell_2$ threshold, with a Verifier whose only access to the model is a few point evaluations of its circuits, together with a few queries to a proof oracle encoding a witnessing distribution (\Cref{thm:explicit-iop}).
In the process, we also demonstrated that $\EXPCONSIST$ problem, even tolerant to $(\ell_2)$ inconsistency, is in $\NP$. 
We close by discussing the high-level properties of our constructions, and the questions that these properties open.

\paragraph{A ``natural'' factorization.} Against an honest Prover, every object of \Cref{alg:full-protocol} carries a meaning: the oracle $(Z,A)$ is a Reed--Muller encoding of a (sparsified) witnessing distribution $\mu$, the \SumChecks express the inconsistency and the threshold $\norm{Q}_1$ as marginals of $\mu$, and $\VerifyEncoding$ pins the encoding's validity to fixed constants, the way a checksum does. The protocol thus factors into a model-specific, semantically probabilistic \emph{witness} core and a model-agnostic \emph{proving} overlay.

Following the pipeline of \Cref{sec:mip:witness}, the route from a trained model to one that proves its own consistency then partitions into stages of decreasing ``magic'': learning $(P,Q)$, finding a witnessing distribution, sparsifying it, and finally a purely classical proving overlay (the encoding and the \SumChecks) that could itself be replaced by a learned component, towards an end-to-end Self-Proving model in the spirit of \cite{AmitGPR25}. The two roles the IPCP requires, a stateless oracle and a stateful Prover, come ``for free'' from a single model that controls its context (\Cref{rem:spmip}). We leave a formal account of this factorization, and its consequences for learnability, to future work.

\paragraph{An exact Prover for an approximate statement.} The statement our protocol proves is, by design, only approximate: the Verifier certifies consistency up to the threshold $\tau$, promising acceptance only when the model is $(\tau-\epsgap)$-consistent. The honest Prover, however,  enjoys none of this slack. \Cref{alg:full-protocol} requires the Prover to compute the inconsistency with a witness \emph{exactly}, down to the smallest bit: $\Incdag$ (\Cref{eq:Incdag}) is a sum over all of $\Qcal$, an integer of $O(\ell d + B + \log(1/\epsgap))$ bits, and the Prover commits to it as $\vInc$ before the \SumChecks begin, each round of which then demands an exact partial sum of the same kind. A Prover that commits to anything else, say an estimate of $\Incdag$ from sampled queries, is caught by the final check \eqref{eq:full:close} exactly as a cheating Prover would be (\Cref{fact:sumcheck}); the tolerance enters only once, as the integer comparison \eqref{eq:full:threshold} of the exact committed value against the threshold. (The single-exponential field of \Cref{thm:explicit-iop} is the same exactness on the Verifier's side: $\F$ must carry $\Incdag$ at full precision, without wraparound.) So although we never ask whether the model is fully consistent, we do ask the Prover for its exact degree of inconsistency:
a Prover that estimates rather than sums, as a learned Prover would, appears to have no accepting strategy. 

One potential remedy is a \SumCheck in which approximation is native. The sum-check protocol for approximate computations of \cite{BitanDGIKT26} proves exactly such claims, $\sum_{x} g(x) \approx H$ with $g$ low-degree over the reals, its Verifier accepting Prover messages that are themselves only approximately consistent. Whether our protocol can be soundly instantiated over it, so that the honest Prover need only estimate its witness's inconsistency to within the gap, is a main question we leave open.

\paragraph{Square error and other error measures of violation.} 
Our inconsistency-tolerant framing of the $\EXPCONSIST$ problem measures violation by $\ell_2$ squared error, which has the benefits of being smooth (a nice property for gradient-based methods) and polynomial (which enables the traditional arithmetization techniques that we lean on in \cref{sec:succinct}). But how does this choice compare with other possible ways of measuring consistency?

Simply departing from $\ell_1$ residuals already takes the $\EXPCONSIST$ problem out of the classical linear programming framework, which is why we need to develop all of the machinery in \cref{sec:NP} to establish membership in $\NP$ despite conceptually-similar results for the fully-consistent case of $\tau=0$, where the violation measure does not matter (see related work on probabilistic consistency in \cref{sec:related}). 
This is a sturdier foundation that should more easily generalize to other error measures. 
Our exact gap-free result (\cref{thm:inNP}) is tailored to $\ell_2$, since we rely on the fact that stationary conditions form a linear system, but the approximate result (\cref{prop:near-linear-witnesses}) that we build on to solve $\MODELCONSIST$ is less sensitive this fact.  The \SumCheck protocol then accommodates the higher-degree summand of an $\ell_p$ measure directly. 

Relative entropy, or KL divergence \cite{kullback1951information} has uniquely nice properties as a divergence measure \cite{jiao2014curious,hobson1969theorem,leinster2019short}, earning it a dominant place in machine learning \cite{kullback1959information,joyce2025kullback}. The corresponding inconsistency measure \cite{RichardsonH21} has non-polynomial summand, however, and its exact numerical value is typically transcendental, placing its exact calculation (see our previous remark on the exact Prover) beyond the arithmetization route taken in this work. 
Calculating relative entropies approximately is straightforward though, so extending our work to this setting is gated by the requirement of an exact Prover. 

\paragraph{In search of more symmetric witnesses.}

Recall that the witnesses that underlie our solutions to $\EXPCONSIST$ and $\MODELCONSIST$ are joint distributions with sparse support---sparsity that is critical to the efficiency of our proof system. 
Sparse distributions are, however, unnatural in several respects, and it would be valuable to have proof systems whose witnesses need not be sparse. We now describe in more detail. 

In the case of $n$ variables and $m$ claims, a sparse witness $\mu \in \Delta (\bool^n)$ is supported on only $m+1$ of its $2^n$ possible outcomes;
as such, it has at most $\log_2(m+1)$ bits of entropy.
Yet among all distributions (maximally) consistent with given constraints, the one of maximum entropy is the standard choice of epistemic state \cite{jaynes1957information,friedman1971jaynes}---for, in Jaynes' interpretation, it assumes the least additional information about the outcome.
In our case, the support of the witness typically must be chosen arbitrarily to remain sparse, and it need not respect the symmetries of the instance, such as invariance under a permutation of the variables that maps a collection $\PP$ of probabilistic claims to itself.
The maximum-entropy witness is unique (as entropy is strictly concave on the convex set of witnesses), invariant under such permutations, and assumes no unnecessary dependencies between variables. 
For example, for the $m =n$ claims ``variable $i$ is a fair coin,'' the maximum-entropy witness is the uniform distribution, whereas $\tfrac12(\delta_z + \delta_{\bar z})$, with $\bar z$ the complement of $z$, is a witness for every $z \in \bool^n$ and is permutation-invariant only for $z \in \{0^n, 1^n\}$ (when $n \ge 3)$.
Furthermore, there exists no sparse witness that exhibits mutual independence between more than $\log_2(m+1)$ (non-constant) variables, since $k$ such variables have a joint marginal on all $2^k$ points of $\bool^k$. 
In the case of $\MODELCONSIST$, this phenomenon is subtle: the IPCP oracle is tailored to the context size $\ell$, and attains its manageable size by destroying independencies between collections of more than $\Omega(\ell)$ variables.

If we could replace the sparse witness with a succinctly represented \emph{non-sparse} distribution (\eg, a distribution with few nonzero Fourier coefficients rather than few support points, \cf\ \Cref{sec:related}), 
then the resulting witness (or corresponding IPCP) could be more closely related to the formation of the predictive model itself, and perhaps easier to obtain via the same training process.
Exploring the feasibility of non-sparse witnesses, and what is desired of them, is left for future work.

\paragraph{Zero-knowledge auditing.}
Having cast consistency as a problem about Interactive Proofs, we inherit that field's toolkit; zero knowledge in particular is one tool we expect to be able to import. We close with a speculative application, offered as an illustration of how this tool could be used. 
But what might one want to hide when verifying probabilistic consistency?
In both scenarios below, the secret is not just the witnessing distribution (which zero knowledge hides by default) but the probabilistic claims themselves.

In the setting of $\EXPCONSIST$, consider a scenario in which a model provider logs the queries its users pose and the probabilities with which they were answered, so that over any period it holds a Collection of Probabilistic Claims $\PP$, and an auditor wants assurance that the model has been answering consistently, \ie, that $\D(\PP) \le \tau$. The provider will not disclose $\PP$, which contains its users' data, nor a witness $\mu$, which reveals a great deal about the answers. Since $\EXPCONSIST \in \NP$ (\Cref{thm:inNP}), it has zero-knowledge proofs assuming one-way functions \cite{GoldreichMW91}, but applied directly these hide the wrong thing: zero knowledge conceals the witness, whereas the statement is common input, and here the statement $(\PP,\tau)$ contains sensitive data. Instead, one replaces $\PP$ with a cryptographic commitment to it. Namely, the provider commits to each query--answer tuple $(x,y,p)$ as it is served, say via a Merkle tree with root $c$. The auditor receives $c$, the alleged consistency tolerance $\tau$, and the dataset parameters $n$, $m$, $B$. The statement is that there exist a CPC $\PP$, openings of the leaves under $c$ to the tuples of $\PP$, and a certificate $\pi$ accepted by \Cref{alg:verifier} on input $(\PP,\tau,\pi)$. The same commitment $c$ binds the statement to the claims users were actually served: users who opt in keep the openings of their own leaves, so each can check that their query is in the audited tree, and a query the provider left out is detectable by its author. Finally, \cite{Kilian92,Micali94} may be used to bring the auditor's work down to poly-logarithmic in $m$, which matters at over two billion queries per day (see \cref{sec:intro}).

A similar application may be possible for $\MODELCONSIST$, with both the weights $(P,Q)$ and the proof oracle encoding the witness kept private. Our Verifier \Cref{alg:full-protocol} evaluates the circuits itself at two points, so concealing the weights calls for the delegated evaluations of \Cref{sec:delegation}, carried out against a commitment to the weights (the architecture may stay public), together with a zero-knowledge compilation of the rest of the protocol. The rest consists of \SumChecks and the tests of the Reed--$\mu$ller encoding, which would need to be made zero-knowledge e.g. via \cite {BCGV16,BCFGRS17,CFS17}. Binding the committed model to the answers users actually receive is the separate, and increasingly studied, problem of verifiable inference \cite{GhodsiGG17,ZKMLsurvey}.
	
\section*{AI Use Disclosure}
We used AI tools for research assistance, including brainstorming, identifying relevant references, suggesting potentially useful theorems or high-level organizational skeletons, and double-checking our work. All material in the paper was written and verified by the authors.

\section*{Acknowledgements}
We are grateful to Avishay Tal for important comments and contributions to earlier versions of this work. OP thanks Thomas Bourgeat, Alessandro Chiesa, L\'ena\"ic Chizat, Nicolas Flammarion, Tal Herman, Viktor Kun\v{c}ak, Nicholas Spooner, Justin Thaler, Thomas Vidick and Guy Weissenberg for helpful comments.

\bibliography{bib}
\bibliographystyle{alpha}

\end{document}